\numberwithin{equation}{section}
\newtheorem{thm}{Theorem}[section]
\newtheorem{prop}[thm]{Proposition}
\newtheorem{lem}[thm]{Lemma}
\theoremstyle{definition}
\newtheorem{defn}[thm]{Definition}
\begin{document}

\title[Deformed affine Hecke algebra and integrable stochastic system]
{A deformation of affine Hecke algebra and integrable stochastic particle system}
\author{Yoshihiro Takeyama}
\address{Division of Mathematics, 
Faculty of Pure and Applied Sciences, 
University of Tsukuba, Tsukuba, Ibaraki 305-8571, Japan}
\email{takeyama@math.tsukuba.ac.jp}

\begin{abstract}
We introduce a deformation of the affine Hecke algebra of type $GL$ 
which describes the commutation relations of the divided difference operators 
found by Lascoux and Sch\"utzenberger and the multiplication operators.  
Making use of its representation we construct an integrable stochastic particle system. 
It is a generalization of the $q$-Boson system due to Sasamoto and Wadati. 
We also construct eigenfunctions of its generator using the propagation operator.  
As a result we get the same eigenfunctions for the $(q, \mu, \nu)$-Boson process 
obtained by Povolotsky. 
\end{abstract}
\maketitle

\setcounter{section}{0}
\setcounter{equation}{0}


\section{Introduction}

In this article we introduce a deformation of the affine Hecke algebra of type $GL$ 
and construct an integrable stochastic particle system making use of its representation. 

In a previous paper \cite{T} we constructed a discrete analogue of 
the non-ideal Bose gas with delta-potential interactions on a circle, 
which we call the periodic delta Bose gas for short.  
While a discretization of the periodic delta Bose gas and its generalization 
was studied by van Diejen \cite{vD1, vD2} from the viewpoint of the theory of 
Macdonald's spherical functions, 
our discretization is motivated by the desire to understand an algebraic structure of 
integrable stochastic models. 

The discrete model constructed in \cite{T} contains two parameters.  
Specializing the parameters suitably and taking the limit as the system size goes to infinity, 
its Hamiltonian $H$ becomes the time evolution operator for the joint moment 
of the integrable stochastic system called the O'Connell-Yor semi-discrete directed polymer \cite{OY}.  
We can construct eigenfunctions of $H$ using the propagation operator $G$, 
which sends an eigenfunction of (a half of) the discrete Laplacian to that of $H$. 
To define $G$ we generalize the construction by by van Diejen and Emsiz \cite{DE} 
of the integral-reflection operators due to Yang \cite{Y} and Gutkin \cite{G}. 
Then the discrete integral-reflection operators determine a representation of 
the affine Hecke algebra of type $GL$. 

Recently the author found that the Hamiltonian $H$ is related to another stochastic system more closely. 
The operator $H$ acts on the space of functions on the orthogonal lattice $\mathbb{Z}^{k}$, where 
$k$ is the number of particles, 
and leaves the space of symmetric functions invariant. 
Identify the space of symmetric functions with the space of functions on the fundamental chamber 
$\mathbb{W}^{k}=\{(m_{1}, \ldots , m_{k})\in\mathbb{Z}^{k}\, | \, m_{1}\ge \cdots \ge m_{k}\}$. 
We assign to each element $(m_{1}, \ldots , m_{k})$ of $\mathbb{W}^{k}$ 
the configuration of $k$ bosonic particles on $\mathbb{Z}$ such that 
the particles are on the sites  $m_{1}, \ldots , m_{k}$. 
Then, by specializing the two parameters of $H$ in another way and adding a constant, 
we obtain the transition rate matrix of the $q$-Boson system 
introduced by Sasamoto and Wadati \cite{SW}. 

In this paper we generalize the above construction of an integrable stochastic particle system. 
Our ingredient is a deformation of the affine Hecke algebra of type $GL$. 
In \cite{LS} Lascoux and Sch\"utzenberger characterize the difference operators 
acting on polynomials which satisfy the braid relations. 
The operators contain four parameters and at a spacial point 
they turn into the Demazure-Lustzig operators 
which give a polynomial representation of the affine Hecke algebra. 
Our deformed algebra arises from the commutation relations between 
the difference operators due to Lascoux and Sch\"utzenberger and 
the multiplication operators. 
By definition it has a polynomial representation. 
Making use of it we can construct the discrete integral-reflection operators with more parameters 
and define the propagation operator $G$ as before. 
One of the main results of this article is construction of the discrete Hamiltonian $H$ 
satisfying the commutation relation 
$HG=G\Delta$, where $\Delta$ is the discrete Laplacian, in this generalized setting. 

Our operator $H$ also leaves the space of symmetric functions invariant, 
and by specializing the four parameters suitably we obtain 
a transition rate matrix of a continuous time Markov chain on $\mathbb{W}^{k}$. 
The resulting model is described as follows. 
It is a stochastic particle system on the one-dimensional lattice $\mathbb{Z}$ 
controlled by two parameters $s$ and $q$. 
The particles can occupy the same site simultaneously. 
Some particles may move from site $i$ to $i-1$ 
independently for each $i \in \mathbb{Z}$. 
The rate at which $r$ particles move to the left from a cluster with $c$ particles 
is given by 
\begin{align*}
\frac{s^{r-1}}{[r]}\prod_{p=0}^{r-1}\frac{[c-p]}{1+s[c-1-p]} \qquad (c \ge r \ge 1),   
\end{align*}
where $[n]:=(1-q^{n})/(1-q)$ is the $q$-integer. 
In the case of $s=0$, the rate is equal to zero unless $r=1$ and hence 
only one particle may move with the rate proportional to $1-q^{c}$. 
Thus we recover the $q$-Boson system. 

Using the propagation operator $G$ 
we can construct symmetric eigenfunctions of the operator $H$ 
by means of the Bethe ansatz method, which we call the Bethe wave functions.   
After the specialization of the parameters we obtain the eigenfunctions of the transition rate matrix. 
They are parameterized by a tuple $z=(z_{1}, \ldots , z_{k})$ of distinct constants 
and are given by 
\begin{align*}
\sum_{\sigma \in \mathfrak{S}_{k}}\prod_{1\le i<j \le k}
\frac{qz_{\sigma(i)}-z_{\sigma(j)}}{z_{\sigma(i)}-z_{\sigma(j)}}
\prod_{i=1}^{k}\left(\frac{1-\nu z_{\sigma(i)}}{1-z_{\sigma(j)}}\right)^{\!m_{i}} 
\qquad 
((m_{1}, \ldots , m_{k}) \in \mathbb{W}^{k}),    
\end{align*}
where $\nu:=s/(1-q+s)$. 
Here we note that they are equal to the eigenfunctions of the generator of 
the $(q, \mu, \nu)$-Boson process introduced by Povolotsky \cite{P} (see also \cite{C}).  

The paper is organized as follows. 
In Section \ref{sec:2} we define the deformation of the affine Hecke algebra 
and introduce its representations which are the origin of the propagation operator. 
In Section \ref{sec:hamiltonian} we define the discrete Hamiltonian $H$ and the propagation operator $G$,  
and prove the commutation relation $HG=G\Delta$. 
Using the operator $G$ we construct the Bethe wave functions. 
In Section \ref{sec:4} we describe the construction of the stochastic particle system 
arising from the Hamiltonian $H$. 
We prove some polynomial identities which we use to rewrite the operator $H$ in Appendix \ref{sec:app}. 


\section{A Deformation of Affine Hecke Algebra and Its Representation}\label{sec:2}

\subsection{Preliminaries}

Throughout this paper we fix an integer $k \ge 2$. 
Let $V:=\oplus_{i=1}^{k}\mathbb{R}v_{i}$ be the $k$-dimensional Euclidean space  
with an orthonormal basis $\{v_{i}\}_{i=1}^{k}$,  
and $V^{*}$ the linear dual of $V$. 
We let $\{\epsilon_{i}\}_{i=1}^{k}$ denote the dual basis of $V^{*}$ 
corresponding to $\{v_{i}\}_{i=1}^{k}$. 
Set $\alpha_{ij}:=\epsilon_{i}-\epsilon_{j}$ for $i, j=1, \ldots , k$. 
The subset $R:=\{\alpha_{ij} \, | \, i\not=j\}$ of $V^{*}$ forms 
the root system of type $A_{k-1}$ with the simple roots $a_{i}:=\alpha_{i, i+1} \, (1\le i<k)$. 
Denote by $R^{\pm}$ the set of the associated positive and negative roots. 
For $v \in V$, set 
\begin{align*}
I(v):=\{a \in R^{+}\, | \, a(v)<0\}.  
\end{align*}

The Weyl group $W$ of type $A_{k-1}$ is generated by 
the orthogonal reflections $s_{i} \, : \, V \to V \, (1\le i<k)$ defined by 
$s_{i}(v):=v-a_{i}(v)a^{\vee}_{i}$, where $a^{\vee}_{i}:=v_{i}-v_{i+1}$ is the simple coroot. 
Denote the length of $w \in W$ by $\ell(w)$. 

For any $v \in V$, the orbit $Wv$ intersects the closure of the fundamental chamber 
\begin{align*}
\overline{C_{+}}:=\{v \in V \, | \, a_{i}(v)\ge 0 \,\, (i=1, \ldots , k-1)\}  
\end{align*}
at one point. 
Take a shortest element $w \in W$ such that $wv \in \overline{C_{+}}$. 
Then $I(v)=R^{+}\cap w^{-1}R^{-}$ and hence the shortest element is uniquely determined. 
Denote it by $w_{v}$. 

We will make use of the following proposition. 

\begin{prop}\label{prop:key}
Suppose that $v, v' \in V$ satisfy $I(v) \subset I(v')$. 
Then $w_{v'}=w_{w_{v}v'}w_{v}$ and 
$\ell(w_{v'})=\ell(w_{w_{v}v'})+\ell(w_{v})$. 
\end{prop}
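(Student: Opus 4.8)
The plan is to translate the statement into the standard language of inversion sets. For $w \in W$ put $N(w):=R^{+}\cap w^{-1}R^{-}=\{a\in R^{+}\mid w(a)\in R^{-}\}$, so that $\ell(w)=|N(w)|$ and, by the paragraph preceding the proposition, $N(w_{v})=I(v)$; recall also the standard fact that $w\mapsto N(w)$ is injective on $W$. Write $h:=w_{v}$, $v'':=h v'=w_{v}v'$ and $g:=w_{v''}=w_{w_{v}v'}$, so that the two claims become $gh=w_{v'}$ and $\ell(gh)=\ell(g)+\ell(h)$. The tool I would use is the cocycle formula for products: if $h^{-1}N(g)\subset R^{+}$, then $N(gh)=N(h)\sqcup h^{-1}N(g)$, and in particular $\ell(gh)=\ell(h)+|h^{-1}N(g)|=\ell(g)+\ell(h)$. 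Hence everything reduces to computing $h^{-1}N(g)$ and checking that it equals $I(v')\setminus I(v)$: granting this, $h^{-1}N(g)\subset R^{+}$ gives the length additivity, while $N(gh)=I(v)\sqcup(I(v')\setminus I(v))=I(v')=N(w_{v'})$ gives $gh=w_{v'}$ by injectivity of $N$.

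So the core computation is that of $h^{-1}N(g)$, where $N(g)=I(v'')=I(w_{v}v')$. Using the contragredient action $a(wu)=(w^{-1}a)(u)$ $(a\in V^{*},\,w\in W,\,u\in V)$, one has $a(w_{v}v')=(w_{v}^{-1}a)(v')$, so that $b\in h^{-1}N(g)$ precisely when $b\in w_{v}^{-1}R^{+}$ and $b(v')<0$. I would then split $w_{v}^{-1}R^{+}$ by sign: since $w_{v}^{-1}R^{+}\cap R^{+}=R^{+}\setminus N(w_{v})=R^{+}\setminus I(v)$ and $w_{v}^{-1}R^{+}\cap R^{-}=-N(w_{v})=-I(v)$, the set $h^{-1}N(g)$ splits into the positive roots $b\in R^{+}\setminus I(v)$ with $b(v')<0$, which are exactly the elements of $I(v')\setminus I(v)$, together with the roots $b=-c$ where $c\in I(v)$ and $b(v')<0$.

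The hypothesis $I(v)\subset I(v')$ enters exactly here, and this sign argument is the one delicate point: for a root $b=-c$ with $c\in I(v)$, the condition $b(v')<0$ says $c(v')>0$, whereas $c\in I(v)\subset I(v')$ forces $c(v')<0$ by the (strict) defining inequality of $I(v')$. Hence no such $b$ occurs, the second piece is empty, and $h^{-1}N(g)=I(v')\setminus I(v)\subset R^{+}$ as required. The cocycle formula then delivers both $\ell(gh)=\ell(g)+\ell(h)$ and $N(gh)=I(v')$, and injectivity of $N$ (or, equivalently, the uniqueness of the shortest element sending $v'$ into $\overline{C_{+}}$, noting $gh\,v'=g v''\in\overline{C_{+}}$) yields $w_{v'}=gh=w_{w_{v}v'}w_{v}$. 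I expect the length formula $\ell(w)=|N(w)|$, the identity $N(w_{v})=I(v)$, and the cocycle formula to be routine ingredients to cite or check in a line, the genuinely new input being the sign cancellation above forced by $I(v)\subset I(v')$.
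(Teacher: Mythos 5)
Your proof is correct, but there is nothing in the paper to compare it against: Proposition~\ref{prop:key} is stated there without proof, as a known fact, the only supplied ingredient being the characterization $I(v)=R^{+}\cap w_{v}^{-1}R^{-}$, i.e.\ $N(w_{v})=I(v)$, in the paragraph preceding it. Your argument --- reduce both claims to computing $h^{-1}N(g)$ for $h=w_{v}$, $g=w_{w_{v}v'}$, split $w_{v}^{-1}R^{+}$ into $R^{+}\setminus I(v)$ and $-I(v)$, and note that the hypothesis $I(v)\subset I(v')$ empties the second piece (membership would force $c(v')>0$ for some $c\in I(v)$, while $c\in I(v')$ forces $c(v')<0$) --- is the standard inversion-set proof of this statement, and the facts you quote ($\ell(w)=|N(w)|$, injectivity of $w\mapsto N(w)$, and the cocycle formula $N(gh)=N(h)\sqcup h^{-1}N(g)$ when $h^{-1}N(g)\subset R^{+}$) are all standard and correctly applied; in particular the conclusion $N(gh)=I(v)\sqcup\bigl(I(v')\setminus I(v)\bigr)=I(v')=N(w_{v'})$ does yield both assertions at once. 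One small point: the parenthetical alternative ending you offer (uniqueness of the shortest element sending $v'$ into $\overline{C_{+}}$) is not quite immediate as written, since it requires the extra observation that every $w$ with $wv'\in\overline{C_{+}}$ satisfies $N(w)\supseteq I(v')$ and hence $\ell(w)\ge|I(v')|=\ell(gh)$; your primary route via injectivity of $N$ bypasses this and is complete as it stands.
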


\subsection{A deformation of affine Hecke algebra}

Let us define a deformation of the affine Hecke algebra of type $GL_{k}$. 

\begin{defn}
Let $\alpha, \beta, \gamma, \delta$ be complex constants and set 
\begin{align*}
q:=1+\beta\gamma-\alpha\delta.   
\end{align*}
We define the algebra $\mathcal{A}_{k}$ to be the unital associative $\mathbb{C}$-algebra with 
the generators $X_{i}^{\pm 1} \, (1\le i\le k)$ and $T_{i} \, (1\le i<k)$ 
satisfying the following relations: 
\begin{align*}
& 
(T_{i}-1)(T_{i}+q)=0 \quad (1 \le i<k), \qquad  
T_{i}T_{i+1}T_{i}=T_{i}T_{i+1}T_{i} \quad (1 \le i \le k-2), \\ 
& 
T_{i}T_{j}=T_{j}T_{i} \quad (|i-j|>1), \quad 
X_{i}X_{j}=X_{j}X_{i} \quad (i, j=1, \ldots , k), \\ 
& 
X_{i+1}T_{i}-T_{i}X_{i}=T_{i}X_{i+1}-X_{i}T_{i}=(\alpha+\beta X_{i})(\gamma+\delta X_{i+1}) \quad 
(1\le i<k), \\ 
& 
X_{i}T_{j}=T_{j}X_{i}\quad (i\not=j, j+1). 
\end{align*}
\end{defn}

When $\beta=\gamma=0$, the algebra $\mathcal{A}_{k}$ is isomorphic to 
the affine Hecke algebra of type $GL_{k}$.  
We will use the property that 
any symmetric polynomial in $X_{1}, \ldots , X_{k}$ commutes with 
$T_{i} \, (1\le i<k)$ in $\mathcal{A}_{k}$.  

Set 
\begin{align*}
L:=\bigoplus_{i=1}^{k}\mathbb{Z}v_{i}  
\end{align*}
and denote by $F(L)$ the vector space of $\mathbb{C}$-valued functions on $L$. 
The Weyl group acts on $F(L)$ by $(wf)(x):=f(w^{-1}x)$. 
Set 
\begin{align*}
F(L)^{W}:=\{f \in F(L) \, | \, wf=f \, \hbox{for all} \, w \in W\}.  
\end{align*}
 
Now we introduce a right action of $\mathcal{A}_{k}$ 
on the group algebra $\mathbb{C}[L]$ due to Lascoux and Sch\"utzenberger \cite{LS}. 
In the following we identify $\mathbb{C}[L]$ with the Laurent polynomial ring 
$\mathbb{C}[e^{\pm v_{1}}, \ldots , e^{\pm v_{k}}]$. 

\begin{prop}\label{prop:right-action}\cite{LS}
Define the $\mathbb{C}$-linear operators $\check{X}_{i} \, (1\le i \le k)$ and 
$\check{T}_{i} \, (1\le i<k)$ acting on $\mathbb{C}[L]$ from the right by 
\begin{align*}
P\check{X}_{i}:=e^{-v_{i}}P, \qquad 
P\check{T}_{i}:=P.s_{i}+\frac{(\alpha e^{v_{i}}+\beta)(\gamma e^{v_{i+1}}+\delta)}{e^{v_{i}}-e^{v_{i+1}}}
\left(P-P.s_{i}\right),  
\end{align*}
where $.$ stands for the right action of the Weyl group defined by 
$e^{x}.w:=e^{w^{-1}(x)} \, (x \in L, w \in W)$. 
Then the assignment $X_{i} \mapsto \check{X}_{i}$ and $T_{i} \mapsto \check{T}_{i}$ extends uniquely to 
a right representation of the algebra $\mathcal{A}_{k}$ on $\mathbb{C}[L]$. 
\end{prop}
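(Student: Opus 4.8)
The plan is to check that the operators $\check{X}_i,\check{T}_i$ satisfy every defining relation of $\mathcal{A}_k$, read as right operators composed in left-to-right order; since $\mathcal{A}_k$ is given by generators and relations this is exactly what is needed, and the presentation then yields a unique extension to a right action. One first records that $\check{X}_i$ is multiplication by $e^{-v_i}$, hence invertible (its inverse being multiplication by $e^{v_i}$), so the generators $X_i^{\pm 1}$ are accounted for. The relations $X_iX_j=X_jX_i$ are immediate because the $\check{X}_i$ are commuting multiplication operators, and $X_iT_j=T_jX_i$ for $i\neq j,j+1$ holds because $\check{T}_j$ involves only $e^{v_j},e^{v_{j+1}}$ and the reflection $s_j$, all of which fix $e^{-v_i}$. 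It is convenient to abbreviate $\theta_i:=\frac{(\alpha e^{v_i}+\beta)(\gamma e^{v_{i+1}}+\delta)}{e^{v_i}-e^{v_{i+1}}}$, so that $P\check{T}_i=\theta_i P+(1-\theta_i)P.s_i$.

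The computation that drives everything is the symmetrized coefficient
\[
\theta_i+\theta_i.s_i=\frac{(\alpha e^{v_i}+\beta)(\gamma e^{v_{i+1}}+\delta)-(\alpha e^{v_{i+1}}+\beta)(\gamma e^{v_i}+\delta)}{e^{v_i}-e^{v_{i+1}}}=\alpha\delta-\beta\gamma=1-q,
\]
in which the terms $\alpha\gamma e^{v_i+v_{i+1}}$ and $\beta\delta$ cancel. Substituting $\theta_i.s_i=(1-q)-\theta_i$ into the expansion $(P\check{T}_i)\check{T}_i=\theta_i(P\check{T}_i)+(1-\theta_i)(P\check{T}_i).s_i$ collapses it to $qP+(1-q)P\check{T}_i$, which is the quadratic relation $(\check{T}_i-1)(\check{T}_i+q)=0$; this is where the parameter $q$ enters. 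For the cross relations I would compute the two sides separately: applying $\check{X}_{i+1}$ then $\check{T}_i$ gives $\theta_i e^{-v_{i+1}}P+(1-\theta_i)e^{-v_i}P.s_i$, while applying $\check{T}_i$ then $\check{X}_i$ gives $\theta_i e^{-v_i}P+(1-\theta_i)e^{-v_i}P.s_i$, so the difference is $\theta_i(e^{-v_{i+1}}-e^{-v_i})P$. Using $e^{-v_{i+1}}-e^{-v_i}=(e^{v_i}-e^{v_{i+1}})e^{-v_i-v_{i+1}}$, the denominator of $\theta_i$ cancels, leaving $(\alpha+\beta e^{-v_i})(\gamma+\delta e^{-v_{i+1}})P$, which is precisely the action of $(\alpha+\beta X_i)(\gamma+\delta X_{i+1})$; the companion identity $T_iX_{i+1}-X_iT_i=(\alpha+\beta X_i)(\gamma+\delta X_{i+1})$ follows identically.

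The main obstacle is the braid relation $T_iT_{i+1}T_i=T_{i+1}T_iT_{i+1}$. To attack it I would restrict to the rank-two parabolic subgroup $\langle s_i,s_{i+1}\rangle\cong\mathfrak{S}_{3}$ acting on the three variables $e^{v_i},e^{v_{i+1}},e^{v_{i+2}}$, every other variable being fixed throughout. Expanding each triple product in the form $\sum_{w}c_w\,(P.w)$ over the six elements $w\in\langle s_i,s_{i+1}\rangle$, with coefficients $c_w$ that are rational functions built from $\theta_i,\theta_{i+1}$ and their Weyl transforms, reduces the braid relation to verifying six coefficient identities in three variables. The essential point, and the reason the four free parameters survive, is that the numerator of $\theta_i$ factors as the product $(\alpha e^{v_i}+\beta)(\gamma e^{v_{i+1}}+\delta)$ of a function of $e^{v_i}$ with a function of $e^{v_{i+1}}$; this is exactly the Lascoux--Sch\"utzenberger condition that makes the matchings close up, and the symmetrization identity above recurs in simplifying the coefficients. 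Finally $T_iT_j=T_jT_i$ for $|i-j|>1$ is immediate since $\check{T}_i$ and $\check{T}_j$ then act on disjoint pairs of variables with commuting reflections. Having verified all the defining relations, the presentation of $\mathcal{A}_k$ delivers the claimed unique right representation.
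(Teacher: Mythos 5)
The paper itself gives no proof of this proposition---it is quoted from Lascoux and Sch\"utzenberger \cite{LS}---so your argument has to stand on its own. The routine parts are done correctly: the symmetrization identity $\theta_i+\theta_i.s_i=\alpha\delta-\beta\gamma=1-q$ and the collapse of $(P\check{T}_i)\check{T}_i$ to $qP+(1-q)P\check{T}_i$ do give the quadratic relation; your bookkeeping of the right-action order (for $X_{i+1}T_i$, first $\check{X}_{i+1}$, then $\check{T}_i$) is correct, and the difference $\theta_i(e^{-v_{i+1}}-e^{-v_i})P=(\alpha+\beta e^{-v_i})(\gamma+\delta e^{-v_{i+1}})P$ indeed yields both cross relations; the commutation relations are immediate for the reasons you give.

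The genuine gap is the braid relation, which is the only non-routine assertion in the proposition and is precisely the content of the Lascoux--Sch\"utzenberger theorem. You reduce it, correctly, to matching six coefficients $c_w$ over the parabolic subgroup $\langle s_i,s_{i+1}\rangle\cong\mathfrak{S}_3$, but you never verify those identities: the claim that the factorization of the numerator of $\theta_i$ ``is exactly the Lascoux--Sch\"utzenberger condition that makes the matchings close up'' is not an argument---it invokes the very theorem being proved. Note also that the factorized form of $\theta_i$ is already forced on you by the cross relations (they determine $\theta_i$ uniquely once $\check{X}_i$ is fixed), so the braid relation is an independent constraint whose verification---a finite but genuinely messy computation with rational functions of $e^{v_i}, e^{v_{i+1}}, e^{v_{i+2}}$, or a structural substitute for it---must actually be carried out. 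A second, smaller omission: you never check that $\check{T}_i$ maps $\mathbb{C}[L]$ into itself. Each of $\theta_iP$ and $(1-\theta_i)P.s_i$ is separately only a rational function; one needs the standard remark that $P-P.s_i$ is antisymmetric under $s_i$, hence divisible by $e^{v_i}-e^{v_{i+1}}$, so that $P\check{T}_i=P.s_i+\theta_i(P-P.s_i)$ is again a Laurent polynomial. Without this the proposed maps are not even endomorphisms of $\mathbb{C}[L]$, and the question of which relations they satisfy does not arise.
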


Consider the non-degenerate bilinear pairing $\mathbb{C}[L]\times F(L) \to \mathbb{C}$ defined by 
$(e^{x}, f):=f(x)$ for $x \in L$ and $f \in F(L)$. 
We define the $\mathbb{C}$-linear operators $\widehat{X}_{i} \, (1\le i \le k)$ and 
$\widehat{T}_{i} \, (1\le i<k)$ acting on $F(L)$ by 
\begin{align*}
(P\check{X_{i}}, f)=(P, \widehat{X}_{i}f), \quad 
(P\check{T_{i}}, f)=(P, \widehat{T}_{i}f). 
\end{align*}
{}From Proposition \ref{prop:right-action} they give a left action of $\mathcal{A}_{k}$ on $F(L)$:  

\begin{prop}
The assignment $X_{i} \mapsto \widehat{X}_{i}$ and $T_{i} \mapsto \widehat{T}_{i}$ extends uniquely to 
a left representation of the algebra $\mathcal{A}_{k}$ on $F(L)$. 
The action is explicitly given as follows: 
\begin{align*}
(\widehat{X}_{i}f)(x)=f(x-v_{i}).  
\end{align*}  
If $a_{i}(x)>0$ then 
\begin{align*}
(\widehat{T}_{i}f)(x)&= 
\alpha\delta f(x)+(1+\beta\gamma)f(s_{i}x)+
\alpha\gamma\sum_{j=1}^{a_{i}(x)}f(s_{i}x+ja^{\vee}_{i}+v_{i+1}) \\ 
&+(\alpha\delta+\beta\gamma)\sum_{j=1}^{a_{i}(x)-1}f(s_{i}x+ja^{\vee}_{i})+
\beta\delta\sum_{j=0}^{a_{i}(x)-1}f(s_{i}x+ja^{\vee}_{i}-v_{i+1}). 
\end{align*}
When $a_{i}(x)=0$, we have $(\widehat{T}_{i}f)(x)=f(x)$. 
If $a_{i}(x)<0$ then 
\begin{align*}
(\widehat{T}_{i}f)(x)&=-\beta\gamma f(x)+(1-\alpha\delta)f(s_{i}x)-
\alpha\gamma\sum_{j=0}^{-a_{i}(x)-1}f(s_{i}x-ja^{\vee}_{i}+v_{i+1}) \\ 
&-(\alpha\delta+\beta\gamma)\sum_{j=1}^{-a_{i}(x)-1}f(s_{i}x-ja^{\vee}_{i})-
\beta\delta\sum_{j=1}^{-a_{i}(x)}f(s_{i}x-ja^{\vee}_{i}-v_{i+1}). 
\end{align*}
\end{prop}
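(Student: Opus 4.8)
The plan is to establish the final Proposition by transporting the right action of $\mathcal{A}_{k}$ on $\mathbb{C}[L]$ (Proposition \ref{prop:right-action}) to a left action on $F(L)$ via the pairing $(e^{x}, f) = f(x)$. Since the pairing is non-degenerate, the defining relations $(P\check{X}_{i}, f) = (P, \widehat{X}_{i}f)$ and $(P\check{T}_{i}, f) = (P, \widehat{T}_{i}f)$ uniquely determine operators $\widehat{X}_{i}, \widehat{T}_{i}$ on $F(L)$, and the anti-homomorphism property of transposition automatically converts the right representation into a left one: for generators $A, B$ of $\mathcal{A}_{k}$ we have $(P\check{A}\check{B}, f) = (P\check{A}, \widehat{B}f) = (P, \widehat{A}\,\widehat{B}f)$, so $\widehat{AB} = \widehat{A}\,\widehat{B}$ follows formally. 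Thus the representation statement itself requires essentially no work beyond Proposition \ref{prop:right-action}; the substance lies entirely in computing the explicit formulas.

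For $\widehat{X}_{i}$ the computation is immediate. Since $e^{x}\check{X}_{i} = e^{-v_{i}}e^{x} = e^{x - v_{i}}$, we get $(\widehat{X}_{i}f)(x) = (e^{x}, \widehat{X}_{i}f) = (e^{x}\check{X}_{i}, f) = (e^{x-v_{i}}, f) = f(x - v_{i})$, giving the stated shift formula directly.

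The main labor is deriving the piecewise formulas for $\widehat{T}_{i}$. First I would expand $e^{x}\check{T}_{i}$ explicitly. The reflection term contributes $e^{s_{i}x}$, while the rational prefactor $(\alpha e^{v_{i}} + \beta)(\gamma e^{v_{i+1}} + \delta)/(e^{v_{i}} - e^{v_{i+1}})$ multiplying $(e^{x} - e^{s_{i}x})$ must be expanded as a finite geometric-type sum. Writing $a_{i}(x) = x_{i} - x_{i+1}$ (the integer giving the $v_{i}$–$v_{i+1}$ separation), the key identity is that $1/(e^{v_{i}} - e^{v_{i+1}})$ times the difference $e^{x} - e^{s_{i}x}$ telescopes into a finite sum of monomials $e^{s_{i}x + j a_{i}^{\vee}}$, with the range and sign of the summation index $j$ depending on the sign of $a_{i}(x)$. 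Multiplying out the numerator $(\alpha e^{v_{i}} + \beta)(\gamma e^{v_{i+1}} + \delta) = \alpha\gamma e^{v_{i} + v_{i+1}} + \alpha\delta e^{v_{i}} + \beta\gamma e^{v_{i+1}} + \beta\delta$ then distributes these four terms across the geometric sum, producing exactly the four families of summands with coefficients $\alpha\gamma$, $\alpha\delta$, $\beta\gamma$, $\beta\delta$ that appear in the statement; the shifts $+v_{i+1}$, $0$, and $-v_{i+1}$ arise from how each monomial in the numerator recenters the index. The diagonal contributions $\alpha\delta f(x) + (1 + \beta\gamma)f(s_{i}x)$ come from recombining the $j=0$ and boundary terms together with the reflection term, using $q = 1 + \beta\gamma - \alpha\delta$.

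The hardest part will be bookkeeping of the summation ranges and the three distinct index shifts in each of the two sign cases, particularly reconciling the boundary terms (where two of the four monomial families overlap at the endpoints of their ranges) so that the $f(x)$ and $f(s_{i}x)$ coefficients emerge correctly. I would handle this by pairing the right action $\check{T}_{i}$ against $f$ one monomial family at a time, carefully recording for each how the exponent lands, then transposing once at the end rather than tracking the pairing throughout; the case $a_{i}(x) = 0$ degenerates because every sum is empty, leaving only the reflection term with $s_{i}x = x$, which gives $(\widehat{T}_{i}f)(x) = f(x)$ as claimed. The negative case is obtained by an entirely parallel computation, where reversing the sign of $a_{i}(x)$ flips the overall sign of the rational prefactor and shifts each summation range, accounting for the uniform sign changes and the altered index bounds in the displayed formula.
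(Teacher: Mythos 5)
Your proposal is correct and follows the same route the paper intends: the paper states this proposition without a written proof, as an immediate consequence of Proposition \ref{prop:right-action} via the non-degenerate pairing $(e^{x},f)=f(x)$, which is exactly your transposition argument, and your plan for the explicit formulas (expand $e^{x}\check{T}_{i}$, telescope $(e^{x}-e^{s_{i}x})/(e^{v_{i}}-e^{v_{i+1}})$ into a geometric sum, distribute the four monomials $\alpha\gamma e^{v_{i}+v_{i+1}}$, $\alpha\delta e^{v_{i}}$, $\beta\gamma e^{v_{i+1}}$, $\beta\delta$, and recombine boundary terms into $\alpha\delta f(x)+(1+\beta\gamma)f(s_{i}x)$) reproduces the stated formulas in all three sign cases. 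No gap to report.
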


We will often use the fact that $(\widehat{T}_{i}f)(x)=0$ for any $f \in F(L)$ if $a_{i}(x)=0$.


\section{Discrete Hamiltonian and Propagation Operator}\label{sec:hamiltonian}

\subsection{Discrete Hamiltonian}

Hereafter we assume that 
\begin{align*}
1+\beta\gamma[n]\not=0 
\end{align*}
for any positive integer $n$, where 
\begin{align*}
[n]:=\frac{1-q^{n}}{1-q}
\end{align*}
is the $q$-integer. 

We define the functions 
$d^{\pm}_{i} \, (1\le i \le k)$ and 
$\delta_{j_{1}, j_{2}, \ldots , j_{r}}\, (1\le j_{1}<j_{2}<\cdots <j_{r}\le k)$ on $L$ by 
\begin{align}
d^{+}_{i}(x)&:=\#\{p \, | \, i<p\le k, \, \alpha_{ip}(x)=0 \}, 
\label{eq:def-d2} 
\\ 
d^{-}_{i}(x)&:=\#\{p \, | \, 1\le p<i, \, \alpha_{pi}(x)=0 \}. 
\nonumber 
\end{align}
and 
\begin{align*}
\delta_{j_{1}, j_{2}, \ldots , j_{r}}(x):=\left\{ 
\begin{array}{ll}
1 & (\epsilon_{j_{1}}(x)=\cdots =\epsilon_{j_{r}}(x)), \\
0 & (\hbox{otherwise}).
\end{array}
\right. 
\end{align*}
If $r=1$ we set $\delta_{j}\equiv 1$ by definition. 

Now we define the discrete Hamiltonian $H$ by 
\begin{align*}
H:&=-\alpha\gamma\sum_{j=1}^{k}\frac{[d_{j}^{+}]}{1+\beta\gamma[d_{j}^{+}]} \\ 
&+\sum_{r=1}^{k}(-\beta\delta)^{r-1}[r-1]! \, q^{-r(r-1)/2}\!\!\!\!\!\!
\sum_{1\le j_{1}<\cdots <j_{r}\le k}
\frac{q^{\sum_{p=1}^{r}d_{j_{p}}^{-}}\delta_{j_{1}, \ldots , j_{r}}}
{\prod_{p=0}^{r-1}(1+\beta\gamma[d_{j_{1}}^{+}+d_{j_{1}}^{-}-p])}
\prod_{p=1}^{r}\widehat{X}_{j_{p}},   
\end{align*}
where $[n]!:=\prod_{a=1}^{n}[a]\, (n>0)$ and $[0]!:=1$.  
Note that the index $j_{1}$ in the factor $1+\beta\gamma[d_{j_{1}}^{+}+d_{j_{1}}^{-}-p]$ 
may be replaced with any $j_{p}$ because 
$d_{i}^{+}(x)+d_{i}^{-}(x)=d_{j}^{+}(x)+d_{j}^{-}(x)$ if $\alpha_{ij}(x)=0$. 

Using the equality 
\begin{align*}
\sum_{j=1}^{k}[d_{j}^{+}]=\sum_{j=1}^{k}q^{d_{j}^{-}}d_{j}^{+},   
\end{align*}
we see that when $\beta=0$ it holds that 
\begin{align*}
H=\sum_{j=1}^{k}q^{d_{j}^{-}}(\widehat{X}_{j}-\alpha\gamma d_{j}^{+}).    
\end{align*}
This operator is introduced in \cite{T} as a discrete analogue of the Hamiltonian of the delta Bose gas 
under periodic boundary condition\footnote{
The parameters $\alpha$ and $\beta$ in \cite{T} are equal to 
$\alpha\gamma$ and $q=1-\alpha\delta$, respectively.}. 

For convenience we write down the action of $H$ more explicitly. 
For a non-empty subset $J=\{j_{1}, \ldots , j_{m}\} \, (j_{1}<\cdots <j_{m})$ of 
$\{1, 2, \ldots , k\}$, 
we define the operator $H_{J}$ acting on $F(L)$ by 
\begin{align}\label{eq:def-HJ}
H_{J}&:=-\alpha\gamma\sum_{d=1}^{m-1}\frac{[d]}{1+\beta\gamma[d]} \\ 
&\quad {}+\sum_{r=1}^{m}
\frac{(-\beta\delta)^{r-1}[r-1]!\,q^{-r(r-1)/2}}{\prod_{p=0}^{r-1}(1+\beta\gamma[m-1-p])}
e_{r}(\widehat{X}_{j_{1}}, q\widehat{X}_{j_{2}}, \ldots , q^{m-1}\widehat{X}_{j_{m}}), 
\nonumber 
\end{align}
where $e_{r}$ is the elementary symmetric polynomial of degree $r$. 
Then the value $(Hf)(x)$ is written as follows. 

\begin{lem}\label{lem:H-decompose}
For $x \in L$, 
decompose the set $\{1, 2, \ldots , k\}$ into a direct sum $\sqcup_{n=1}^{N}J_{n}^{x}$ so that 
$i$ and $j$ belong to the same subset $J_{n}^{x}$ if and only if $\alpha_{ij}(x)=0$. 
Then for any $f \in F(L)$ it holds that 
\begin{align}\label{eq:H-to-HJ}
(Hf)(x)=\sum_{n=1}^{N}(H_{J_{n}^{x}}f)(x).  
\end{align}
\end{lem}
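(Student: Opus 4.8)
The plan is to evaluate both sides of \eqref{eq:H-to-HJ} at a fixed point $x \in L$ and to check that they agree after the indices are grouped according to the block decomposition $\{1,\ldots,k\}=\sqcup_{n=1}^{N}J_{n}^{x}$. The essential observation is that every coefficient function occurring in $H$, namely the $d_{i}^{\pm}$ and the $\delta_{j_{1},\ldots,j_{r}}$, depends on $x$ only through this block structure, so that at each point $H$ decouples into contributions indexed by the blocks. First I would record the values of the $d_{i}^{\pm}$ on a single block: if $J_{n}^{x}=\{\ell_{1}<\cdots<\ell_{m}\}$ and $i=\ell_{a}$, then $d_{i}^{-}(x)=a-1$ and $d_{i}^{+}(x)=m-a$, whence $d_{i}^{+}(x)+d_{i}^{-}(x)=m-1$ is constant along the block (this is exactly the constancy noted in the remark after the definition of $H$).

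For the first line of $H$ these identities already give the claim. As $a$ runs from $1$ to $m$ the quantity $d_{\ell_{a}}^{+}(x)$ runs over $m-1,m-2,\ldots,0$, so $\sum_{j\in J_{n}^{x}}[d_{j}^{+}]/(1+\beta\gamma[d_{j}^{+}])$ equals $\sum_{d=1}^{m-1}[d]/(1+\beta\gamma[d])$ (using $[0]=0$), which is the first line of $H_{J_{n}^{x}}$. Summing over $n$ reproduces the first line of $H$ as a multiplication operator.

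Next I would treat the operator part. The factor $\delta_{j_{1},\ldots,j_{r}}(x)$ vanishes unless $\epsilon_{j_{1}}(x)=\cdots=\epsilon_{j_{r}}(x)$, that is, unless $\{j_{1},\ldots,j_{r}\}$ lies in a single block $J_{n}^{x}$; hence the double sum in $H$ reorganizes as a sum over blocks of sums over $r$-element subsets of each block. Fixing $J_{n}^{x}=\{\ell_{1}<\cdots<\ell_{m}\}$ and writing an $r$-subset as $\{\ell_{a_{1}}<\cdots<\ell_{a_{r}}\}$ with $1\le a_{1}<\cdots<a_{r}\le m$, I would substitute $d_{j_{p}}^{-}(x)=a_{p}-1$ and $d_{j_{1}}^{+}(x)+d_{j_{1}}^{-}(x)=m-1$. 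The denominator then becomes the block-constant $\prod_{p=0}^{r-1}(1+\beta\gamma[m-1-p])$ and the weight becomes $q^{\sum_{p}(a_{p}-1)}$. Since the $\widehat{X}_{i}$ commute, the inner sum $\sum_{a_{1}<\cdots<a_{r}}q^{\sum_{p}(a_{p}-1)}\prod_{p}\widehat{X}_{\ell_{a_{p}}}$ is by definition $e_{r}(\widehat{X}_{\ell_{1}},q\widehat{X}_{\ell_{2}},\ldots,q^{m-1}\widehat{X}_{\ell_{m}})$, which matches the $r$-th summand of $H_{J_{n}^{x}}$ exactly.

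The argument is thus essentially bookkeeping, and the only step requiring care is the identification of the $x$-dependent weight $q^{\sum_{p}d_{j_{p}}^{-}}$ together with the denominator of $H$ with the $q$-graded arguments of the polynomial $e_{r}$ in $H_{J_{n}^{x}}$. Once the two block identities $d_{\ell_{a}}^{-}(x)=a-1$ and $d_{j}^{+}(x)+d_{j}^{-}(x)=m-1$ are in place, summing the matched contributions over all blocks yields \eqref{eq:H-to-HJ}.
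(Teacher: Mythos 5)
Your proof is correct and follows exactly the direct verification the paper has in mind: the paper states this lemma without proof, treating it as immediate bookkeeping from the definitions of $H$ and $H_{J}$. Your block identities $d_{\ell_{a}}^{-}(x)=a-1$, $d_{\ell_{a}}^{+}(x)=m-a$, the restriction of the subset sum by $\delta_{j_{1},\ldots,j_{r}}(x)$ to subsets lying in a single block, and the identification of the weighted sum $\sum_{a_{1}<\cdots<a_{r}}q^{\sum_{p}(a_{p}-1)}\prod_{p}\widehat{X}_{\ell_{a_{p}}}$ with $e_{r}(\widehat{X}_{\ell_{1}},q\widehat{X}_{\ell_{2}},\ldots,q^{m-1}\widehat{X}_{\ell_{m}})$ supply precisely the details the paper omits.
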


{}From the expression \eqref{eq:H-to-HJ} we see that 

\begin{prop}
$H\left(F(L)^{W}\right) \subset F(L)^{W}$. 
\end{prop}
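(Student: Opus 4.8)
The plan is to show that for a symmetric $f$ the value $(Hf)(x)$ depends only on the multiset of coordinates $\{\epsilon_1(x), \ldots, \epsilon_k(x)\}$, since a function with this property is automatically $W$-invariant. As $W$ is generated by the simple reflections $s_i$, and $s_i$ merely interchanges the $i$-th and $(i+1)$-th coordinates, it suffices to check $(Hf)(s_i x)=(Hf)(x)$ for every $i$ and $x$. When $\epsilon_i(x)=\epsilon_{i+1}(x)$ we have $s_i x=x$ and there is nothing to prove, so the real content lies in the case $\epsilon_i(x)\neq\epsilon_{i+1}(x)$. I would work throughout from the decomposition \eqref{eq:H-to-HJ} of Lemma \ref{lem:H-decompose}, since that is exactly the expression the statement invokes.

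First I would analyze a single summand $(H_{J}f)(x)$, where $J=\{j_1<\cdots<j_m\}$ is one of the blocks $J_n^x$, so that all coordinates $\epsilon_{j_1}(x),\ldots,\epsilon_{j_m}(x)$ share a common value $c$. Expanding the elementary symmetric polynomial in \eqref{eq:def-HJ}, each monomial acts by $(\widehat{X}_{j_{p_1}}\cdots\widehat{X}_{j_{p_r}}f)(x)=f(x-v_{j_{p_1}}-\cdots-v_{j_{p_r}})$, which lowers $r$ of the coordinates equal to $c$ by one. Because $f$ is symmetric, this value depends only on the resulting multiset, namely the multiset of $x$ with $r$ copies of $c$ replaced by $c-1$, and is therefore the \emph{same} for every choice of indices $p_1<\cdots<p_r$ inside $J$. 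Denoting this common value by $\phi_r$, the argument of the polynomial collapses and $e_r(\widehat{X}_{j_1},q\widehat{X}_{j_2},\ldots,q^{m-1}\widehat{X}_{j_m})f$ evaluated at $x$ equals $\phi_r\, e_r(1,q,\ldots,q^{m-1})$.

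Consequently $(H_J f)(x)$ is a sum of terms each depending only on $m=|J|$, the common value $c$, and the multiset of coordinates of $x$ through $\phi_1,\ldots,\phi_m$, since the numerical prefactors in \eqref{eq:def-HJ} depend only on $m$ and $r$; in particular it does not depend on the actual positions $j_1,\ldots,j_m$. Summing over the blocks in \eqref{eq:H-to-HJ} and reindexing by the distinct values occurring in $x$, I obtain $(Hf)(x)=\Psi(\{\epsilon_1(x),\ldots,\epsilon_k(x)\})$ for a function $\Psi$ of the multiset alone; as $s_i x$ has the same multiset of coordinates as $x$, this gives $(Hf)(s_i x)=(Hf)(x)$. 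The main obstacle is the first step, verifying that the position-dependence genuinely drops out: this rests on the interplay between the symmetry of $f$, which makes each monomial contribution $\phi_r$ independent of the chosen indices, and the equality of the coordinates within a block, which guarantees that lowering any $r$ of them produces the same multiset. Once this is established, the observation that the collection of pairs (block size, common value) is preserved under $s_i$ is routine.
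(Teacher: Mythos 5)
Your proof is correct, and it starts from the same decomposition \eqref{eq:H-to-HJ} of Lemma \ref{lem:H-decompose} as the paper's, but the mechanism of the comparison is different. The paper never evaluates the elementary symmetric polynomials: for $a_{i}(x)\not=0$ it sets $\tau=(i,i+1)$, observes that the blocks transform as $J_{n}^{s_{i}x}=\tau(J_{n}^{x})$ (with $\{i,i+1\}\not\subset\tau(J_{n}^{x})$, so that $\tau$ preserves the ordering of the indices), and matches the contributions of $J_{n}^{x}$ at $x$ and of $\tau(J_{n}^{x})$ at $s_{i}x$ pairwise, using only the identity
\begin{align*}
\left(e_{r}(\widehat{X}_{\tau(j_{1})}, q\widehat{X}_{\tau(j_{2})}, \ldots , q^{m-1}\widehat{X}_{\tau(j_{m})})f\right)(s_{i}x)
=\left(e_{r}(\widehat{X}_{j_{1}}, q\widehat{X}_{j_{2}}, \ldots , q^{m-1}\widehat{X}_{j_{m}})f\right)(x),
\end{align*}
which follows from $s_{i}\bigl(x-\sum_{a}v_{j_{p_{a}}}\bigr)=s_{i}x-\sum_{a}v_{\tau(j_{p_{a}})}$ and the $W$-invariance of $f$; the equality of coordinates inside a block enters only through the block bijection. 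You instead use that equality in an essential way, to show that every monomial of $e_{r}$ contributes the same value $\phi_{r}$ and hence to collapse $e_{r}(\widehat{X}_{j_{1}}, q\widehat{X}_{j_{2}}, \ldots , q^{m-1}\widehat{X}_{j_{m}})f$ at $x$ to $\phi_{r}\,e_{r}(1, q, \ldots , q^{m-1})$, concluding that $(Hf)(x)$ is a function of the coordinate multiset alone. That is a stronger statement than invariance, and your computation in fact reproduces the cluster-coordinate formula \eqref{eq:Htilde-decompose}, which the paper states later without detailed derivation and uses in Section \ref{sec:4} to read off the stochastic particle system; the paper's matching argument is leaner, but it yields only the invariance and not this explicit restricted form.
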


\begin{proof}
Suppose that $f \in F(L)^{W}$. 
It suffices to show that $(Hf)(s_{i}x)=Hf(x)$ for any $x \in L$ and $1\le i<k$.  
If $a_{i}(x)=0$ it is trivial. 
Let us consider the case of $a_{i}(x)\not=0$.  
Denote the transposition $(i, i+1) \in \mathfrak{S}_{k}$ by $\tau$. 
Consider the decomposition $\{1, 2, \ldots , k\}=\sqcup_{n=1}^{N}J_{n}^{s_{i}x}$ 
given in Lemma \ref{lem:H-decompose} with $x$ replaced by $s_{i}x$. 
Then it holds that $J_{n}^{s_{i}x}=\tau(J_{n}^{x})$. 
Note that $\{i, i+1\}\not\subset \tau(J_{n}^{x})$ for any $n$ because $a_{i}(x)\not=0$. 
For $1\le j_{1}<\cdots <j_{m}\le k$ satisfying $\{i, i+1\}\not\subset\{j_{1}, \ldots , j_{m}\}$ 
and $0 \le r \le m$, 
it holds that 
\begin{align*}
\left(
e_{r}(\widehat{X}_{\tau(j_{1})}, q\widehat{X}_{\tau(j_{2})}, \ldots , q^{m-1}\widehat{X}_{\tau(j_{m})})f
\right)(s_{i}x)=\left(
e_{r}(\widehat{X}_{j_{1}}, q\widehat{X}_{j_{2}}, \ldots , q^{m-1}\widehat{X}_{j_{m}})f
\right)(x)
\end{align*}
because $f \in F(L)^{W}$. 
Therefore $(H_{\tau(J_{n}^{x})}f)(s_{i}x)=(H_{J_{n}^{x}}f)(x)$. 
{}From Lemma \ref{lem:H-decompose} we find that $(Hf)(s_{i}x)=(Hf)(x)$. 
\end{proof}

For later use we rewrite the operator $H_{J}$ using the two equalities below. 
See Appendix \ref{sec:app} for the proof. 

\begin{lem}\label{lem:H-rewrite-1}
Let $m$ be a positive integer and $z_{1}, \ldots , z_{m}$ commutative indeterminates. 
Then the following equality holds. 
\begin{align*}
& 
\sum_{r=1}^{m}\frac{(-\beta\delta)^{r-1}[r-1]!\,q^{-r(r-1)/2}}
{\prod_{p=0}^{r-1}(1+\beta\gamma[m-1-p])}\,
e_{r}(z_{1}, qz_{2}, \ldots , q^{m-1}z_{m}) \\  
&=\frac{1}{\beta}\sum_{r=1}^{m}\frac{(-\delta)^{r-1}[r-1]!}
{\prod_{p=1}^{r}(1+\beta\gamma[p-1])}
\sum_{1\le b_{1}<\cdots <b_{r}\le m}q^{\sum_{p=1}^{r}(b_{p}-m)}
\left\{\prod_{p=1}^{r}(\alpha+\beta q^{p-1}z_{b_{p}})-\alpha^{r}\right\}. 
\end{align*}
\end{lem}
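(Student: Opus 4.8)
The plan is to prove the identity by comparing, on both sides, the coefficient of an arbitrary squarefree monomial $z_{c_1}z_{c_2}\cdots z_{c_t}$ with $1\le c_1<\cdots<c_t\le m$; these are the only monomials that occur, since each $z_i$ enters at most linearly on either side. Writing $C:=c_1+\cdots+c_t$ and using $e_{r}(z_1,qz_2,\ldots,q^{m-1}z_m)=\sum_{b_1<\cdots<b_r}\prod_{p}q^{b_p-1}z_{b_p}$, the $r$-th summand of the left-hand side is homogeneous of degree $r$, so the monomial $z_{c_1}\cdots z_{c_t}$ receives a contribution only from $r=t$, and its coefficient is immediately
$$\frac{(-\beta\delta)^{t-1}[t-1]!\,q^{-t(t-1)/2}}{\prod_{p=0}^{t-1}(1+\beta\gamma[m-1-p])}\,q^{C-t}.$$
The whole task thus reduces to evaluating the coefficient of the same monomial on the right-hand side and checking that it equals this expression.

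To extract the right-hand coefficient I would first expand $\prod_{p=1}^r(\alpha+\beta q^{p-1}z_{b_p})-\alpha^r=\sum_{\emptyset\ne S\subseteq\{1,\ldots,r\}}\alpha^{r-|S|}\beta^{|S|}q^{\sum_{p\in S}(p-1)}\prod_{p\in S}z_{b_p}$, so that a fixed monomial $z_{c_1}\cdots z_{c_t}$ arises from every $r\ge t$, from every increasing sequence $b_1<\cdots<b_r$ containing $c_1,\ldots,c_t$ as a subsequence, and from the size-$t$ position set $S$ recording where the $c_i$ sit. I would then organize the sum over the remaining $r-t$ ``free'' indices according to the $t+1$ gaps they fall into relative to $c_1<\cdots<c_t$, whose sizes are $c_1-1,\,c_2-c_1-1,\,\ldots,\,c_t-c_{t-1}-1,\,m-c_t$ and which sum to $m-t$. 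Within each gap the weighted count of the free indices is a $q$-power times a Gaussian binomial coefficient, via $\sum_{a_1<\cdots<a_g}q^{a_1+\cdots+a_g}=q^{g(g+1)/2}\binom{n}{g}_q$.

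The heart of the argument is then two $q$-binomial summations. First, summing the product of the per-gap Gaussian binomials over all distributions of the $r-t$ free indices among the gaps should collapse, by the $q$-Chu--Vandermonde identity, into a single coefficient $\binom{m-t}{r-t}_q$ with a clean accumulated $q$-power; crucially this removes all dependence on the individual gap sizes and leaves a weight depending only on $C,m,t$ and $r$, consistent with the fact that the left-hand target depends on the $c_i$ only through $C$. Second, the resulting sum over $r$ from $t$ to $m$ of the coefficients $B_r\alpha^{r-t}\binom{m-t}{r-t}_q$ (times the accumulated $q$-powers) must be evaluated and shown to reproduce the left-hand coefficient displayed above.

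I expect the main obstacle to be this last step, since it is where the two genuinely different denominators meet: the left-hand side carries the ``top'' product $\prod_{p=0}^{t-1}(1+\beta\gamma[m-1-p])$, whereas the right-hand coefficients carry the ``bottom'' product $\prod_{p=1}^r(1+\beta\gamma[p-1])$. Reconciling them requires using the Gaussian binomial $\binom{m-t}{r-t}_q$ to bridge the two windows of $q$-integers, together with the defining relation $q=1+\beta\gamma-\alpha\delta$ to convert the powers $\alpha^{r-t}$ and $\delta^{r-1}$ into the missing factors — precisely the cancellation already visible in the $m=2$ case, where $(1+\beta\gamma)-\alpha\delta$ collapses to $q$. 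Carrying this bookkeeping through in general is the one delicate $q$-series computation; the expansion, the gap decomposition, and the re-indexing that precede it are routine.
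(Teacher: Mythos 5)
Your reduction is sound as far as it goes: both sides are multilinear in the $z_{i}$, your formula for the left-hand coefficient of $z_{c_{1}}\cdots z_{c_{t}}$ is correct, and the collapse you predict does occur --- in fact more simply than you expect. For a free index $u\notin\{c_{1},\ldots,c_{t}\}$ the total $q$-exponent it carries is $u-m+\#\{i\,|\,c_{i}>u\}$, and $u\mapsto u+\#\{i\,|\,c_{i}>u\}$ is a bijection from $\{1,\ldots,m\}\setminus\{c_{1},\ldots,c_{t}\}$ onto $\{t+1,\ldots,m\}$; hence the sum over the $r-t$ free indices equals $e_{r-t}(q^{t+1-m},\ldots,q^{0})$ outright, no $q$-Chu--Vandermonde needed, and the right-hand coefficient indeed depends on the $c_{i}$ only through $C$.

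The genuine gap is the step you defer with ``must be evaluated and shown'': after your reduction the lemma is equivalent to the one-variable identity
\begin{align*}
\sum_{r=t}^{m}\frac{(q-1-\beta\gamma)^{r-t}\,[r-1]!\,q^{-(r-t)m}}{\prod_{p=1}^{r}(1+\beta\gamma[p-1])}\,
e_{r-t}(q^{t+1},\ldots,q^{m})
=q^{t(m-t)}\,\frac{[t-1]!}{\prod_{p=0}^{t-1}(1+\beta\gamma[m-1-p])},
\end{align*}
where $(-\alpha\delta)^{r-t}$ has been rewritten using $q=1+\beta\gamma-\alpha\delta$. This is not bookkeeping; it is the entire content of the lemma, and it does not follow from the manipulations you list. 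It is precisely the identity \eqref{eq:Ims-3} that the paper isolates (the sum $J_{m,s}(x,y)$ at $x=1$, $y=\beta\gamma$, $s=t$), and proving it requires a real idea: the paper derives the recurrence $J_{m,s}(x,y)=q^{s}J_{m-1,s}(x+y,qy)$ --- note the parameter shift $(x,y)\mapsto(x+y,qy)$, which is exactly what bridges the window $\prod_{p}(1+\beta\gamma[p-1])$ on the left with the window $\prod_{p}(1+\beta\gamma[m-1-p])$ on the right --- and then inducts on $m$. (Alternatively one can recognize the sum as a terminating ${}_{2}\phi_{1}$ and invoke a $q$-Chu--Vandermonde evaluation, but that too is an argument you would have to supply, not a routine re-indexing.) Until this identity is proved, your proposal is an accurate reduction of the lemma to its hard core rather than a proof; and once it is proved, your route essentially coincides with the paper's, whose factorization \eqref{eq:Ims-1} plays the role of your collapse and whose \eqref{eq:Ims-3} is exactly the evaluation you are missing.
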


\begin{lem}\label{lem:H-rewrite-2}
Let $m$ be a positive integer. 
Then the following equality holds.
\begin{align*}
\frac{1}{\beta}\sum_{r=1}^{m}\frac{(-\delta)^{r-1}[r-1]!}{\prod_{p=1}^{r}(1+\beta\gamma[p-1])}
\,\,\alpha^{r}\!\!\!\sum_{1\le b_{1}<\cdots <b_{r}\le m}\!\!\!q^{\sum_{p=1}^{r}(b_{p}-m)}=
\frac{\alpha}{\beta}\sum_{d=0}^{m-1}\frac{1}{1+\beta\gamma[d]}
\end{align*}  
\end{lem}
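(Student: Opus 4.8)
Throughout write $w:=\alpha\delta$ and abbreviate the inner sum as $A_r(m):=\sum_{1\le b_1<\cdots<b_r\le m}q^{\sum_{p=1}^r(b_p-m)}$. The substitution $a_p:=m-b_{r+1-p}$ identifies this with $\sum_{0\le a_1<\cdots<a_r\le m-1}q^{-(a_1+\cdots+a_r)}=e_r(1,q^{-1},\dots,q^{-(m-1)})$, the $r$-th elementary symmetric polynomial in the geometric progression $1,q^{-1},\dots,q^{-(m-1)}$. After cancelling the common factor $\alpha/\beta$, the asserted identity reads $S(m)=\sum_{d=0}^{m-1}(1+\beta\gamma[d])^{-1}$, where $S(m):=\sum_{r=1}^m c_r A_r(m)$ and $c_r:=(-w)^{r-1}[r-1]!\big/\prod_{p=1}^r(1+\beta\gamma[p-1])$.

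The plan is to induct on $m$. Splitting $e_r$ according to whether the node $q^{-(m-1)}$ is used gives the recursion $A_r(m)=A_r(m-1)+q^{-(m-1)}A_{r-1}(m-1)$; since $A_m(m-1)=0$, summing against $c_r$ yields $S(m)=S(m-1)+q^{-(m-1)}T(m-1)$, where $T(n):=\sum_{s=0}^n c_{s+1}A_s(n)$. Thus the whole statement telescopes (with base case $S(1)=1=(1+\beta\gamma[0])^{-1}$) once I establish the auxiliary identity
\[
T(n)=\frac{q^n}{1+\beta\gamma[n]}.
\]

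To prove this I would first use the defining relation $q=1+\beta\gamma-\alpha\delta$, which gives $1+\beta\gamma[p-1]=(w-\beta\gamma q^{p-1})/(1-q)$ and hence puts $c_{s+1}$ and the target into product form. The claim then becomes the rational-function identity in $w$ (now free of the parameter relation)
\[
\sum_{s=0}^{n}\frac{(-w)^{s}\bigl(\prod_{a=1}^{s}(1-q^{a})\bigr)A_{s}(n)}{\prod_{p=0}^{s}(w-\beta\gamma q^{p})}=\frac{q^{n}}{w-\beta\gamma q^{n}}.
\]
Clearing the denominator $\prod_{p=0}^{n}(w-\beta\gamma q^{p})$ turns both sides into polynomials of degree $\le n$ in $w$ (consistent with both being $O(1/w)$ at infinity), so it suffices to check equality at the $n+1$ nodes $w=\beta\gamma q^{j}$, $0\le j\le n$ (distinct for generic $q$, which is enough since everything is rational in the parameters).

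The heart of the argument is this evaluation at the nodes, and it is the step I expect to be the main obstacle. At $w=\beta\gamma q^{j}$ only the terms with $s\ge j$ survive; inserting the Gaussian-binomial evaluation $A_s(n)=q^{s(s+1-2n)/2}\binom{n}{s}_q$ and collecting the powers of $q$, the substitution $s=j+l$ recasts the surviving sum, up to a nonzero prefactor, as $\sum_{l=0}^{n-j}(-1)^l\binom{n-j}{l}_q q^{\binom{l}{2}}(q^{1+j-n})^l=\prod_{i=0}^{n-j-1}(1-q^{1+j-n+i})$ by the terminating $q$-binomial theorem. For $j<n$ this product contains the factor $1-q^{0}=0$ and so vanishes, matching the right-hand side (which also vanishes at these nodes); for $j=n$ the product is empty, only the top term $s=n$ remains, and a direct comparison of the powers of $q$ confirms the two sides agree. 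The delicate points are thus the $q$-exponent bookkeeping in the collection step and the recognition that the resulting alternating sum is exactly a terminating $q$-binomial expansion whose product acquires a zero precisely when $j<n$.
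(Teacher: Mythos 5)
Your proof is correct, but it takes a genuinely different route from the paper's, so it is worth comparing the two. Both arguments reduce the inner sum to elementary symmetric polynomials of a geometric progression and both induct on $m$ using the same Pascal-type splitting (your $A_r(m)=A_r(m-1)+q^{-(m-1)}A_{r-1}(m-1)$ is, up to normalization, the paper's identity $e_{r}(q,\dots,q^{m})=e_{r}(q,\dots,q^{m-1})+q^{m}e_{r-1}(q,\dots,q^{m-1})$). The difference is how the induction is closed. The paper introduces the two-variable function
\begin{align*}
K_{m}(x, y):=\sum_{r=1}^{m}\frac{[r-1]!\,((q-1)x-y)^{r-1}q^{-mr}}{\prod_{a=1}^{r}(x+[a-1]y)}\,
e_{r}(q, q^{2}, \ldots , q^{m}),
\end{align*}
notes that the left-hand side of the lemma equals $\alpha K_{m}(1,\beta\gamma)/\beta$, and proves the parameter-shifting recursion $K_{m}(x,y)=1/x+K_{m-1}(x+y,qy)$; since $(x+y)+[a]\,qy=x+[a+1]y$, the induction telescopes instantly into $\sum_{a=0}^{m-1}1/(x+[a]y)$ with no auxiliary evaluation needed. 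You instead keep the parameters fixed, so the recursion spins off the extra sum $T(n)$, whose closed form $T(n)=q^{n}/(1+\beta\gamma[n])$ you then must establish separately; your interpolation argument for it (clearing denominators, the degree-$\le n$ count in the free variable $w$, evaluation at the nodes $w=\beta\gamma q^{j}$, and the terminating $q$-binomial theorem) is sound --- I checked the $q$-exponent bookkeeping you flagged as the main obstacle: with $A_{s}(n)=q^{s(s+1-2n)/2}\binom{n}{s}_{q}$ the $l$-dependence at node $j$ does collapse to $(-1)^{l}\binom{n-j}{l}_{q}q^{\binom{l}{2}}(q^{1+j-n})^{l}$ times a nonzero constant, and the $j=n$ endpoint matches. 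Note your argument actually proves something stronger than required, namely the rational identity for $w$ unconstrained by the relation $q=1+\beta\gamma-\alpha\delta$. What the paper's trick buys is brevity and reuse: the same shift $(x,y)\mapsto(x+y,qy)$ is exactly how it proves the companion identity \eqref{eq:Ims-3} in Lemma \ref{lem:Ims}, which feeds Lemma \ref{lem:H-rewrite-1}. What yours buys is a self-contained, robust $q$-series method that does not require guessing the two-variable deformation $K_{m}(x,y)$.
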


Lemma \ref{lem:H-rewrite-1} and Lemma \ref{lem:H-rewrite-2} imply the following formula. 

\begin{prop}\label{prop:H-rewrite}
Let $J=\{j_{1}, \ldots , j_{m}\} \, (j_{1}<\cdots <j_{m})$ 
be a non-empty subset of $\{1, 2, \ldots , k\}$.  
Then it holds that 
\begin{align*}
H_{J}&=-\frac{\alpha}{\beta}m \\ 
&+\frac{1}{\beta}\sum_{r=1}^{m}\frac{(-\delta)^{r-1}[r-1]!}
{\prod_{p=1}^{r}(1+\beta\gamma[p-1])}
\sum_{1\le b_{1}<\cdots <b_{r}\le m}q^{\sum_{p=1}^{r}(b_{p}-m)}
\prod_{p=1}^{r}(\alpha+\beta q^{p-1}\widehat{X}_{j_{b_{p}}}). 
\end{align*}
\end{prop}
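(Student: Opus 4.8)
The plan is to derive the formula for $H_J$ in Proposition \ref{prop:H-rewrite} by combining the two preceding lemmas with the defining expression \eqref{eq:def-HJ}. The key observation is that the second summand in the definition of $H_J$,
\begin{align*}
\sum_{r=1}^{m}
\frac{(-\beta\delta)^{r-1}[r-1]!\,q^{-r(r-1)/2}}{\prod_{p=0}^{r-1}(1+\beta\gamma[m-1-p])}
e_{r}(\widehat{X}_{j_{1}}, q\widehat{X}_{j_{2}}, \ldots , q^{m-1}\widehat{X}_{j_{m}}),
\end{align*}
is precisely the left-hand side of Lemma \ref{lem:H-rewrite-1} with the indeterminates $z_{p}$ specialized to the commuting operators $\widehat{X}_{j_{p}}$. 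First I would invoke Lemma \ref{lem:H-rewrite-1} to rewrite this summand as
\begin{align*}
\frac{1}{\beta}\sum_{r=1}^{m}\frac{(-\delta)^{r-1}[r-1]!}
{\prod_{p=1}^{r}(1+\beta\gamma[p-1])}
\sum_{1\le b_{1}<\cdots <b_{r}\le m}q^{\sum_{p=1}^{r}(b_{p}-m)}
\left\{\prod_{p=1}^{r}(\alpha+\beta q^{p-1}\widehat{X}_{j_{b_{p}}})-\alpha^{r}\right\}.
\end{align*}
Since the operators $\widehat{X}_{i}$ commute with one another (they are the shift operators $(\widehat{X}_{i}f)(x)=f(x-v_{i})$), the substitution $z_{p}\mapsto \widehat{X}_{j_{p}}$ into the polynomial identity of Lemma \ref{lem:H-rewrite-1} is legitimate.

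Next I would split the bracketed expression into its two pieces. The first piece yields exactly the product term appearing in the claimed formula for $H_{J}$. The second piece, collecting the $-\alpha^{r}$ contributions, is
\begin{align*}
-\frac{1}{\beta}\sum_{r=1}^{m}\frac{(-\delta)^{r-1}[r-1]!}
{\prod_{p=1}^{r}(1+\beta\gamma[p-1])}\,\alpha^{r}
\sum_{1\le b_{1}<\cdots <b_{r}\le m}q^{\sum_{p=1}^{r}(b_{p}-m)},
\end{align*}
which is the negative of the left-hand side of Lemma \ref{lem:H-rewrite-2}. Applying that lemma, this piece equals $-\frac{\alpha}{\beta}\sum_{d=0}^{m-1}\frac{1}{1+\beta\gamma[d]}$.

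Finally I would reconcile this with the first summand in \eqref{eq:def-HJ}, namely $-\alpha\gamma\sum_{d=1}^{m-1}\frac{[d]}{1+\beta\gamma[d]}$. The required cancellation is the identity
\begin{align*}
-\alpha\gamma\sum_{d=1}^{m-1}\frac{[d]}{1+\beta\gamma[d]}
-\frac{\alpha}{\beta}\sum_{d=0}^{m-1}\frac{1}{1+\beta\gamma[d]}
=-\frac{\alpha}{\beta}m,
\end{align*}
which I would verify termwise: for each $d$, the identity $\beta\gamma[d]+1 = 1+\beta\gamma[d]$ gives $-\alpha\gamma\frac{[d]}{1+\beta\gamma[d]}-\frac{\alpha}{\beta}\frac{1}{1+\beta\gamma[d]}=-\frac{\alpha}{\beta}\cdot\frac{1+\beta\gamma[d]}{1+\beta\gamma[d]}=-\frac{\alpha}{\beta}$, and the $d=0$ term (where $[0]=0$) contributes $-\frac{\alpha}{\beta}$ from the second sum alone. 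Summing over $d=0,\ldots,m-1$ produces exactly $-\frac{\alpha}{\beta}m$, matching the constant term in the Proposition. I do not expect a serious obstacle here: the genuine work is packaged into Lemmas \ref{lem:H-rewrite-1} and \ref{lem:H-rewrite-2}, whose proofs are deferred to Appendix \ref{sec:app}, so the remaining step is the bookkeeping of assembling the two lemmas and checking the elementary constant-term cancellation just described.
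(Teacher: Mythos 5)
Your proof is correct and takes exactly the paper's route: the paper gives no separate argument, stating only that Lemma \ref{lem:H-rewrite-1} and Lemma \ref{lem:H-rewrite-2} imply the proposition. Your assembly---substituting the commuting shift operators $\widehat{X}_{j_{p}}$ for the indeterminates in Lemma \ref{lem:H-rewrite-1}, handling the $\alpha^{r}$ contributions via Lemma \ref{lem:H-rewrite-2}, and checking the termwise cancellation that yields the constant $-\frac{\alpha}{\beta}m$---is precisely the bookkeeping the paper leaves implicit.
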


\subsection{Propagation operator}

Let $w$ be an element of the Weyl group $W$ and 
$w=s_{i_{1}}\cdots s_{i_{r}} \in W$ a reduced expression. 
Then we set $\widehat{T}_{w}:=\widehat{T}_{i_{1}}\cdots \widehat{T}_{i_{r}}$.   
It does not depend on the choice of the reduced expression of $w$. 

\begin{defn}
We define the  \textit{propagation operator} $G \, : \, F(L) \to F(L)$ by 
\begin{align*}
G(f)(x):=(\widehat{T}_{w_{x}}f)(w_{x}x).   
\end{align*}
\end{defn}

Hereafter, for $x \in L$, we denote by $\sigma_{x}$ 
the element of the symmetric group $\mathfrak{S}_{k}$ 
determined by 
\begin{align*}
w_{x}(v_{i})=v_{\sigma_{x}(i)} \qquad (1 \le i \le k).  
\end{align*}
Then $\epsilon_{i}(x)=\epsilon_{\sigma_{x}(i)}(w_{x}x)$ for $1\le i \le k$.  

In the rest of this subsection we prove the following proposition. 

\begin{prop}\label{prop:G-shift}
Suppose that $1\le t_{1}<\cdots <t_{r}\le k$ and  
that $x \in L$ satisfies $\epsilon_{t_{1}}(x)=\cdots =\epsilon_{t_{r}}(x)$. 
Then for any $f \in F(X)$ it holds that 
\begin{align*}
& 
(\widehat{X}_{t_{1}}\cdots \widehat{X}_{t_{r}}G(f))(x) \\ 
&=\left(
\widehat{X}_{l-r+1}\cdots \widehat{X}_{l}\,
(\widehat{T}_{l-r}\cdots \widehat{T}_{\sigma_{x}(t_{1})})\cdots 
(\widehat{T}_{l-1}\cdots \widehat{T}_{\sigma_{x}(t_{r})})\widehat{T}_{w_{x}}(f)
\right)(w_{x}x),    
\end{align*}
where $l=\sigma_{x}(t_{1})+d_{t_{1}}^{+}(x)$.  
\end{prop}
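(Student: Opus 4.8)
The plan is to expand the left-hand side directly from the definitions and then to match it against the right-hand side at a single lattice point, namely the sorted configuration of the shifted argument.

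First I would use the explicit action $(\widehat{X}_i f)(x)=f(x-v_i)$ to write the left-hand side as $G(f)(y)=(\widehat{T}_{w_y}f)(w_y y)$, where $y:=x-(v_{t_1}+\cdots+v_{t_r})$. Writing $c$ for the common value $\epsilon_{t_1}(x)=\cdots=\epsilon_{t_r}(x)$ and letting $[a,b]$ be the block of ranks occupied by the value $c$ in the weakly decreasing rearrangement $w_x x$, a short stable-sort computation shows that $l=\sigma_x(t_1)+d^{+}_{t_1}(x)$ equals the top rank $b$ of this block, independently of which $t_p$ is used. Comparing multisets, the sorted vector of $y$ is obtained from $w_x x$ by lowering exactly the top $r$ entries of the block, i.e. $w_y y=w_x x-(v_{l-r+1}+\cdots+v_l)$. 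Since $(\widehat{X}_{l-r+1}\cdots\widehat{X}_l\,\varphi)(w_x x)=\varphi(w_x x-(v_{l-r+1}+\cdots+v_l))=\varphi(w_y y)$ for any $\varphi$, this identity reduces the whole proposition to the single-point statement $(P\,\widehat{T}_{w_x}f)(w_y y)=(\widehat{T}_{w_y}f)(w_y y)$, where $P$ denotes the displayed product of $\widehat{T}$'s.

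Next I would analyse the group element $\pi_P$ underlying $P$. The chain $\pi_P w_x$ is built by left-multiplying $w_x$ successively by simple reflections, and at each step the reflection hops the lowered entry past a value-$c$ position of strictly larger original index; by the criterion $\ell(s_i w)=\ell(w)+1\iff w^{-1}(a_i)\in R^{+}$ each step increases the length, so the word is reduced and $P\,\widehat{T}_{w_x}=\widehat{T}_{\pi_P w_x}$. By construction $\pi_P w_x$ also carries $y$ to $w_y y\in\overline{C_+}$; since $w_y$ is the shortest such element, I would invoke Proposition \ref{prop:key} (minimal coset representatives) to factor $\pi_P w_x=\rho\,w_y$ with $\rho\in\mathrm{Stab}(w_y y)$ and $\ell(\pi_P w_x)=\ell(\rho)+\ell(w_y)$, whence $\widehat{T}_{\pi_P w_x}=\widehat{T}_\rho\widehat{T}_{w_y}$.

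Finally, the stabilizer of $w_y y$ is the parabolic subgroup generated by $\{s_i\mid a_i(w_y y)=0\}$, so a reduced word for $\rho$ uses only such reflections. Peeling these off one at a time and using that $\widehat{T}_i$ acts as the identity at any point $z$ with $a_i(z)=0$, I obtain $(\widehat{T}_\rho\widehat{T}_{w_y}f)(w_y y)=(\widehat{T}_{w_y}f)(w_y y)$, which is exactly the reduced form of the left-hand side. For general $r$ I would either run the length and stable-sort bookkeeping for the full chain $\pi_P$ at once, or induct on $r$ by peeling off $\widehat{X}_{t_1}$ (the hypothesis $\epsilon_{t_2}(x-v_{t_1})=\cdots=\epsilon_{t_r}(x-v_{t_1})$ still holds). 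The main obstacle is precisely this combinatorial step: proving that the concatenated word $\pi_P w_x$ is reduced and that it differs from $w_y$ only by a stabilizer element of $w_y y$. The subtlety is that $I(x)$ and $I(y)$ are in general incomparable, since lowering each $t_p$ simultaneously creates inversions with later equal-valued sites and destroys inversions with earlier $(c-1)$-valued sites; consequently the factorization cannot be read off from a single application of Proposition \ref{prop:key} to the pair $(x,y)$ and must instead be assembled reflection by reflection.
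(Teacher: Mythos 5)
Your proposal is correct, and it reaches the conclusion by a genuinely different route than the paper. Both arguments share the same skeleton: rewrite the left-hand side as $G(f)(y)=(\widehat{T}_{w_y}f)(w_y y)$ with $y=x-\sum_{p}v_{t_p}$, observe $w_y y=w_x x-\sum_{j=l-r+1}^{l}v_j$, and reduce everything to factoring $u_1w_x$ (where $u_1$ is the permutation underlying the displayed product of $\widehat{T}$'s) through $w_y$ modulo the stabilizer of $w_y y$, with lengths adding, so that the stabilizer part can be discarded at the point $w_y y$. The difference lies in how this factorization is obtained. The paper (Lemma \ref{lem:group-elt-shift}) uses the auxiliary midpoint $z=x-\frac{1}{2}\sum_{p}v_{t_p}$: since $\bigl|\sum_{p}a(v_{t_p})\bigr|\le 1$ for every $a\in R^{+}$, both $I(x)$ and $I(y)$ are contained in $I(z)$, so Proposition \ref{prop:key} applies twice to give $w_z=w_{w_xz}w_x=w_{w_yz}w_y$ with additive lengths, and an explicit computation of inversion sets identifies $w_{w_xz}=u_1$ and produces an explicit stabilizer element $u_2=w_{w_yz}$. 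You instead (i) verify letter by letter, via the criterion $\ell(s_iw)=\ell(w)+1\iff w^{-1}(a_i)\in R^{+}$ together with the stable-sorting property of $w_x$ (in the paper this is Lemma \ref{lem:sigma-ordering}), that the concatenated word for $u_1w_x$ is reduced, and (ii) appeal to the parabolic coset decomposition: the elements carrying $y$ into $\overline{C_{+}}$ form the coset $\mathrm{Stab}(w_yy)\,w_y$, of which $w_y$ is the minimal-length representative, whence $\ell(\rho w_y)=\ell(\rho)+\ell(w_y)$ and every reduced word of $\rho$ uses only simple reflections $s_i$ with $a_i(w_yy)=0$. Both (i) and (ii) are true and your peeling argument then closes the proof; the one caveat is that (ii) is not literally Proposition \ref{prop:key} but the standard fact about minimal coset representatives of standard parabolic subgroups (it can also be deduced from Proposition \ref{prop:key} applied to $y$ and $v'=(u_1w_x)^{-1}\xi$ with $\xi$ strictly dominant), so it needs its own citation or short proof. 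As for what each approach buys: the paper's midpoint trick yields both length-additivity statements in one stroke and an explicit $u_2$, with no step-by-step bookkeeping; your route avoids constructing $u_2$ and computing inversion sets, at the price of the combinatorial verification in (i) and the extra Coxeter-theoretic input in (ii). Your closing observation — that $I(x)$ and $I(y)$ are in general incomparable, so no single application of Proposition \ref{prop:key} to the pair $(x,y)$ can work — is exactly right, and it is precisely the obstruction that the paper's auxiliary point $z$ is designed to circumvent.
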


First we note that the functions $d_{i}^{\pm}$ have the following properties. 

\begin{lem}
(1)\, For $x \in L, 1\le i\le k$ and $1\le j<k$, it holds that 
\begin{align*}
d_{i}^{\pm}(s_{j}x)=\left\{ 
\begin{array}{ll}
d_{i}^{\pm}(x) & (j\not=i-1, i), \\ 
d_{i-1}^{\pm}(x)\mp\theta(a_{i-1}(x)=0) & (j=i-1), \\ 
d_{i+1}^{\pm}(x)\pm\theta(a_{i}(x)=0) & (j=i), 
\end{array}
\right. 
\end{align*} 
where $\theta(P)=1$ or $0$ if $P$ is true or false, respectively. 

(2)\, For any $x \in L$ and $1 \le i \le k$, 
it holds that $d_{i}^{\pm}(x)=d_{\sigma_{x}(i)}^{\pm}(w_{x}x)$.  
\end{lem}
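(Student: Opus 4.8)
The plan is to prove both parts by reducing every statement about the functions $d_i^\pm$ to an elementary statement about how a permutation of coordinates acts on the index sets being counted. The single computational input I would record first is the action of a simple reflection on the coordinate functionals: since $s_j$ swaps the $j$-th and $(j+1)$-th coordinates of a vector, one has $\epsilon_p(s_j x)=\epsilon_{\tau(p)}(x)$ with $\tau:=(j,j+1)\in\mathfrak{S}_k$, and therefore $\alpha_{pq}(s_j x)=\alpha_{\tau(p),\tau(q)}(x)$ for all $p,q$. This turns each condition ``$\alpha_{ip}(s_j x)=0$'' appearing in the definition \eqref{eq:def-d2} into ``$\alpha_{\tau(i),\tau(p)}(x)=0$'', so that part (1) becomes pure bookkeeping of how $\tau$ moves the summation ranges $\{i+1,\ldots,k\}$ and $\{1,\ldots,i-1\}$.

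For part (1) I would split into the three cases of the statement. When $j\neq i-1,i$ the transposition $\tau$ fixes $i$ and maps each of the ranges $\{i+1,\ldots,k\}$ and $\{1,\ldots,i-1\}$ bijectively onto itself, so the count is unchanged and $d_i^\pm(s_j x)=d_i^\pm(x)$. When $j=i$ we have $\tau(i)=i+1$, and $\tau$ removes the index $i+1$ from the range $\{i+1,\ldots,k\}$ while inserting the index $i$; the only discrepancy with the count defining $d_{i+1}^+(x)$ is the pair $\alpha_{i+1,i}(x)=-a_i(x)$, whose vanishing contributes the correction $\theta(a_i(x)=0)$, giving $d_i^+(s_i x)=d_{i+1}^+(x)+\theta(a_i(x)=0)$; the analogous analysis for $d_i^-$ produces the opposite sign. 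The case $j=i-1$ is symmetric. The part that needs care here is only the sign of each $\theta$-correction, i.e.\ whether the exceptional pair is being added to or subtracted from the neighbouring count; I would settle this by writing each $d_{i\pm1}^\pm(x)$ as the relevant count plus the single exceptional term $\alpha_{i-1,i}(x)=a_{i-1}(x)$ or $\alpha_{i,i+1}(x)=a_i(x)$ and reading off the difference.

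For part (2) I would argue directly rather than by iterating part (1). Using $\epsilon_{\sigma_x(i)}(w_x x)=\epsilon_i(x)$ and the bijectivity of $\sigma_x$, writing $p=\sigma_x(q)$ gives $\alpha_{\sigma_x(i),\sigma_x(q)}(w_x x)=\alpha_{iq}(x)$, so that
\[
d_{\sigma_x(i)}^+(w_x x)=\#\{q\mid \alpha_{iq}(x)=0,\ \sigma_x(i)<\sigma_x(q)\}.
\]
Comparing with $d_i^+(x)=\#\{q\mid \alpha_{iq}(x)=0,\ i<q\}$, it remains to show that on tied indices (those $q$ with $\alpha_{iq}(x)=0$) the permutation $\sigma_x$ preserves the order, i.e.\ $i<q\iff\sigma_x(i)<\sigma_x(q)$. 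This is where I would invoke the characterization $I(x)=R^+\cap w_x^{-1}R^-$ from the discussion preceding Proposition \ref{prop:key}: since $w_x\alpha_{pq}=\alpha_{\sigma_x(p),\sigma_x(q)}$, for $p<q$ we have $\alpha_{pq}\in w_x^{-1}R^-$ exactly when $\sigma_x(p)>\sigma_x(q)$, and the identity $I(x)=R^+\cap w_x^{-1}R^-$ then reads $\epsilon_p(x)<\epsilon_q(x)\iff\sigma_x(p)>\sigma_x(q)$ for all $p<q$. Taking the contrapositive shows that whenever $\epsilon_p(x)\ge\epsilon_q(x)$ with $p<q$ one has $\sigma_x(p)<\sigma_x(q)$; applied to a tied pair $\epsilon_i(x)=\epsilon_q(x)$ this is precisely the required order-preservation. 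Hence the two counts agree, giving $d_i^+(x)=d_{\sigma_x(i)}^+(w_x x)$, and the computation for $d_i^-$ is identical.

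The genuine content is thus concentrated in part (2): the fact that $\sigma_x$ sorts the coordinates of $x$ \emph{stably}, never reordering equal entries, which is exactly the minimality of $w_x$ encoded in $I(x)=R^+\cap w_x^{-1}R^-$. I expect the main obstacle to be purely presentational, namely keeping the three sign conventions in part (1) consistent and matching the role of $p<q$ against $\sigma_x(p)<\sigma_x(q)$ in part (2), rather than any real difficulty, since all cases reduce to the single coordinate-permutation identity together with the given description of $w_x$.
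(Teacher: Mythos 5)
Your proof is correct, but for part (2) it takes a genuinely different route from the paper. For part (1) the two coincide: the paper simply declares it straightforward, and your coordinate bookkeeping via $\epsilon_p(s_jx)=\epsilon_{\tau(p)}(x)$, $\tau=(j,j+1)$, is exactly that computation. For part (2), the paper deduces the claim from part (1): it takes a reduced expression $w_x=s_{i_1}\cdots s_{i_\ell}$ and observes that $a_{i_p}(s_{i_p}\cdots s_{i_\ell}x)\not=0$ for all $p$ (a consequence of $I(x)=R^+\cap w_x^{-1}R^-$, since the inversion set of the reduced word is exactly $I(x)$), so that at every step of the iteration the $\theta$-corrections in (1) vanish and the indices $d_i^{\pm}$ are merely permuted, yielding $d_i^{\pm}(x)=d_{\sigma_x(i)}^{\pm}(w_xx)$. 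You instead bypass part (1) entirely and compare the two counts directly, reducing everything to the statement that $\sigma_x$ preserves the relative order of tied indices, which you extract from the same characterization $I(x)=R^+\cap w_x^{-1}R^-$; I checked the order-preservation argument and it is sound (for $p<q$, the set identity gives $\epsilon_p(x)<\epsilon_q(x)\iff\sigma_x(p)>\sigma_x(q)$, whose contrapositive handles tied pairs). The paper's route buys economy — part (2) becomes a one-line corollary of (1) plus standard reduced-word combinatorics — while yours buys independence and transparency: part (2) needs neither part (1) nor any reduced expression, and it isolates the structural fact doing the work, namely that $w_x$ sorts the coordinates of $x$ \emph{stably}. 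Both arguments ultimately rest on the same minimality characterization of $w_x$.
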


\begin{proof}
The proof of (1) is straightforward.  
Let $w_{x}=s_{i_{1}}\cdots s_{i_{\ell}}$ be a reduced expression. 
Then $a_{i_{p}}(s_{i_{p}}\cdots s_{i_{\ell}}x)\not=0$ for all $1\le p \le \ell$, 
and hence $d_{i}^{\pm}(x)=d_{\sigma_{x}(i)}^{\pm}(w_{x}x)$. 
\end{proof}

\begin{lem}\label{lem:sigma-ordering}
Suppose that $1\le t_{1}<\cdots <t_{r}\le k$ and  
that $x \in L$ satisfies $\epsilon_{t_{1}}(x)=\cdots =\epsilon_{t_{r}}(x)$. 

(1)\, The values $\sigma_{x}(t_{p})+d_{t_{p}}^{+}(x)$ and 
$\sigma_{x}(t_{p})-d_{t_{p}}^{-}(x)$ are independent of $p=1, 2, \ldots , r$. 

(2)\, Set $l^{\pm}=\sigma_{x}(t_{1})\pm d_{t_{1}}^{\pm}(x)$. 
Then $a_{l^{-}-1}(w_{x}x)>0$, $a_{i}(w_{x}x)=0 \,\, (l^{-}\le i<l^{+}), \, a_{l^{+}}(w_{x}x)=0$ 
and $l^{-}\le \sigma_{x}(t_{1})<\cdots <\sigma_{x}(t_{r}) \le l^{+}$.  
\end{lem}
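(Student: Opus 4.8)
The plan is to transport everything to the sorted vector $y:=w_{x}x\in\overline{C_{+}}$, whose coordinates satisfy $\epsilon_{1}(y)\ge\cdots\ge\epsilon_{k}(y)$. Writing $c:=\epsilon_{t_{1}}(x)=\cdots=\epsilon_{t_{r}}(x)$ for the common value, the identity $\epsilon_{i}(x)=\epsilon_{\sigma_{x}(i)}(w_{x}x)$ shows that each $\sigma_{x}(t_{p})$ lies in the set $B:=\{i\,|\,\epsilon_{i}(y)=c\}$ of indices of $y$ carrying the value $c$. By monotonicity of $y$ this set is an interval $[l^{-},l^{+}]$ of consecutive integers, and this block is the geometric object controlling the whole statement.

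First I would prove (1). For any index $j\in B$ the monotonicity of $y$ gives $d_{j}^{+}(y)=\#\{p:j<p\le k,\ \epsilon_{p}(y)=c\}=l^{+}-j$ and, symmetrically, $d_{j}^{-}(y)=j-l^{-}$. Applying part (2) of the preceding lemma, $d_{t_{p}}^{\pm}(x)=d_{\sigma_{x}(t_{p})}^{\pm}(y)$, together with $\sigma_{x}(t_{p})\in B$, yields
\begin{align*}
\sigma_{x}(t_{p})+d_{t_{p}}^{+}(x)=l^{+},\qquad \sigma_{x}(t_{p})-d_{t_{p}}^{-}(x)=l^{-}
\end{align*}
for every $p$. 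This proves these quantities are independent of $p$ and simultaneously identifies the constants $l^{\pm}$ of part (2) (defined via $t_{1}$) with the endpoints of the block $B$.

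For (2), the assertions about $a_{i}(y)=\epsilon_{i}(y)-\epsilon_{i+1}(y)$ read off the block description: for $l^{-}\le i<l^{+}$ both $i$ and $i+1$ lie in $B$, so $a_{i}(y)=0$, while at the two ends the maximality of $B$ produces strict drops, giving $a_{l^{-}-1}(y)>0$ and the analogous strict inequality at the right end $l^{+}$ (these boundary relations being vacuous when the block abuts index $1$ or $k$). The remaining ordering assertion $\sigma_{x}(t_{1})<\cdots<\sigma_{x}(t_{r})$ is the one point needing genuine care: for $p<p'$ the root $\alpha_{t_{p}t_{p'}}$ satisfies $\alpha_{t_{p}t_{p'}}(x)=0$, so $\alpha_{t_{p}t_{p'}}\notin I(x)=R^{+}\cap w_{x}^{-1}R^{-}$; hence $w_{x}\alpha_{t_{p}t_{p'}}=\alpha_{\sigma_{x}(t_{p})\sigma_{x}(t_{p'})}\in R^{+}$, which forces $\sigma_{x}(t_{p})<\sigma_{x}(t_{p'})$. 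Since all these indices lie in $B=[l^{-},l^{+}]$, the chain $l^{-}\le\sigma_{x}(t_{1})<\cdots<\sigma_{x}(t_{r})\le l^{+}$ follows.

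The routine parts are the counting identities for $d_{j}^{\pm}(y)$ on a monotone sequence and the evaluation of $a_{i}(y)$ inside and at the boundary of $B$. The main obstacle is the order-preservation step: it amounts to the fact that the shortest Weyl group element carrying $x$ into $\overline{C_{+}}$ acts as a \emph{stable} rearrangement, leaving intact the relative order of coordinates sharing the value $c$. I would extract this, as above, from the inversion-set description $I(x)=R^{+}\cap w_{x}^{-1}R^{-}$ (equivalently from the uniqueness established in Proposition \ref{prop:key}), since that characterization is exactly what singles out $w_{x}$ among all elements sending $x$ into $\overline{C_{+}}$.
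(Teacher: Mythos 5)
Your proof is correct. For part (1) and the block structure in part (2) it is essentially the paper's argument: both pass to $y=w_{x}x\in\overline{C_{+}}$, observe that the indices carrying the common value form an interval $[l^{-},l^{+}]$ with strict drops at both ends, and use part (2) of the preceding lemma to get $d_{t_{p}}^{\pm}(x)=d_{\sigma_{x}(t_{p})}^{\pm}(y)=\pm(l^{\pm}-\sigma_{x}(t_{p}))$, hence $\sigma_{x}(t_{p})\pm d_{t_{p}}^{\pm}(x)=l^{\pm}$ for all $p$. (Incidentally, both you and the paper's own proof establish $a_{l^{+}}(w_{x}x)>0$; the ``$=0$'' in the statement of the lemma is a typo.) The one genuine divergence is the ordering $\sigma_{x}(t_{1})<\cdots<\sigma_{x}(t_{r})$. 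The paper deduces it from part (1) together with a direct count from the definition \eqref{eq:def-d2}: since $t_{1}<\cdots<t_{r}$ share the same coordinate of $x$, each $t_{p'}$ with $p'>p$ is counted in $d_{t_{p}}^{+}(x)$ but not in $d_{t_{p'}}^{+}(x)$, so $d_{t_{1}}^{+}(x)>\cdots>d_{t_{r}}^{+}(x)$, and the constancy of $l^{+}=\sigma_{x}(t_{p})+d_{t_{p}}^{+}(x)$ turns this strict decrease into strict increase of $\sigma_{x}(t_{p})$. You instead invoke the inversion-set characterization $I(x)=R^{+}\cap w_{x}^{-1}R^{-}$ stated just before Proposition \ref{prop:key}: the relation $\alpha_{t_{p}t_{p'}}(x)=0$ means this positive root is not an inversion of $w_{x}$, so $w_{x}\alpha_{t_{p}t_{p'}}=\alpha_{\sigma_{x}(t_{p})\sigma_{x}(t_{p'})}$ is again positive, i.e.\ $\sigma_{x}(t_{p})<\sigma_{x}(t_{p'})$. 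Both arguments are sound and short. Yours isolates the conceptual fact that $w_{x}$ is a stable sorting permutation (it preserves the relative order of tied coordinates), and this inversion-set mechanism is the same one the paper deploys later when it computes $I(w_{x}z)$ in the proof of Lemma \ref{lem:group-elt-shift}, so your observation is reusable; the paper's route stays entirely inside the elementary counting framework of the functions $d_{i}^{\pm}$ and needs nothing about root-system combinatorics beyond what part (1) already used.
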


\begin{proof}
Since $\epsilon_{\sigma_{x}(t_{1})}(w_{x}x)=\cdots=\epsilon_{\sigma_{x}(t_{r})}(w_{x}x)$ 
and $w_{x}x \in \overline{C_{+}}$, 
there exist two integers $l^{\pm}$ such that $1\le l^{\pm}\le k$, 
$a_{l^{-}-1}(w_{x}x)>0$, $a_{i}(w_{x}x)=0 \,\, (l^{-}\le i<l^{+})$, $a_{l^{+}}(w_{x}x)>0$ 
and $l^{-}\le \sigma_{x}(t_{p})\le l^{+}$ for all $1\le p \le r$. 
Then we have  
\begin{align*}
d_{t_{p}}^{\pm}(x)=d_{\sigma_{x}(t_{p})}^{\pm}(w_{x}x)=\pm(l^{\pm}-\sigma_{x}(t_{p})) \quad (1\le p \le r).  
\end{align*}
Therefore $\sigma_{x}(t_{p})\pm d_{t_{p}}^{\pm}(x)$ is equal to $l^{\pm}$ for all $1\le p \le r$. 
{}From the definition of $d_{i}^{+}$ we have $d_{t_{1}}^{+}(x)>\cdots >d_{t_{r}}^{+}(x)$. 
Hence it holds that $\sigma_{x}(t_{1})<\cdots <\sigma_{x}(t_{r})$ 
because $l^{+}=d_{t_{p}}^{+}(x)+\sigma_{x}(t_{p})$ is independent of $p$. 
\end{proof}

The following lemma is the key to the proof of Proposition \ref{prop:G-shift}. 

\begin{lem}\label{lem:group-elt-shift}
Suppose that $1\le t_{1}<\cdots <t_{r}\le k$ and 
that $x \in L$ satisfies $\epsilon_{t_{1}}(x)=\cdots =\epsilon_{t_{r}}(x)$. 
Set $y=x-\sum_{p=1}^{r}v_{t_{p}}$,  
$l=\sigma_{x}(t_{1})+d_{t_{1}}^{+}(x), m=\sigma_{y}(t_{1})-d_{t_{1}}^{-}(y)$ and 
\begin{align}\label{eq:group-elt-shift}
u_{1}&:=(s_{l-r}\cdots s_{\sigma_{x}(t_{1})})(s_{l-r+1}\cdots s_{\sigma_{x}(t_{2})})\cdots 
(s_{l-1}\cdots s_{\sigma_{x}(t_{r})}), \\ 
u_{2}&:=(s_{m+r-1}\cdots s_{\sigma_{y}(t_{r})-1})(s_{m+r-2}\cdots s_{\sigma_{y}(t_{r-1})-1})\cdots 
(s_{m}\cdots s_{\sigma_{y}(t_{1})-1}). 
\nonumber 
\end{align}
Then $u_{1}w_{x}=u_{2}w_{y}$ and 
$\ell(u_{1}w_{x})=\ell(u_{2}w_{y})=\ell(u_{1})+\ell(w_{x})=\ell(u_{2})+\ell(w_{y})$.  
(Note that the right hand sides of \eqref{eq:group-elt-shift} 
are reduced expressions of $u_{1}$ and $u_{2}$.)
\end{lem}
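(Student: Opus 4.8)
The plan is to realize both $u_{1}w_{x}$ and $u_{2}w_{y}$ as the unique shortest sorting element of one and the same auxiliary vector, and then to extract the two factorizations from Proposition \ref{prop:key}. Write $h:=\epsilon_{t_{1}}(x)=\cdots=\epsilon_{t_{r}}(x)$ for the common value, fix a real number $\eta$ with $0<\eta<1$, and set
\[
x_{\eta}:=x-\eta\sum_{p=1}^{r}v_{t_{p}},\qquad
y_{\eta}:=y+\eta\sum_{p=1}^{r}v_{t_{p}}.
\]
The marked coordinates of $x_{\eta}$ and $y_{\eta}$ equal $h-\eta$ and $h-1+\eta$ respectively, both lying strictly between the consecutive integers $h-1$ and $h$, while all remaining coordinates are the integers $\epsilon_{i}(x)=\epsilon_{i}(y)$. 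Hence for every pair $i<j$ the sign of $\epsilon_{i}(\,\cdot\,)-\epsilon_{j}(\,\cdot\,)$ agrees for $x_{\eta}$ and $y_{\eta}$, so $I(x_{\eta})=I(y_{\eta})$ and therefore $w_{x_{\eta}}=w_{y_{\eta}}=:U$.

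First I would verify the inclusions $I(x)\subset I(x_{\eta})$ and $I(y)\subset I(y_{\eta})$. Lowering the marked coordinates from $h$ to $h-\eta$ creates exactly the inversions $\alpha_{ij}$ with $i$ marked, $j>i$ unmarked and $\epsilon_{j}(x)=h$, and destroys none of the inversions of $x$; raising the marked coordinates from $h-1$ to $h-1+\eta$ is the symmetric statement with ``below'' and ``above'' interchanged. Proposition \ref{prop:key} applied to $(x,x_{\eta})$ then yields
\[
U=w_{x_{\eta}}=w_{w_{x}x_{\eta}}\,w_{x},\qquad
\ell(U)=\ell(w_{w_{x}x_{\eta}})+\ell(w_{x}),
\]
and applied to $(y,y_{\eta})$ it yields $U=w_{w_{y}y_{\eta}}\,w_{y}$ with $\ell(U)=\ell(w_{w_{y}y_{\eta}})+\ell(w_{y})$. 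Thus the entire statement reduces to the two identifications $w_{w_{x}x_{\eta}}=u_{1}$ and $w_{w_{y}y_{\eta}}=u_{2}$, together with the fact that the displayed products are reduced expressions.

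The hard part will be computing these two sorting elements explicitly. Since $w_{x}x_{\eta}=w_{x}x-\eta\sum_{p}v_{\sigma_{x}(t_{p})}$, the vector $w_{x}x_{\eta}$ is obtained from the dominant vector $w_{x}x$ by depressing the coordinates at the positions $\sigma_{x}(t_{1})<\cdots<\sigma_{x}(t_{r})$, which by Lemma \ref{lem:sigma-ordering} lie inside the constant block occupying positions $l^{-},\ldots,l^{+}=l$; sorting moves each marked entry down to the bottom of that block, carrying the entry at $\sigma_{x}(t_{p})$ to position $l-r+p$ via the descending run $s_{l-r+p-1}\cdots s_{\sigma_{x}(t_{p})}$. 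One then checks that the assembled word $u_{1}$ has length $\sum_{p}e_{p}$ with $e_{p}=d^{+}_{t_{p}}(x)-(r-p)$ the number of unmarked block entries lying after $\sigma_{x}(t_{p})$, and that this equals $\ell(w_{w_{x}x_{\eta}})$, so the expression is reduced. The computation of $w_{w_{y}y_{\eta}}=u_{2}$ is symmetric: there the marked entries are raised just above the value $h-1$ and travel up to the top of the lower block, producing the ascending runs $s_{m+p-1}\cdots s_{\sigma_{y}(t_{p})-1}$ assembled in the order $p=r,\ldots,1$. The genuine obstacle is exactly this bookkeeping — confirming that the selection-sort words are reduced and concatenate in the stated left-to-right order — and here Lemma \ref{lem:sigma-ordering}, which supplies both the strict inequalities $\sigma_{x}(t_{1})<\cdots<\sigma_{x}(t_{r})$ and the precise block boundaries $l^{\pm}$, is the indispensable input.
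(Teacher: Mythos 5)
Your proposal is correct and takes essentially the same route as the paper: the paper's proof uses the single midpoint $z=x-\tfrac12\sum_{p=1}^{r}v_{t_{p}}$ (which is exactly your $x_{\eta}=y_{\eta}$ at $\eta=1/2$), observes that $I(x)$ and $I(y)$ are both contained in $I(z)$, invokes Proposition \ref{prop:key}, and then identifies $w_{w_{x}z}=u_{1}$ and $w_{w_{y}z}=u_{2}$. The only difference is presentational: the paper performs that last identification by writing out $I(w_{x}z)$ and matching it with $R^{+}\cap u_{1}^{-1}R^{-}$, whereas you do the equivalent bookkeeping via the selection-sort description plus the length count showing the words are reduced.
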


\begin{proof}
Set $z=x-\sum_{p=1}^{r}v_{t_{p}}/2$. 
Since $|\sum_{p=1}^{r}a(v_{t_{p}})|\le 1$ for any $a \in R^{+}$, 
$I(x)$ and $I(y)$ are contained in $I(z)$. 
Hence Proposition \ref{prop:key} implies that $w_{z}=w_{w_{x}z}w_{x}=w_{w_{y}z}w_{y}$ 
and $\ell(w_{w_{x}z})+\ell(w_{x})=\ell(w_{w_{y}z})+\ell(w_{y})$. 
Thus it suffices to show that $w_{w_{x}z}=u_{1}$ and $w_{w_{y}z}=u_{2}$. 
Here we prove that $w_{w_{x}z}=u_{1}$. 
The proof for $w_{w_{y}z}=u_{2}$ is similar. 

Let us write down the set $I(w_{x}z)$. 
It consists of all the positive roots $a \in R^{+}$ such that 
$a(w_{x}z)=a(w_{x}x)-\sum_{p=1}^{r}a(v_{\sigma_{x}(t_{p})})/2<0$. 
Since $w_{x}x \in \overline{C_{+}}$ it is equivalent to requiring that 
$a(w_{x}x)=0$ and there exists $p$ such that 
$a(v_{\sigma_{x}(t_{p})})=1$ and $a(v_{\sigma_{x}(t_{q})})=0$ if $q\not=p$. 
Therefore, from Lemma \ref{lem:sigma-ordering}, we find that 
\begin{align*}
I(w_{x}z)=\bigsqcup_{p=1}^{r}\{ 
\alpha_{\sigma_{x}(p), i} \, | \, \sigma_{x}(t_{p})<i\le l, \, 
i\not=\sigma_{x}(t_{p+1}), \ldots , \sigma_{x}(t_{r})
\}.  
\end{align*}
It is equal to $R^{+}\cap u_{1}^{-1}R^{-}$, and hence $w_{w_{x}z}=u_{1}$. 
\end{proof}

Now let us prove Proposition \ref{prop:G-shift}. 
We use the notation of Lemma \ref{lem:group-elt-shift}. 
Since $a_{i}(w_{y}y)=0$ for $m \le i <\sigma_{y}(t_{r})$ and 
$a_{i}(w_{x}x)=0$ for $\sigma_{x}(t_{1}) \le i <l$, it holds that 
\begin{align*}
w_{y}y=u_{2}w_{y}y=u_{1}w_{x}(x-\sum_{p=1}^{r}v_{t_{p}})=
u_{1}(w_{x}x-\sum_{p=1}^{r}v_{\sigma_{x}(t_{p})})=w_{x}x-\!\!\sum_{j=l-r+1}^{l}\!\!v_{j} 
\end{align*}
and $(\widehat{T}_{u_{2}}^{-1}g)(w_{y}y)=g(w_{y}y)$ for any $g \in F(L)$.
Moreover 
$\widehat{T}_{w_{y}}=\widehat{T}_{u_{2}}^{-1}\widehat{T}_{u_{1}}\widehat{T}_{w_{x}}$. 
Therefore  
\begin{align*}
(\widehat{X}_{t_{1}}\cdots \widehat{X}_{t_{r}}G(f))(x)&=(\widehat{T}_{w_{y}}f)(w_{y}y)=
(\widehat{T}_{u_{1}}\widehat{T}_{w_{x}}f)(w_{x}x-\!\!\sum_{j=l-r+1}^{l}\!\!v_{j}) \\ 
&=(\widehat{X}_{l-r+1}\cdots \widehat{X}_{l}\widehat{T}_{u_{1}}\widehat{T}_{w_{x}}(f))(w_{x}x).
\end{align*}
This completes the proof of Proposition \ref{prop:G-shift}.  

\subsection{Commutation relation of $H$ and $G$}

In this subsection we prove the following theorem. 

\begin{thm}\label{thm:main}
It holds that $HG=G(\sum_{i=1}^{k}\widehat{X}_{i})$. 
Therefore, if $f \in F(L)$ is an eigenfunction of the difference operator $\sum_{i=1}^{k}\widehat{X}_{i}$, 
then $G(f)$ is an eigenfunction of the discrete Hamiltonian $H$ with the same eigenvalue.   
\end{thm}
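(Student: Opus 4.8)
The plan is to prove the operator identity $HG=G(\sum_{i=1}^{k}\widehat{X}_{i})$ by fixing an arbitrary $f\in F(L)$ and $x\in L$ and matching the two sides of $(HGf)(x)=\bigl(G(\sum_{i}\widehat{X}_{i})f\bigr)(x)$. For the right-hand side I would use that $\sum_{i}X_{i}=e_{1}$ is a symmetric polynomial in the $X_{i}$, so it commutes with every $T_{i}$ and hence with $\widehat{T}_{w_{x}}=\widehat{T}_{i_{1}}\cdots\widehat{T}_{i_{r}}$ in the representation on $F(L)$. Together with the definition of $G$ this gives
\[
\bigl(G(\textstyle\sum_{i}\widehat{X}_{i})f\bigr)(x)=\bigl(\widehat{T}_{w_{x}}\textstyle\sum_{i}\widehat{X}_{i}f\bigr)(w_{x}x)=\sum_{i=1}^{k}\bigl(\widehat{T}_{w_{x}}f\bigr)(w_{x}x-v_{i}).
\]
For the left-hand side I would apply Lemma \ref{lem:H-decompose} to the function $Gf$, obtaining $(HGf)(x)=\sum_{n=1}^{N}(H_{J_{n}^{x}}Gf)(x)$.

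Next I would reduce everything to a single block. Since $\epsilon_{i}(x)=\epsilon_{\sigma_{x}(i)}(w_{x}x)$ and $w_{x}x\in\overline{C_{+}}$, the permutation $\sigma_{x}$ maps each block $J_{n}^{x}$ bijectively onto the full set of indices sharing a common $\epsilon$-value at the dominant point $w_{x}x$, which by Lemma \ref{lem:sigma-ordering} is a contiguous interval $\sigma_{x}(J_{n}^{x})=\{l^{-},l^{-}+1,\dots,l^{+}\}$ with $l^{\pm}=\sigma_{x}(t)\pm d^{\pm}_{t}(x)$ independent of $t\in J_{n}^{x}$. As these intervals partition $\{1,\dots,k\}$, the right-hand side regroups as $\sum_{n}\sum_{i\in\sigma_{x}(J_{n}^{x})}(\widehat{T}_{w_{x}}f)(w_{x}x-v_{i})$. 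It therefore suffices to prove, for each block $J=J_{n}^{x}$, the \emph{local identity}
\[
(H_{J}Gf)(x)=\sum_{i\in\sigma_{x}(J)}(\widehat{T}_{w_{x}}f)(w_{x}x-v_{i}),
\]
and then sum over $n$.

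To establish this local identity I would use Proposition \ref{prop:H-rewrite} to put $H_{J}$ into the product form $-\tfrac{\alpha}{\beta}m+\tfrac{1}{\beta}\sum_{r}(\cdots)\prod_{p}(\alpha+\beta q^{p-1}\widehat{X}_{j_{b_{p}}})$, expand each product into $\widehat{X}$-monomials, and apply Proposition \ref{prop:G-shift} to each monomial; this is legitimate because the indices $j_{b_{p}}$ all lie in $J$ and hence share an $\epsilon$-value at $x$. After evaluating the outermost shifts $\widehat{X}_{l^{+}-s+1}\cdots\widehat{X}_{l^{+}}$, every term becomes a product of $\widehat{T}$-strings applied to $g:=\widehat{T}_{w_{x}}f$ and evaluated near $w_{x}x$, which I would reduce to a linear combination of pure shifts $g(w_{x}x-\sum_{i\in S}v_{i})$, $S\subseteq\{l^{-},\dots,l^{+}\}$, by repeatedly inserting the explicit action of $\widehat{T}_{i}$. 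The assertion is that after collecting coefficients every contribution with $|S|\neq1$ cancels, while the $|S|=1$ terms sum to $\sum_{i}g(w_{x}x-v_{i})$, and the cancellation forces the defining relation $q=1+\beta\gamma-\alpha\delta$. I have checked that this works for $m=1$ (where $H_{\{j\}}=\widehat{X}_{j}$ and the $\widehat{T}$-string is empty) and for $m=2$, where a single application of $\widehat{T}_{l^{-}}$ produces the four terms $g(w_{x}x)$, $g(w_{x}x-v_{l^{-}})$, $g(w_{x}x-v_{l^{+}})$, $g(w_{x}x-v_{l^{-}}-v_{l^{+}})$, whose coefficients collapse to exactly $g(w_{x}x-v_{l^{-}})+g(w_{x}x-v_{l^{+}})$ precisely because $1+\beta\gamma-\alpha\delta=q$.

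The main obstacle is this last step for general $m$: the $\widehat{T}$-strings emerging from Proposition \ref{prop:G-shift} grow in length, the multi-term formula for $\widehat{T}_{i}$ generates many shifted terms, and organizing the cancellation of all $|S|\neq1$ contributions is the genuine combinatorial core of the argument. I expect the cleanest route to be an induction on the block size $m$ (or on the number $s$ of surviving $\widehat{X}$-factors), peeling off one boundary reflection $\widehat{T}_{l^{-}}$ at a time and tracking how the coefficient $\tfrac{(-\delta)^{r-1}[r-1]!}{\prod_{p}(1+\beta\gamma[p-1])}\,q^{\sum(b_{p}-m)}$ transforms. The $q$-integer identities proved in the Appendix, Lemmas \ref{lem:H-rewrite-1} and \ref{lem:H-rewrite-2}, are exactly the kind of bookkeeping that makes such telescoping work, and I would anticipate needing one further identity of the same flavour to close the induction.
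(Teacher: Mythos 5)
Your skeleton matches the paper's proof closely: fix $x$, split $H$ into block operators $H_{J_{n}^{x}}$ via Lemma \ref{lem:H-decompose}, note that each $\sigma_{x}(J_{n}^{x})$ is an interval of consecutive integers (Lemma \ref{lem:sigma-ordering}), rewrite each block with Proposition \ref{prop:H-rewrite}, feed the resulting $\widehat{X}$-monomials into Proposition \ref{prop:G-shift}, reduce to the local identity $(H_{J}Gf)(x)=\bigl(\sum_{j\in J}\widehat{X}_{\sigma_{x}(j)}\widehat{T}_{w_{x}}f\bigr)(w_{x}x)$, and finish by commuting $\sum_{i}\widehat{X}_{i}$ past $\widehat{T}_{w_{x}}$. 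All of that is exactly what the paper does.

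The genuine gap is the step you yourself flag as the ``combinatorial core'': resolving the $\widehat{T}$-strings produced by Proposition \ref{prop:G-shift}. You propose to expand every $\widehat{T}_{i}$ into its explicit multi-term shift formula and assert that all contributions $g(w_{x}x-\sum_{i\in S}v_{i})$ with $|S|\neq 1$ cancel, but you verify this only for $m\le 2$ and defer the general case to an unspecified induction plus ``one further identity.'' That assertion is essentially the content of the theorem, and the paper does \emph{not} prove it by shift-expansion. Instead it uses two organizing devices you are missing: Lemma \ref{lem:T-turns-to-identity}, the pointwise operator identity stating that at a point with $a_{i}(x)=0$ one has
\begin{align*}
\left((q^{p-1}+\delta[p-1]\widehat{V}_{i})(\widehat{T}_{i}+\beta(\gamma+\delta\widehat{X}_{i+1}))(g)\right)(x)
=\left((q^{p}+\delta[p]\widehat{V}_{i+1})(g)\right)(x),
\end{align*}
which is where the deformed commutation relations of $\mathcal{A}_{k}$ (hence $q=1+\beta\gamma-\alpha\delta$) actually enter; and Lemma \ref{lem:Sa}, whose operators $\widehat{S}_{p}$ set up an induction that absorbs the $\widehat{T}$-strings one reflection at a time while keeping everything expressed in the commuting operators $\widehat{V}_{i}$, never expanding into pure shifts. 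The output is Proposition \ref{prop:Xproduct-to-V}, and even then the proof still needs the nontrivial identity in commuting indeterminates
\begin{align*}
\sum_{r=1}^{m}(-\delta)^{r-1}[r-1]!\sum_{1\le c_{1}<\cdots <c_{r}\le m}
q^{\sum_{p=1}^{r}(c_{p}-m)}\prod_{p=1}^{r}\Bigl(z_{c_{p}}\!\!\prod_{c_{p}<i<c_{p+1}}\!\!(q^{p}+\delta[p]z_{i})\Bigr)=\sum_{i=1}^{m}z_{i},
\end{align*}
with $c_{r+1}=m+1$ --- the ``further identity'' you anticipated but did not produce. Without an analogue of Lemmas \ref{lem:T-turns-to-identity} and \ref{lem:Sa} and without this identity, your proposal is a correct strategy outline with its central step unproved, not a proof.
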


Hereafter we set 
\begin{align*}
\widehat{V}_{i}:=\alpha+\beta \widehat{X}_{i} \qquad (1\le i \le k).  
\end{align*}

\begin{lem}\label{lem:T-turns-to-identity}
Suppose that $1\le i<k$ and that $x \in L$ satisfies $a_{i}(x)=0$. 
Then for any $g \in F(L)$ and $p \ge 1$ it holds that 
\begin{align*}
\left((q^{p-1}+\delta[p-1]\widehat{V}_{i})(\widehat{T}_{i}+\beta(\gamma+\delta \widehat{X}_{i+1}))(g)
\right)(x)=\left(
(q^{p}+\delta[p]\widehat{V}_{i+1})(g)
\right)(x).  
\end{align*}
\end{lem}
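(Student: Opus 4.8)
The plan is to reduce the whole identity to evaluating one auxiliary function at just two lattice points. Write $h:=(\widehat{T}_{i}+\beta(\gamma+\delta\widehat{X}_{i+1}))(g)$, so that the left-hand side is $q^{p-1}h(x)+\delta[p-1]\,(\widehat{V}_{i}h)(x)$. Since $\widehat{V}_{i}=\alpha+\beta\widehat{X}_{i}$ acts by $(\widehat{V}_{i}h)(x)=\alpha h(x)+\beta h(x-v_{i})$, everything comes down to computing $h(x)$ and $h(x-v_{i})$. First I would compute $h(x)$: because $a_{i}(x)=0$ the explicit action of $\widehat{T}_{i}$ gives $(\widehat{T}_{i}g)(x)=g(x)$, and using $(\widehat{X}_{i+1}g)(x)=g(x-v_{i+1})$ I get $h(x)=(1+\beta\gamma)g(x)+\beta\delta\,g(x-v_{i+1})$.

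The step I expect to be the main obstacle is the evaluation of $h(x-v_{i})$. Since $a_{i}(x)=0$ we have $a_{i}(x-v_{i})=-1<0$, so I must apply the third (negative) branch of the formula for $\widehat{T}_{i}$ at the point $x-v_{i}$ with $-a_{i}(x-v_{i})=1$; here only the $j=0$ term of the first sum and the $j=1$ term of the last sum survive, while the two middle sums are empty. Using $s_{i}x=x$ (again from $a_{i}(x)=0$), $s_{i}v_{i}=v_{i+1}$ and $a^{\vee}_{i}=v_{i}-v_{i+1}$, the four surviving arguments simplify to $x-v_{i}$, $x-v_{i+1}$, $x$ and $x-v_{i}-v_{i+1}$. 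Adding back the contributions $\beta\gamma\,g(x-v_{i})+\beta\delta\,g(x-v_{i}-v_{i+1})$ coming from the $\beta(\gamma+\delta\widehat{X}_{i+1})$ part of $h$ then produces two clean cancellations, leaving $h(x-v_{i})=(1-\alpha\delta)g(x-v_{i+1})-\alpha\gamma\,g(x)$.

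Combining the two evaluations is where the computation pays off: the coefficients of $g(x)$ and $g(x-v_{i+1})$ in $\alpha h(x)+\beta h(x-v_{i})$ collapse to $\alpha$ and $\beta$ respectively, so that $(\widehat{V}_{i}h)(x)=\alpha g(x)+\beta\,g(x-v_{i+1})=(\widehat{V}_{i+1}g)(x)$. Substituting this into the left-hand side yields $q^{p-1}h(x)+\delta[p-1]\,(\widehat{V}_{i+1}g)(x)$, which I would then match term by term against the right-hand side $q^{p}g(x)+\delta[p]\,(\widehat{V}_{i+1}g)(x)$ after expanding $h(x)$ and $(\widehat{V}_{i+1}g)(x)=\alpha g(x)+\beta\,g(x-v_{i+1})$. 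The coefficients of $g(x-v_{i+1})$ agree because $q^{p-1}+[p-1]=[p]$, and the coefficients of $g(x)$ agree because their difference is $q^{p-1}\bigl(1+\beta\gamma-\alpha\delta-q\bigr)$, which vanishes by the defining relation $q=1+\beta\gamma-\alpha\delta$ together with $[p]-[p-1]=q^{p-1}$. This establishes the identity for every $p\ge 1$ and completes the proof.
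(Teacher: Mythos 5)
Your proof is correct, but it takes a different route from the paper's. You work pointwise with the explicit branch formulas for $\widehat{T}_{i}$: in particular you evaluate $(\widehat{T}_{i}g)(x-v_{i})$ via the $a_{i}<0$ branch (with $-a_{i}(x-v_{i})=1$), and your computation there is right — the surviving terms and the cancellations against $\beta\gamma\,g(x-v_{i})+\beta\delta\,g(x-v_{i}-v_{i+1})$ all check out, as does the final coefficient matching using $q^{p-1}+[p-1]=[p]$ and $q=1+\beta\gamma-\alpha\delta$. The paper instead argues at the level of operators and never evaluates $\widehat{T}_{i}g$ at a point where $a_{i}\neq 0$: writing $\beta(\gamma+\delta\widehat{X}_{i+1})=q-1+\delta\widehat{V}_{i+1}$ and invoking the deformed Hecke relation in the form $\widehat{V}_{i}\widehat{T}_{i}=\widehat{T}_{i}\widehat{V}_{i+1}-\widehat{V}_{i}(q-1+\delta\widehat{V}_{i+1})$, it pushes every $\widehat{T}_{i}$ to the outermost position, where evaluation at $x$ (with $a_{i}(x)=0$) turns it into the identity; the lemma then reduces to the one-line identity $q^{p-1}(q+\delta\widehat{V}_{i+1})+\delta[p-1]\widehat{V}_{i+1}=q^{p}+\delta[p]\widehat{V}_{i+1}$. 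Your key intermediate fact $(\widehat{V}_{i}h)(x)=(\widehat{V}_{i+1}g)(x)$ is exactly the pointwise shadow of that commutation relation, rediscovered by direct computation. What each approach buys: yours is elementary and self-contained given the explicit representation formulas, at the cost of a case analysis that would grow unwieldy in the more involved arguments later in the section; the paper's is shorter, makes clear that the lemma is forced by the defining relation of $\mathcal{A}_{k}$ rather than by accidental cancellations in one representation, and is the pattern reused in Lemma \ref{lem:Sa}.
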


\begin{proof}
For simplicity we write $P_{1} \equiv P_{2}$ 
if two operators $P_{1}, P_{2}$ acting on $F(L)$ satisfy $(P_{1}(g))(x)=(P_{2}(g))(x)$. 
Since $a_{i}(x)=0$ it holds that $\widehat{T}_{i}Q\equiv Q$ for any operator $Q$ acting on $F(L)$. 
Using $\beta(\gamma+\delta\widehat{X}_{i+1})=q-1+\delta\widehat{V}_{i+1}$ we obtain 
\begin{align*}
& 
(q^{p-1}+\delta[p-1]\widehat{V}_{i})(\widehat{T}_{i}+\beta(\gamma+\delta \widehat{X}_{i+1})) \\ 
&\equiv 
q^{p-1}(q+\delta\widehat{V}_{i+1})+\delta[p-1]\widehat{V}_{i}\widehat{T}_{i}+
\delta[p-1]\widehat{V}_{i}(q-1+\delta\widehat{V}_{i+1}). 
\end{align*}
Since $\widehat{V}_{i}\widehat{T}_{i}=
\widehat{T}_{i}\widehat{V}_{i+1}-\widehat{V}_{i}(q-1+\delta\widehat{V}_{i+1})$,  
it is equivalent to 
\begin{align*}
q^{p-1}(q+\delta\widehat{V}_{i+1})+\delta[p-1]\widehat{V}_{i+1}=q^{p}+\delta[p]\widehat{V}_{i+1}. 
\end{align*}
This completes the proof. 
\end{proof}

\begin{lem}\label{lem:Sa}
Suppose that $r, l$ and $\nu_{1}, \ldots , \nu_{r}$ are positive integers satisfying 
$1\le \nu_{1}<\cdots <\nu_{r}\le l\le k$. 
For a subset $I=\{i_{1}, \ldots , i_{d}\} \, (i_{1}<\cdots <i_{d})$ of 
$\{1, 2, \ldots , r\}$, set 
\begin{align*}
c(I):=(\beta/q \alpha)^{d}q^{\sum_{p=1}^{d}i_{p}}, \quad 
\widehat{Q}_{I}:=\widehat{X}_{l-d+1}\cdots \widehat{X}_{l}
(\widehat{T}_{l-d}\cdots \widehat{T}_{\nu_{i_{1}}})\cdots 
(\widehat{T}_{l-1}\cdots \widehat{T}_{\nu_{i_{d}}}). 
\end{align*}
For $1\le p \le r$ define the operator $\widehat{S}_{p}: F(L) \to F(L)$ by 
\begin{align}\label{eq:Sa}
\widehat{S}_{p}:=\sum_{I \subset \{p, p+1, \ldots , r\}}\alpha^{r-p+1}c(I)
\prod_{i=\nu_{p}}^{l-|I|}(q^{p-1}+\delta[p-1]\widehat{V}_{i})\cdot \widehat{Q}_{I}.   
\end{align}
If $x \in L$ satisfies $a_{i}(x)=0$ for $\nu_{1}\le i<l$, 
then it holds that 
\begin{align*}
(\widehat{S}_{p}(g))(x)=(1+\beta\gamma[p-1])
\left(\widehat{S}_{p+1}
\widehat{V}_{\nu_{p}}\!\!\!\prod_{\nu_{p}<i<\nu_{p+1}}\!\!\!(q^{p}+\delta[p]\widehat{V}_{i})\,(g)
\right)(x) 
\end{align*}
for $1\le p \le r$ and $g \in F(L)$, where $\nu_{r+1}=l+1$ and $\widehat{S}_{r+1}$ is the identity operator. 
\end{lem}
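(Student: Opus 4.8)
The plan is to split the sum defining $\widehat{S}_p$ over subsets $I\subset\{p,p+1,\ldots,r\}$ according to whether $p\in I$, and to pair the two halves. Writing $I=I'$ when $p\notin I$ and $I=\{p\}\cup I'$ when $p\in I$, with $I'\subset\{p+1,\ldots,r\}$ of size $d'$, I would first record two bookkeeping identities. From the definition of $c$ one gets $c(\{p\}\cup I')=(\beta q^{p-1}/\alpha)\,c(I')$, and from the definition of $\widehat{Q}$, using that the extra block $\widehat{T}_{l-d'-1}\cdots\widehat{T}_{\nu_p}$ commutes with the extra shift factors $\widehat{X}_{l-d'+1}\cdots\widehat{X}_l$ (their indices are disjoint), one gets $\widehat{Q}_{\{p\}\cup I'}=\widehat{X}_{l-d'}\,(\widehat{T}_{l-d'-1}\cdots\widehat{T}_{\nu_p})\,\widehat{Q}_{I'}$. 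Thus for each fixed $I'$ the $p\notin I$ term and the $p\in I$ term both carry $\widehat{Q}_{I'}$ as a common right factor, and I would match them, $I'$ by $I'$, against the corresponding term of the right-hand side.

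Next I would strip off $\widehat{Q}_{I'}$. Since the desired equality has the shape $(P\,\widehat{Q}_{I'}g)(x)=(P'\,\widehat{Q}_{I'}g)(x)$, and since $(Ph)(x)=(P'h)(x)$ for all $h$ already forces this (take $h=\widehat{Q}_{I'}g$), it suffices to prove the operator relation without $\widehat{Q}_{I'}$. On the right the factor $\widehat{V}_{\nu_p}\prod_{\nu_p<i<\nu_{p+1}}(q^p+\delta[p]\widehat{V}_i)$ is a polynomial in the $\widehat{X}$'s of index $<\nu_{p+1}$, hence commutes with $\widehat{Q}_{I'}$; absorbing it into the product of $\widehat{S}_{p+1}$ and writing $j:=l-d'$ and $A_i:=q^{p-1}+\delta[p-1]\widehat{V}_i$, the claim collapses to the kernel identity, valid at $x$ with $a_i(x)=0$ for $\nu_p\le i<j$,
\begin{align*}
\alpha\prod_{i=\nu_p}^{j}A_i+\beta q^{p-1}\prod_{i=\nu_p}^{j-1}A_i\,\widehat{X}_j\widehat{T}_{j-1}\cdots\widehat{T}_{\nu_p}\equiv(1+\beta\gamma[p-1])\,\widehat{V}_{\nu_p}\prod_{i=\nu_p+1}^{j}(q^p+\delta[p]\widehat{V}_i),
\end{align*}
where $\equiv$ denotes equality after evaluation at $x$. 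This is to be established for every $j$ with $\nu_p\le j\le l$.

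I would prove the kernel identity by induction on the chain length $j-\nu_p$. For the base case $j=\nu_p$ the chain is empty and, using $\beta\widehat{X}_{\nu_p}=\widehat{V}_{\nu_p}-\alpha$, the left-hand side becomes $(q^{p-1}+\alpha\delta[p-1])\widehat{V}_{\nu_p}$, which equals the right-hand side because $1+\beta\gamma[p-1]=q^{p-1}+\alpha\delta[p-1]$; this in turn is immediate from $q=1+\beta\gamma-\alpha\delta$ together with $[p-1](q-1)=q^{p-1}-1$. The engine of the inductive step is Lemma \ref{lem:T-turns-to-identity} and the relations behind it, namely the transport rule $A_i\widehat{T}_i=\widehat{T}_iA_{i+1}-\delta[p-1]\widehat{V}_i(q-1+\delta\widehat{V}_{i+1})$ (a consequence of $\widehat{V}_i\widehat{T}_i=\widehat{T}_i\widehat{V}_{i+1}-\widehat{V}_i(q-1+\delta\widehat{V}_{i+1})$) and the defining relation $\widehat{X}_{i+1}\widehat{T}_i=\widehat{T}_i\widehat{X}_i+\widehat{V}_i(\gamma+\delta\widehat{X}_{i+1})$. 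Using the transport rule I would push each $A_i$ rightward through the chain (it meets $\widehat{T}_i$, becomes $\widehat{T}_iA_{i+1}$ plus a correction, and $A_{i+1}$ then commutes past the lower $\widehat{T}$'s), and use the $\widehat{X}\widehat{T}$-relation to expose the outer $\widehat{T}$ as the leftmost operator, at which stage $a_i(x)=0$ turns it into the identity via $(\widehat{T}_if)(x)=f(x)$; the residue is then a polynomial in the commuting $\widehat{V}$'s which I would simplify against the right-hand side using $[p-1]+q^{p-1}=[p]$.

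The main obstacle is exactly this telescoping. The factors $A_i$ and the shift $\widehat{X}_j$ do not commute with the $\widehat{T}_i$'s, and the identity-at-$x$ property of $\widehat{T}_i$ may be invoked only after $\widehat{T}_i$ has been moved to the extreme left. A naive induction that peels off $\widehat{T}_{\nu_p}$ and quotes the inductive hypothesis fails, because the leftover factor $A_{\nu_p}$ shifts the evaluation point off the wall $\{a_i=0\}$ on which the hypothesis lives; likewise one cannot discard $(1-\widehat{T}_{\nu_p})$ while it sits to the right of an $A$-factor. The delicate point is therefore to carry out all reductions by genuine identities in $\mathcal{A}_k$, postponing every evaluation, and to check that after the single admissible evaluation the accumulated correction terms reassemble exactly into the stated right-hand side; the $q$-integer identities $1+\beta\gamma[p-1]=q^{p-1}+\alpha\delta[p-1]$ and $[p-1]+q^{p-1}=[p]$ are what make this collapse exact.
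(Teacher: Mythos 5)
Your reduction is correct and coincides exactly with how the paper's proof begins: the split of the sum over $I$ according to whether $p\in I$, the bookkeeping identities $c(\{p\}\cup I')=(\beta q^{p-1}/\alpha)c(I')$ and $\widehat{Q}_{\{p\}\cup I'}=\widehat{X}_{l-d'}(\widehat{T}_{l-d'-1}\cdots\widehat{T}_{\nu_p})\widehat{Q}_{I'}$, the stripping of the common right factor $\widehat{Q}_{I'}$ (and the commutation of $\widehat{V}_{\nu_p}\prod_{\nu_p<i<\nu_{p+1}}(q^p+\delta[p]\widehat{V}_i)$ past it), the resulting kernel identity, and the base case via $q^{p-1}+\alpha\delta[p-1]=1+\beta\gamma[p-1]$ are all exactly what the paper does. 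The genuine gap is the inductive step, which is where the whole content of the lemma lives. As described, it does not go through: with your notation $A_i=q^{p-1}+\delta[p-1]\widehat{V}_i$ and $j=l-d'$, for a chain of length at least two, after you expose the outermost $\widehat{T}_{j-1}$ and evaluate it away, the residue is \emph{not} a polynomial in the commuting $\widehat{V}$'s. The inner $\widehat{T}_{j-2},\ldots,\widehat{T}_{\nu_p}$ survive, both in the main term and inside every correction term $-\delta[p-1]\widehat{V}_i(q-1+\delta\widehat{V}_{i+1})$ produced by the transport rule, and they sit to the right of shift factors such as $\widehat{X}_{i+1}$ and $\widehat{V}_{i+1}$ that do not commute with them. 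Under your own (correctly cautious) rule that $(\widehat{T}_if)(x)=f(x)$ may be invoked only for an extreme-left $\widehat{T}_i$, none of these surviving $\widehat{T}$'s can ever be evaluated without further relation-juggling that generates still more terms; and you never identify a shorter-chain statement to which the induction hypothesis would apply, since the generated terms are not of kernel-identity shape. So the claim that "the accumulated correction terms reassemble exactly into the stated right-hand side" is precisely the unproved heart of the lemma, not a verifiable afterthought.

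The paper closes exactly this gap with two organizing identities, both absent from your plan. First, instead of pushing the $p\in I$ term through the chain, it inserts the chain into the $T$-free term for free: since $a_i(x)=0$ for $\nu_p\le i<j$, one has $\bigl(\alpha\prod_{i=\nu_p}^{j}A_i\,h\bigr)(x)=\bigl(\widehat{T}_{j-1}\cdots\widehat{T}_{\nu_p}\,\alpha\prod_{i=\nu_p}^{j}A_i\,h\bigr)(x)$, and $\prod_{i=\nu_p}^{j}A_i$, being symmetric in $\widehat{X}_{\nu_p},\ldots,\widehat{X}_j$, commutes with each of these $\widehat{T}_i$; hence both halves of the kernel identity share the right factor $\widehat{T}_{j-1}\cdots\widehat{T}_{\nu_p}$, and your base-case identity, applied at index $j$ rather than $\nu_p$, merges them into $(1+\beta\gamma[p-1])\prod_{i=\nu_p}^{j-1}A_i\,\widehat{V}_j\,\widehat{T}_{j-1}\cdots\widehat{T}_{\nu_p}$. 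Second, the exchange identity in $\mathcal{A}_k$, obtained by iterating $\widehat{V}_{i+1}\widehat{T}_i=(\widehat{T}_i+\beta(\gamma+\delta\widehat{X}_{i+1}))\widehat{V}_i$, rewrites $\widehat{V}_j\widehat{T}_{j-1}\cdots\widehat{T}_{\nu_p}$ as $(\widehat{T}_{j-1}+\beta(\gamma+\delta\widehat{X}_j))\cdots(\widehat{T}_{\nu_p}+\beta(\gamma+\delta\widehat{X}_{\nu_p+1}))\widehat{V}_{\nu_p}$; each $A_i$ then slides down to its partner, and the expression becomes the ordered product of exactly the blocks $A_i(\widehat{T}_i+\beta(\gamma+\delta\widehat{X}_{i+1}))$ treated by Lemma \ref{lem:T-turns-to-identity}, applied repeatedly from the left. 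That repeated application is legitimate---and this is the observation that should replace your too-restrictive extreme-left rule---because the already-converted prefix factors $(q^p+\delta[p]\widehat{V}_{i'})$ involve only coordinates $i'>i+1$, so they displace the evaluation point only in directions that leave the wall condition $a_i=0$ intact. With these two moves the left-hand side collapses directly to $(1+\beta\gamma[p-1])\widehat{V}_{\nu_p}\prod_{i=\nu_p+1}^{j}(q^p+\delta[p]\widehat{V}_i)$, with no induction and no uncontrolled correction terms.
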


\begin{proof}
We use the same symbol $\equiv$ defined in the proof of Lemma \ref{lem:T-turns-to-identity}. 
Decompose the sum $(\widehat{S}_{p}(g))(x)$ into the two parts with $p\in I$ and $p\not\in I$. 
We set $J=I\setminus\{p\}$ and $J=I$ for the first part and the second, respectively. 
Each term in the sum of the second part is invariant 
under the action of $\widehat{T}_{l-|J|-1}\cdots \widehat{T}_{\nu_{p}}$ 
because $a_{i}(x)=0$ for $\nu_{p}\le i<l-|J|$. 
Since $\widehat{T}_{i} \, (\nu_{p}\le i <l-|J|)$ commutes with 
any symmetric polynomial in $\widehat{V}_{i} \, (\nu_{p}\le i \le l-|J|)$, we have 
\begin{align*}
\widehat{S}_{p}&\equiv\sum_{J \subset \{p+1, \ldots , r\}}
\alpha^{r-p}c(J) \prod_{i=\nu_{p}}^{l-|J|-1}(q^{p-1}+\delta[p-1]\widehat{V}_{i})\\ 
&\qquad \qquad {}\times
\left\{\beta q^{p-1}\widehat{X}_{l-|J|}+\alpha(q^{p-1}+\delta[p-1]\widehat{V}_{l-|J|})\right\}
(\widehat{T}_{l-|J|-1}\cdots \widehat{T}_{\nu_{p}})
\widehat{Q}_{J}.   
\end{align*} 
It holds that 
\begin{align*}
\beta q^{p-1}\widehat{X}_{l-|J|}+\alpha(q^{p-1}+\delta[p-1]\widehat{V}_{l-|J|})=
(1+\beta\gamma[p-1])\widehat{V}_{l-|J|}
\end{align*}
and 
\begin{align*}
\widehat{V}_{l-|J|}\widehat{T}_{l-|J|-1}\cdots \widehat{T}_{\nu_{p}}=
\prod_{\nu_{p}\le i <l-|J|}^{\curvearrowleft}
(\widehat{T}_{i}+\beta(\gamma+\delta \widehat{X}_{i+1}))\cdot 
\widehat{V}_{\nu_{p}},
\end{align*}
where $\prod_{m\le i<m'}^{\curvearrowleft}A_{i}:=A_{m'-1}A_{m'-2}\cdots A_{m}$ is an ordered product. 
Now use Lemma \ref{lem:T-turns-to-identity} repeatedly. 
Since $\prod_{\nu_{p}<i<\nu_{p+1}}(q^{p}+\delta[p]\widehat{V}_{i})$ and $\widehat{V}_{\nu_{p}}$ commute 
with $\widehat{Q}_{J}$ for any $J \subset \{p+1, \ldots , r\}$, 
we obtain the desired equality. 
\end{proof}

\begin{prop}\label{prop:Xproduct-to-V}
Suppose that $1\le t_{1}<\cdots <t_{r}\le k$ and 
that $x \in L$ satisfies $\epsilon_{t_{1}}(x)=\cdots =\epsilon_{t_{r}}(x)$. 
Then for any $f \in F(L)$ it holds that 
\begin{align*}
& 
\left(\prod_{p=1}^{r}(\alpha+\beta q^{p-1}\widehat{X}_{t_{p}})G(f)\right)(x) \\ 
&=\prod_{p=1}^{r}(1+\beta\gamma[p-1])
\left\{\prod_{p=1}^{r}\left(\widehat{V}_{\sigma_{x}(t_{p})}
\!\!\!\prod_{\sigma_{x}(t_{p})<i<\sigma_{x}(t_{p+1})}\!\!\!
(q^{p}+\delta[p]\widehat{V}_{i})\right)\widehat{T}_{w_{x}}(f)
\right\}(w_{x}x), 
\end{align*}
where $\sigma_{x}(t_{r+1})=\sigma_{x}(t_{1})+d_{t_{1}}^{+}(x)+1$. 
\end{prop}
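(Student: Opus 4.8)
The plan is to expand the product on the left into a sum over subsets, apply Proposition \ref{prop:G-shift} to each term, recognize the resulting expression as the operator $\widehat{S}_{1}$ of Lemma \ref{lem:Sa} acting on $\widehat{T}_{w_{x}}(f)$, and then unwind the recursion of that lemma to obtain the right-hand side.

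First I would expand, using that the $\widehat{X}_{i}$ commute,
\begin{align*}
\prod_{p=1}^{r}(\alpha+\beta q^{p-1}\widehat{X}_{t_{p}})
=\sum_{I\subset\{1,\ldots,r\}}\alpha^{r-|I|}\beta^{|I|}q^{\sum_{p\in I}(p-1)}\prod_{p\in I}\widehat{X}_{t_{p}}.
\end{align*}
For a subset $I=\{i_{1}<\cdots<i_{d}\}$ the subsequence $t_{i_{1}}<\cdots<t_{i_{d}}$ still satisfies $\epsilon_{t_{i_{1}}}(x)=\cdots=\epsilon_{t_{i_{d}}}(x)$, so Proposition \ref{prop:G-shift} applies to it. By Lemma \ref{lem:sigma-ordering}(1) the quantity $\sigma_{x}(t_{p})+d_{t_{p}}^{+}(x)$ is independent of $p$, so the integer $l$ produced by Proposition \ref{prop:G-shift} is the same $l=\sigma_{x}(t_{1})+d_{t_{1}}^{+}(x)$ for \emph{every} subset $I$. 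Setting $\nu_{p}:=\sigma_{x}(t_{p})$, the proposition then reads $\bigl(\prod_{p\in I}\widehat{X}_{t_{p}}\,G(f)\bigr)(x)=(\widehat{Q}_{I}\widehat{T}_{w_{x}}(f))(w_{x}x)$ exactly in the notation of Lemma \ref{lem:Sa}.

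Next I would match the scalar coefficients. Since $\alpha^{r}c(I)=\alpha^{r}(\beta/q\alpha)^{|I|}q^{\sum_{p}i_{p}}=\alpha^{r-|I|}\beta^{|I|}q^{\sum_{p}(i_{p}-1)}$ and $\sum_{p}(i_{p}-1)=\sum_{p\in I}(p-1)$, the expansion coefficient above equals precisely $\alpha^{r}c(I)$. Because $[0]=0$, the factor $\prod_{i=\nu_{1}}^{l-|I|}(q^{0}+\delta[0]\widehat{V}_{i})$ appearing in $\widehat{S}_{1}$ is the identity, so summing over $I$ identifies the left-hand side of the proposition with $(\widehat{S}_{1}\widehat{T}_{w_{x}}(f))(w_{x}x)$, taking $\nu_{p}=\sigma_{x}(t_{p})$ and $g=\widehat{T}_{w_{x}}(f)$. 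The hypothesis of Lemma \ref{lem:Sa}, namely $a_{i}(w_{x}x)=0$ for $\nu_{1}\le i<l$, holds by Lemma \ref{lem:sigma-ordering}(2). I would then apply the recursion of Lemma \ref{lem:Sa} $r$ times, reducing $\widehat{S}_{1}\to\widehat{S}_{2}\to\cdots\to\widehat{S}_{r+1}=\mathrm{id}$; at each step the factor $(1+\beta\gamma[p-1])$ is extracted and the operator $\widehat{V}_{\nu_{p}}\prod_{\nu_{p}<i<\nu_{p+1}}(q^{p}+\delta[p]\widehat{V}_{i})$ is produced, with $\nu_{r+1}=l+1=\sigma_{x}(t_{1})+d_{t_{1}}^{+}(x)+1=\sigma_{x}(t_{r+1})$. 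Collecting these yields exactly the right-hand side.

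The main obstacle will be the index bookkeeping rather than any new idea: one must confirm that Proposition \ref{prop:G-shift} applied to each subsequence $t_{i_{1}}<\cdots<t_{i_{d}}$ genuinely produces the \emph{same} upper index $l$ (resting on Lemma \ref{lem:sigma-ordering}(1)) and that the two expressions $\widehat{Q}_{I}$ defined there agree, and then verify that the combinatorial coefficients of the expanded product coincide with the $c(I)$ of Lemma \ref{lem:Sa}. Once these identifications are secured, the proposition follows mechanically from the telescoping recursion of Lemma \ref{lem:Sa}.
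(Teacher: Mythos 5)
Your proposal is correct and follows essentially the same route as the paper's own proof: identify the expanded left-hand side with $(\widehat{S}_{1}\widehat{T}_{w_{x}}(f))(w_{x}x)$ via Proposition \ref{prop:G-shift}, check the hypothesis $a_{i}(w_{x}x)=0$ for $\sigma_{x}(t_{1})\le i<l$ using Lemma \ref{lem:sigma-ordering}, and then telescope the recursion of Lemma \ref{lem:Sa}. The paper states this in three lines and leaves the subset expansion, the equality of the index $l$ across subsets, and the matching of the coefficients $\alpha^{r}c(I)$ implicit; your write-up simply makes those bookkeeping steps explicit, and does so correctly.
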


\begin{proof}
Proposition \ref{prop:G-shift} implies that 
the left hand side is equal to $(\widehat{S}_{1}\widehat{T}_{w_{x}}(f))(w_{x}x)$, where 
$\widehat{S}_{1}$ is the operator defined by \eqref{eq:Sa} with 
$\nu_{p}=\sigma_{x}(t_{p}) \, (1 \le p \le r)$ and $l=\sigma_{x}(j_{1})+d_{j_{1}}^{+}(x)$.  
Since we have  
$a_{i}(w_{x}x)=0$ for $\sigma_{x}(t_{1})\le i<\sigma_{x}(t_{1})+d_{t_{1}}^{+}(x)$ 
because of Lemma \ref{lem:sigma-ordering}, 
we can apply Lemma \ref{lem:Sa} repeatedly and get the above formula. 
\end{proof}

Now we are ready to prove Theorem \ref{thm:main}. 
Fix $x \in L$ and decompose $\{1, 2, \ldots , k\}=\sqcup_{n=1}^{N}J_{n}^{x}$ 
as described in Lemma \ref{lem:H-decompose}. 
Then each set $\sigma_{x}(J_{n}^{x})$ is an interval of successive integers. 
Take one component $J_{n}^{x}$ and 
set $l^{-}=\min{\sigma_{x}(J_{n}^{x})}$ and $l^{+}=\max{\sigma_{x}(J_{n}^{x})}$. 
{}From Proposition \ref{prop:H-rewrite} and Proposition \ref{prop:Xproduct-to-V}, we see that 
\begin{align*}
& 
(H_{J_{n}^{x}}G(f))(x)=-\frac{\alpha}{\beta}m\,\widehat{T}_{w_{x}}(f)(w_{x}x) \\ 
&+\frac{1}{\beta}
\sum_{r=1}^{m}(-\delta)^{r-1}[r-1]!  \\ 
& \qquad {}\times 
\sum_{l^{-} \le \nu_{1}<\cdots <\nu_{r}\le l^{+}}\!\!\!q^{\sum_{p=1}^{m}(\nu_{p}-l^{+})}
\left(
\prod_{p=1}^{r}(\widehat{V}_{\nu_{p}}\!\!\!
\prod_{\nu_{p}<i<\nu_{p+1}}(q^{p}+\delta[p]\widehat{V}_{i}))\cdot\widehat{T}_{w_{x}}(f)\right)
(w_{x}x), 
\end{align*}
where $\nu_{r+1}=l^{+}+1$.  
Now use the polynomial identity 
\begin{align*}
\sum_{r=1}^{m}(-\delta)^{r-1}[r-1]!\sum_{1\le c_{1}<\cdots <c_{r}\le m}
q^{\sum_{p=1}^{r}(c_{p}-m)}\prod_{p=1}^{r}\left( z_{c_{a}}
\!\!\!\prod_{c_{a}<i<c_{a+1}}\!\!\!(q^{p}+\delta [p] z_{i})
\right)=\sum_{i=1}^{m}z_{i},  
\end{align*} 
where $z_{1}, \ldots , z_{m}$ are commutative indeterminates and $c_{r+1}=m+1$. 
Finally we find that 
\begin{align*}
\left(H_{J_{n}^{x}}G(f)\right)(x)=
(\sum_{j \in J_{n}^{x}}\widehat{X}_{\sigma_{x}(j)}\widehat{T}_{w_{x}}(f))(w_{x}x).   
\end{align*} 
{}From \eqref{eq:H-to-HJ} we have 
\begin{align*}
(HG(f))(x)=(\sum_{n=1}^{N}\sum_{j \in J_{n}^{x}}\widehat{X}_{\sigma_{x}(j)}\widehat{T}_{w_{x}}(f))(w_{x}x)=
(\sum_{j=1}^{k}\widehat{X}_{j}\widehat{T}_{w_{x}}(f))(w_{x}x).  
\end{align*}
Since $\sum_{j=1}^{k}\widehat{X}_{j}$ commutes with $\widehat{T}_{i} \, (1\le i<k)$, 
it is equal to 
\begin{align*}
(\widehat{T}_{w_{x}}\sum_{j=1}^{k}\widehat{X}_{j}(f))(w_{x}x)=G(\sum_{j=1}^{k}\widehat{X}_{j}f)(x). 
\end{align*}
This completes the proof of Theorem \ref{thm:main}.

\subsection{Bethe wave functions}

Using Theorem \ref{thm:main} we can construct symmetric eigenfunctions of $H$, 
which we call the \textit{Bethe wave functions}. 
Set 
\begin{align*}
L_{+}:=L\cap \overline{C_{+}}.  
\end{align*} 

\begin{prop}\label{prop:Bethe-function}
For a tuple $p=(p_{1}, \ldots , p_{k})$ of distinct complex parameters, define the function 
$\Phi_{p} \in F(L)^{W}$ by 
\begin{align}\label{eq:Bethe-function}
\Phi_{p}|_{L_{+}}=\sum_{\sigma \in \mathfrak{S}_{k}}\prod_{1\le i<j \le k}
\left(1+\frac{(\alpha+\beta p_{\sigma(j)})(\gamma+\delta p_{\sigma(i)})}{p_{\sigma(j)}-p_{\sigma(i)}}\right)
\prod_{i=1}^{k}p_{\sigma(i)}^{-\epsilon_{i}}. 
\end{align}
Then $\Phi_{p}$ is an eigenfunction of the discrete Hamiltonian $H$ with eigenvalue $\sum_{i=1}^{k}p_{i}$. 
\end{prop}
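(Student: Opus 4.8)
The plan is to realize $\Phi_{p}$ as $G(f)$ for a suitably chosen eigenfunction $f$ of the operator $\sum_{i=1}^{k}\widehat{X}_{i}$ and then invoke Theorem \ref{thm:main}. First I record that the ``plane waves'' $\psi_{\sigma}(x):=\prod_{i=1}^{k}p_{\sigma(i)}^{-\epsilon_{i}(x)}$ ($\sigma\in\mathfrak{S}_{k}$) satisfy $\widehat{X}_{i}\psi_{\sigma}=p_{\sigma(i)}\psi_{\sigma}$, since $(\widehat{X}_{i}\psi_{\sigma})(x)=\psi_{\sigma}(x-v_{i})=p_{\sigma(i)}\psi_{\sigma}(x)$. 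Hence every $\psi_{\sigma}$, and every linear combination of them, is an eigenfunction of $\sum_{i=1}^{k}\widehat{X}_{i}$ with the common eigenvalue $\sum_{i=1}^{k}p_{i}$. I therefore set $f:=\sum_{\sigma\in\mathfrak{S}_{k}}c_{\sigma}\psi_{\sigma}$ with $c_{\sigma}:=\prod_{1\le i<j\le k}\bigl(1+\frac{(\alpha+\beta p_{\sigma(j)})(\gamma+\delta p_{\sigma(i)})}{p_{\sigma(j)}-p_{\sigma(i)}}\bigr)$, which is well defined because the $p_{i}$ are distinct; by construction $f|_{L_{+}}$ equals the right-hand side of \eqref{eq:Bethe-function}.

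Next, for $x\in L_{+}=L\cap\overline{C_{+}}$ one has $I(x)=\emptyset$, so $w_{x}$ is the identity element and $G(f)(x)=f(x)$; thus $G(f)|_{L_{+}}=\Phi_{p}|_{L_{+}}$. Since $\Phi_{p}$ is the $W$-invariant extension of this restriction, it suffices to prove that $G(f)$ is itself $W$-invariant, for then the two $W$-invariant functions agree on the fundamental domain $L_{+}$ and hence everywhere, and Theorem \ref{thm:main} makes $G(f)$ an eigenfunction of $H$ with eigenvalue $\sum_{i=1}^{k}p_{i}$. The mechanism I will use is the identity $\widehat{T}_{i}f=f$ for all $1\le i<k$: granting it, $\widehat{T}_{w}f=f$ for every $w\in W$, so $G(f)(x)=(\widehat{T}_{w_{x}}f)(w_{x}x)=f(w_{x}x)$; since $x\mapsto w_{x}x$ sends $x$ to the unique point of $Wx\cap\overline{C_{+}}$ it is constant on $W$-orbits, whence $G(f)$ is $W$-invariant.

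The crux is therefore the Bethe-ansatz identity $\widehat{T}_{i}f=f$, which I would prove by computing the action of $\widehat{T}_{i}$ on a single plane wave from the explicit piecewise formula. A direct evaluation, summing the geometric series in the three cases $a_{i}(x)>0$, $a_{i}(x)=0$, $a_{i}(x)<0$, should show that $\widehat{T}_{i}$ preserves each two-dimensional space $\mathrm{span}(\psi_{\sigma},\psi_{\sigma s_{i}})$ and acts there by $\widehat{T}_{i}\psi_{\sigma}=B(a,b)\psi_{\sigma}+A(a,b)\psi_{\sigma s_{i}}$, where $a=p_{\sigma(i)}$, $b=p_{\sigma(i+1)}$, $A(a,b)=1+\frac{(\alpha+\beta a)(\gamma+\delta b)}{a-b}$ and $B=1-A$. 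Given this, restricting to the pair $\{\sigma,\sigma s_{i}\}$ reduces $\widehat{T}_{i}f=f$ to the single scalar relation $c_{\sigma s_{i}}/c_{\sigma}=A(a,b)/A(b,a)$; a short cancellation in the product defining $c_{\sigma}$ (every factor except the $(i,i+1)$ one is either unchanged or merely swapped between the pairs $(l,i),(l,i+1)$ and $(i,m),(i+1,m)$) identifies this ratio with $[1+\frac{(\alpha+\beta a)(\gamma+\delta b)}{a-b}]/[1+\frac{(\alpha+\beta b)(\gamma+\delta a)}{b-a}]$, as required. I expect the explicit computation of $\widehat{T}_{i}\psi_{\sigma}$ to be the main obstacle: one must check that the five sums in the definition of $\widehat{T}_{i}$ collapse to the clean two-term form with the stated $A,B$, and that the three sign-cases of $a_{i}(x)$ are mutually consistent, the relation $A+B=1$ being precisely the check that forces agreement at $a_{i}(x)=0$.
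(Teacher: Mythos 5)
Your proposal is correct and follows essentially the same route as the paper: there, too, $\Phi_{p}$ is realized as $G(h_{p})$ for the Bethe linear combination $h_{p}$ of plane waves extended to all of $L$, the identity $\widehat{T}_{i}h_{p}=h_{p}$ is derived from the two-term action of $\widehat{T}_{i}$ on plane waves, and Theorem \ref{thm:main} then gives the eigenfunction property with eigenvalue $\sum_{i=1}^{k}p_{i}$. The one step you defer is indeed fine: summing the geometric series in the piecewise definition of $\widehat{T}_{i}$ yields exactly your two-term form with $A(a,b)=1+\frac{(\alpha+\beta a)(\gamma+\delta b)}{a-b}$ and $B=1-A$ (consistently in all three cases $a_{i}(x)>0$, $=0$, $<0$), after which the coefficient cancellation $c_{\sigma s_{i}}/c_{\sigma}=A(a,b)/A(b,a)$ closes the argument just as you describe.
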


\begin{proof}
Denote by $h_{p}$ the function defined by the right hand side of \eqref{eq:Bethe-function} 
on the whole lattice $L$. 
For $\lambda \in V^{*}$ define the function $e^{\lambda} \in F(L)$ by $e^{\lambda}(x):=e^{\lambda(x)}$. 
Then it holds that 
\begin{align*}
\widehat{T}_{i}e^{\lambda}=\left( 
s_{i}+\frac{(\alpha+\beta e^{\lambda(v_{i})})(\gamma+\delta e^{\lambda(v_{i+1})})}
{e^{\lambda(v_{i})}-e^{\lambda(v_{i+1})}}(s_{i}-1)
\right)e^{\lambda}
\end{align*} 
for $1\le i<k$. 
It implies that $\widehat{T}_{i}h_{p}=h_{p}$ for any $1\le i<k$. 
Hence we have 
\begin{align*}
G(h_{p})(x)=(\widehat{T}_{w_{x}} h_{p})(w_{x}x)=h_{p}(w_{x}x)=\Phi_{p}(w_{x}x)=\Phi_{p}(x),  
\end{align*}
that is $G(h_{p})=\Phi_{p}$. 
Since $h_{p}$ is an eigenfunction of $\sum_{j=1}^{k}\widehat{X}_{j}$ with eigenvalue $\sum_{j=1}^{k}p_{j}$, 
it holds that $H\Phi_{p}=(\sum_{j=1}^{k}p_{j})\Phi_{p}$ because of Theorem \ref{thm:main}. 
\end{proof}


\section{Construction of Integrable Stochastic Particle System}\label{sec:4}

Hereafter we identify the space of symmetric functions $F(L)^{W}$ with 
the space of functions on $L_{+}$. 
Denote it by $F(L_{+})$.  
A linear operator $Q$ on $F(L_{+})$ is said to be \textit{stochastic} if 
it is given in the form 
$(Qf)(x)=\sum_{y\not=x}c(y, x)(f(y)-f(x))$ where $c(y, x)\ge 0$. 

A stochastic operator on $F(L_{+})$ determines 
a stochastic one-dimensional particle system with continuous time as follows.  
Denote by $S_{k}$ the set of configurations of $k$ bosonic particles 
on the one-dimensional lattice $\mathbb{Z}$. 
For $x=\sum_{j=1}^{k}m_{j}v_{j}$, 
denote by $\nu(x)$ the configuration of $k$ particles on $\mathbb{Z}$ such that 
the particles are on the sites  $m_{1}, \ldots , m_{k}$. 
For example, if $k=6$ and $x=3v_{1}+3v_{2}+3v_{3}+v_{4}-2v_{5}-2v_{6}$, 
$\nu(x)$ is the configuration where  
three particles are located on the site $3$, one particle on the site $1$ 
and two particles on the site $-2$. 
Then the map $\nu: L_{+} \to S_{k}$ is bijection. 
We identify $F(L_{+})$ and the set of functions on $S_{k}$ through the map $\nu$. 
Then the stochastic operator $Q$ 
is regarded as the backward generator of the stochastic process on $S_{k}$ with continuous time,   
where $c(y, x)$ gives the rate at which the state $\nu(x)$ changes to $\nu(y)$. 

Now we give a sufficient condition for $H|_{F(L)^{W}}=H|_{F(L_{+})}$ to be 
stochastic up to constant. 

\begin{prop}\label{prop:stochastic}
Let $\lambda$ be a constant. 
The operator $\tilde{H}:=(H+\lambda)|_{F(L_{+})}$ 
is stochastic only if $\lambda=-k$ and $(\alpha+\beta)(\gamma+\delta)=0$.  
\end{prop}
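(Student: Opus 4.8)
The plan is to extract both necessary conditions from the single requirement that $\tilde H$ annihilate the constant function. If $\tilde H$ is stochastic then $(\tilde H f)(x)=\sum_{y\neq x}c(y,x)(f(y)-f(x))$ kills every constant, so applying it to $\mathbf{1}\in F(L)^{W}$ yields $(H\mathbf{1}+\lambda\mathbf{1})(x)=0$, i.e. $(H\mathbf{1})(x)=-\lambda$ for all $x\in L_{+}$. Since $(\widehat{X}_{i}\mathbf{1})(x)=1$, the value $(H_{J}\mathbf{1})$ depends on $J$ only through its cardinality $m=|J|$ (all the operators $\widehat{X}_{j_{p}}$ collapse to $1$ inside the elementary symmetric polynomials); write $h(m):=(H_{J}\mathbf{1})$. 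By Lemma \ref{lem:H-decompose} this gives $(H\mathbf{1})(x)=\sum_{n=1}^{N}h(|J_{n}^{x}|)$, so the left-hand side depends only on the multiset of block sizes of the partition of $\{1,\dots,k\}$ determined by the coincidences among the coordinates of $x$.

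First I would compute $h(1)$ and $h(2)$ directly from the defining expression \eqref{eq:def-HJ} rather than from Proposition \ref{prop:H-rewrite}, so as to avoid the spurious factor $1/\beta$ and treat $\beta=0$ uniformly. Using $[0]=0$ and $[1]=1$ one gets $h(1)=1$ and, after inserting $q=1+\beta\gamma-\alpha\delta$,
\[
h(2)=\frac{1+q-\alpha\gamma-\beta\delta}{1+\beta\gamma}=\frac{2+\beta\gamma-\alpha\delta-\alpha\gamma-\beta\delta}{1+\beta\gamma}.
\]
Then I evaluate the constancy constraint at two explicit points of $L_{+}$. At a strictly decreasing $x$ (all coordinates distinct) the partition consists of $k$ singletons, so $(H\mathbf{1})(x)=k\,h(1)=k$, forcing $-\lambda=k$, i.e. $\lambda=-k$. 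At a point with exactly one pair of equal coordinates (one block of size $2$ and the rest singletons), which exists since $k\ge 2$, we obtain $(H\mathbf{1})(x)=(k-2)h(1)+h(2)=k+(h(2)-2)$; equating this to $-\lambda=k$ forces $h(2)=2$.

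The crux is the factorization of $h(2)-2$. A direct simplification, again using $q=1+\beta\gamma-\alpha\delta$, gives
\[
h(2)-2=\frac{-(\alpha\gamma+\alpha\delta+\beta\gamma+\beta\delta)}{1+\beta\gamma}=\frac{-(\alpha+\beta)(\gamma+\delta)}{1+\beta\gamma},
\]
and since $1+\beta\gamma=1+\beta\gamma[1]\neq 0$ by the standing assumption, $h(2)=2$ is equivalent to $(\alpha+\beta)(\gamma+\delta)=0$. This yields both necessary conditions. I expect the only mildly delicate step to be the algebraic simplification that produces the clean factor $(\alpha+\beta)(\gamma+\delta)$; the rest is bookkeeping with the block decomposition of Lemma \ref{lem:H-decompose}. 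I emphasize that the non-negativity of the rates $c(y,x)$ is not required for this \emph{only if} direction: both constraints already follow from the vanishing of the row sums $H\mathbf{1}=-\lambda\mathbf{1}$.
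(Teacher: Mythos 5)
Your proof is correct and takes essentially the same route as the paper: both arguments reduce the ``only if'' direction to the vanishing of row sums (i.e.\ $\tilde H$ annihilates constants), evaluate at a point with all coordinates distinct to force $\lambda=-k$, and at a point with exactly one doubled coordinate to force the vanishing of the size-two cluster defect, which in both cases is computed to be $-(\alpha+\beta)(\gamma+\delta)/(1+\beta\gamma)$. The paper packages this via the cluster coordinates and the constants $K_m$ of \eqref{eq:Htilde-decompose-coeff}, where $K_m$ coincides with your $h(m)-m$, so the two computations are identical in content.
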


To prove Proposition \ref{prop:stochastic}, 
we introduce the cluster coordinate 
of a point in $L_{+}$ following \cite{BCPS}.  
For $x \in L_{+}$ we determine a set of positive integers 
$M$ and $c_{i} \, (1 \le i \le M)$ by the property that 
$\sum_{i=1}^{M}c_{i}=k$, 
$\epsilon_{c_{1}}(x)>\epsilon_{c_{1}+c_{2}}(x)>\cdots >\epsilon_{c_{1}+\cdots +c_{M}}(x)$, 
and 
$\epsilon_{j}(x)=\epsilon_{c_{1}+\cdots +c_{i}}(x)$ 
if $c_{1}+\cdots +c_{i-1}<j\le c_{1}+\cdots +c_{i}$.  
We call the tuple $(c_{1}, \ldots , c_{M})$ the \textit{cluster coordinate} of $x \in L_{+}$. 
It describes the number of particles in each cluster  
in the configuration $\nu(x)$. 

In terms of the cluster coordinate 
the action of $H$ for $f \in F(L)^{W}$ is written as follows. 
Fix $x \in L_{+}$ and let $(c_{1}, \ldots , c_{M})$ be its cluster coordinate. 
Then 
\begin{align}\label{eq:Htilde-decompose}
& 
(Hf)(x)=\sum_{i=1}^{M}\Bigl\{ -\alpha\gamma\sum_{d=1}^{c_{i}-1}\frac{[d]}{1+\beta\gamma[d]}f(x) 
\\ &+\sum_{r=1}^{c_{i}}
\frac{(-\beta\delta)^{r-1}[r-1]!q^{-r(r-1)}}{\prod_{p=0}^{r-1}(1+\beta\gamma[c_{i}-1-p])}
e_{r}(1, q, \ldots , q^{c_{i}-1})f(x-\sum_{p=0}^{r-1}v_{c_{1}+\cdots +c_{i}-p})
\Bigr\}. 
\nonumber 
\end{align}
We use this formula in the proof below. 

\begin{proof}[Proof of Proposition \ref{prop:stochastic}]
If $c_{1}, \ldots , c_{M}$ are all equal to one, then 
then $(Hf)(x)=\sum_{j=1}^{k}f(x-v_{j})$. 
Hence $\lambda$ should be equal to $-k$ so that $\tilde{H}$ is stochastic. 

In general, set 
\begin{align}\label{eq:Htilde-decompose-coeff}
K_{m}&:=-m-\alpha\gamma\sum_{d=1}^{m-1}\frac{[d]}{1+\beta\gamma[d]} \\ 
&+\sum_{r=1}^{m}\frac{(-\beta\delta)^{r-1}[r-1]!\,q^{-r(r-1)}}{\prod_{p=0}^{r-1}(1+\beta\gamma[m-1-p])}
e_{r}(1, q, \ldots , q^{m-1}). 
\nonumber 
\end{align}
The operator $\tilde{H}$ is stochastic only if $\sum_{i=1}^{M}K_{c_{i}}=0$ 
for any tuple $(c_{1}, \ldots , c_{M})$ of positive integers such that $\sum_{i=1}^{M}c_{i}=k$. 
Since $K_{1}=0$ and $K_{2}=-(\alpha+\beta)(\gamma+\delta)/(1+\beta\gamma)$, 
we see that $(\alpha+\beta)(\gamma+\delta)$ should be zero. 
\end{proof}

Moreover, we have the following property. 

\begin{lem}
If $(\alpha+\beta)(\gamma+\delta)=0$, 
the constant $K_{m}$ defined by \eqref{eq:Htilde-decompose-coeff} is equal to zero for any $m\ge 1$. 
\end{lem}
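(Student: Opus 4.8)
The plan is to collapse $K_m$ to a single closed form and then exploit the factorization of the hypothesis $(\alpha+\beta)(\gamma+\delta)=0$. Comparing the definition \eqref{eq:Htilde-decompose-coeff} of $K_m$ with \eqref{eq:def-HJ}, one has $K_m=-m+(H_J\mathbf 1)$ for any $J$ with $|J|=m$, where $\mathbf 1$ is the constant function, so that each $\widehat X_{j_p}$ acts as the identity. Substituting $\widehat X_{j_p}=1$ into Proposition \ref{prop:H-rewrite} and multiplying through by $\beta$ then yields
\begin{align*}
\beta K_m=-m(\alpha+\beta)+\sum_{r=1}^{m}\frac{(-\delta)^{r-1}[r-1]!}{\prod_{p=1}^{r}(1+\beta\gamma[p-1])}\left(\prod_{p=1}^{r}(\alpha+\beta q^{p-1})\right)G_r,
\end{align*}
with $G_r:=\sum_{1\le b_1<\cdots<b_r\le m}q^{\sum_{p=1}^{r}(b_p-m)}$; here the $-m$ from the definition of $K_m$ and the constant $-\tfrac{\alpha}{\beta}m$ of Proposition \ref{prop:H-rewrite} combine into $-m(\alpha+\beta)$. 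Since \eqref{eq:Htilde-decompose-coeff} exhibits $K_m$ as a rational function regular wherever $1+\beta\gamma[n]\neq0$ (in particular along $\beta=0$), and $\{(\alpha+\beta)(\gamma+\delta)=0,\ \beta\neq0\}$ is dense in $\{(\alpha+\beta)(\gamma+\delta)=0\}$, it suffices to treat $\beta\neq0$ and extend by continuity.

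When $\alpha+\beta=0$ the statement is immediate: the $p=1$ factor of each product $\prod_{p=1}^{r}(\alpha+\beta q^{p-1})$ equals $\alpha+\beta=0$, so every $r\ge1$ term of the sum vanishes, while the leading term $-m(\alpha+\beta)$ vanishes as well. Hence $\beta K_m=0$, and $K_m=0$.

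The case $\gamma+\delta=0$ is the substantive one and the main obstacle. Here $\delta=-\gamma$ gives $q-1=(\alpha+\beta)\gamma$, and the key simplification is
\begin{align*}
\alpha+\beta q^{p-1}=(\alpha+\beta)+\beta(q^{p-1}-1)=(\alpha+\beta)+\beta(\alpha+\beta)\gamma[p-1]=(\alpha+\beta)(1+\beta\gamma[p-1]),
\end{align*}
so that $\prod_{p=1}^{r}(\alpha+\beta q^{p-1})$ exactly cancels the denominator $\prod_{p=1}^{r}(1+\beta\gamma[p-1])$. After this cancellation, using $(-\delta)^{r-1}=\gamma^{r-1}$ and $\gamma(\alpha+\beta)=q-1$, the displayed formula collapses to $\beta K_m=(\alpha+\beta)\bigl(-m+\sum_{r=1}^{m}(q-1)^{r-1}[r-1]!\,G_r\bigr)$, so it remains to prove the parameter-free $q$-identity
\begin{align*}
\sum_{r=1}^{m}(q-1)^{r-1}[r-1]!\,G_r=m.
\end{align*}

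To establish this last identity I would write $G_r=e_r(1,q^{-1},\dots,q^{-(m-1)})$ and set $Q:=q^{-1}$. Using $(q-1)^{r-1}[r-1]!=\prod_{j=1}^{r-1}(q^{j}-1)=Q^{-\binom{r}{2}}(Q;Q)_{r-1}$, where $(Q;Q)_n:=\prod_{j=1}^{n}(1-Q^{j})$, together with $e_r(1,Q,\dots,Q^{m-1})=Q^{\binom{r}{2}}\binom{m}{r}_{Q}$ for the Gaussian binomial coefficient, the powers of $Q$ cancel and the identity becomes $\sum_{r=1}^{m}(Q;Q)_{r-1}\binom{m}{r}_{Q}=m$. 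I would then recognize this as the $x$-derivative at $x=1$ of the $q$-Newton expansion
\begin{align*}
\sum_{r=0}^{m}\binom{m}{r}_{Q}\prod_{i=0}^{r-1}(x-Q^{i})=x^{m}.
\end{align*}
Indeed, at $x=1$ every summand of $\tfrac{d}{dx}\prod_{i=0}^{r-1}(x-Q^{i})=\sum_{k=0}^{r-1}\prod_{i\neq k}(x-Q^{i})$ except the term $k=0$ still contains the vanishing factor $1-Q^{0}$, leaving precisely $(Q;Q)_{r-1}$, while the right-hand side differentiates to $m$. The $q$-Newton expansion itself I would prove by a short induction on $m$ using the Pascal rule $\binom{m}{r}_{Q}=\binom{m-1}{r-1}_{Q}+Q^{r}\binom{m-1}{r}_{Q}$. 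Handling $\beta=0$ by the continuity remark above then completes the proof.
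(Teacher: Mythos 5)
Your proof is correct, and it reaches the conclusion by a genuinely different endgame than the paper, although the two arguments share the same starting point. The paper also specializes Proposition \ref{prop:H-rewrite} to constants (equivalently, combines Lemmas \ref{lem:H-rewrite-1} and \ref{lem:H-rewrite-2}) to obtain
\begin{align*}
K_{m}=-\Bigl(1+\frac{\alpha}{\beta}\Bigr)m
+\frac{1}{\beta}\sum_{r=1}^{m}\frac{(-\delta)^{r-1}[r-1]!}{\prod_{p=1}^{r}(1+\beta\gamma[p-1])}\,
G_{r}\prod_{p=1}^{r}(\alpha+\beta q^{p-1}),
\end{align*}
which is exactly your display divided by $\beta$ (the constant printed in the paper, $-(1+\beta/\alpha)m$, is a typo for the above). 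From there the paper does \emph{not} split into cases: it computes $K_{m}-K_{m-1}$ from this closed form and exhibits an explicit factor $(\alpha+\beta)(\gamma+\delta)$ in the difference, so the lemma follows by induction from $K_{1}=0$. You instead treat the two factors of the hypothesis separately: $\alpha+\beta=0$ kills every term through the $p=1$ factor, while for $\gamma+\delta=0$ your factorization $\alpha+\beta q^{p-1}=(\alpha+\beta)(1+\beta\gamma[p-1])$ cancels all denominators and reduces the lemma to the parameter-free identity $\sum_{r=1}^{m}(Q;Q)_{r-1}\binom{m}{r}_{Q}=m$, which you prove by differentiating the $q$-Newton expansion at $x=1$; I checked the cancellation, the translation into Gaussian binomials, and the derivative argument, and all are sound. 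The trade-off: the paper's route is uniform (no case distinction, no density patch — though its closed form also carries $1/\alpha$ and $1/\beta$, so it tacitly makes the same genericity assumptions you make explicit), and it yields an exact formula for $K_{m}-K_{m-1}$; your route makes the only substantive input a clean, classical $q$-binomial identity with a self-contained proof, whereas the paper's formula for $K_{m}-K_{m-1}$ is asserted without detail and is the harder computation to verify directly. One small amendment to your continuity step: the rearrangements you invoke (Proposition \ref{prop:H-rewrite}, the substitution $Q=q^{-1}$) also require $q\neq0$, so the dense set should be taken as $\{\beta\neq0,\ q\neq0\}$; since $K_{m}$ is regular along $q=0$ as well (write $q^{-r(r-1)/2}e_{r}(1,q,\ldots,q^{m-1})=\prod_{p=0}^{r-1}[m-p]/[r]!$), your argument applies verbatim after this adjustment.
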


\begin{proof}
{}From Lemma \ref{lem:H-rewrite-1} and Lemma \ref{lem:H-rewrite-2} it holds that 
\begin{align*}
K_{m}&=-(1+\frac{\beta}{\alpha})m \\ 
&+\frac{1}{\beta}
\sum_{r=1}^{m}(-\delta)^{r-1}[r-1]!\,q^{-(m-1)r}e_{r}(1, q, \ldots , q^{m-1})
\prod_{p=1}^{r}\frac{\alpha+\beta q^{p-1}}{1+\beta\gamma[p-1]}.  
\end{align*}
Using this expression we find that 
\begin{align*}
& 
K_{m}-K_{m-1} \\ 
&=-\frac{(\alpha+\beta)(\gamma+\delta)}{1+\beta\gamma}
\sum_{r=1}^{m-1}(-\delta)^{r-1}[r]!\,q^{-(m-2)r}e_{r}(1, q, \ldots , q^{m-2})
\prod_{p=2}^{r}\frac{\alpha+\beta q^{p-1}}{1+\beta\gamma[p]}  
\end{align*}
for $m\ge 2$. 
This completes the proof because $K_{1}=0$. 
\end{proof}

Now we define the stochastic operator $\mathcal{H}(s, q)$ on $F(L_{+})$ by 
\begin{align*}
(\mathcal{H}(s, q)f)(x)=\sum_{i=1}^{M}\sum_{r=1}^{c_{i}}
\frac{s^{r-1}}{[r]}\prod_{p=0}^{r-1}\frac{[c_{i}-p]}{1+s[c_{i}-1-p]}
\left( 
f(x-\sum_{p=0}^{r-1}v_{c_{1}+\cdots +c_{i}-p})-f(x) 
\right), 
\end{align*}
where $(c_{1}, \ldots , c_{M})$ is the cluster coordinate of $x$. 
It determines the stochastic particle system on $\mathbb{Z}$ 
described as follows. 
In continuous time some particles may move from site $i$ to $i-1$ 
independently for each $i \in \mathbb{Z}$. 
The rate at which $r$ particles move to the left from a cluster with $c$ particles 
is given by 
\begin{align*}
\frac{s^{r-1}}{[r]}\prod_{p=0}^{r-1}\frac{[c-p]}{1+s[c-1-p]} \qquad (c \ge r \ge 1).  
\end{align*}
It is non-negative if, for example, $s\ge 0$ and $0<q<1$. 

As a consequence we have the following proposition. 

\begin{prop}
When $(\alpha+\beta)(\gamma+\delta)=0$, it holds that 
\begin{align*}
(H-k)|_{F(L_{+})}=\left\{
\begin{array}{ll}
\mathcal{H}(q^{-1}\alpha\delta, q^{-1}) & (\alpha+\beta=0), \\ 
\mathcal{H}(\beta\gamma, q) & (\gamma+\delta=0). 
\end{array}
\right. 
\end{align*}
\end{prop}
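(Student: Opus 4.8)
The plan is to compare the two operators cluster by cluster. For $f\in F(L_{+})$ and $x\in L_{+}$ with cluster coordinate $(c_{1},\dots,c_{M})$, formula \eqref{eq:Htilde-decompose} writes $(Hf)(x)$ as a sum over clusters, and the definition of $\mathcal{H}$ does the same; moreover in both expressions the only off-diagonal values that occur are the shifts $f\bigl(x-\sum_{p=0}^{r-1}v_{c_{1}+\cdots+c_{i}-p}\bigr)$ moving $r$ particles down out of the $i$-th cluster. Hence it suffices to fix a cluster of size $c:=c_{i}$ and check, for each $1\le r\le c$, that the coefficient of this shift agrees on the two sides, and then that the coefficients of $f(x)$ agree.

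First I would simplify the off-diagonal coefficient coming from $H_{J}$ in \eqref{eq:def-HJ}. Since $f$ is symmetric, $e_{r}(\widehat{X}_{j_{1}},q\widehat{X}_{j_{2}},\dots,q^{m-1}\widehat{X}_{j_{m}})$ acts on the cluster by multiplying the single shift by the scalar $e_{r}(1,q,\dots,q^{c-1})$. Using the Gaussian-binomial evaluation $e_{r}(1,q,\dots,q^{c-1})=q^{r(r-1)/2}[c]!/([r]![c-r]!)$, the coefficient collapses to the product form
\begin{align*}
A_{r}^{(c)}:=\frac{(-\beta\delta)^{r-1}}{[r]}\prod_{p=0}^{r-1}\frac{[c-p]}{1+\beta\gamma[c-1-p]},
\end{align*}
where $[n]=(1-q^{n})/(1-q)$.

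Next I would treat the two cases. When $\gamma+\delta=0$ we have $\delta=-\gamma$, so $-\beta\delta=\beta\gamma=s$ with $s=\beta\gamma$, and $A_{r}^{(c)}$ is literally the rate $\tfrac{s^{r-1}}{[r]}\prod_{p=0}^{r-1}\tfrac{[c-p]}{1+s[c-1-p]}$ appearing in $\mathcal{H}(\beta\gamma,q)$. When $\alpha+\beta=0$ we have $\beta=-\alpha$, hence $q=1-\alpha(\gamma+\delta)$, $-\beta\delta=\alpha\delta$ and $\beta\gamma=-\alpha\gamma$; set $s=q^{-1}\alpha\delta$ and write $[n]'=(1-q^{-n})/(1-q^{-1})=q^{1-n}[n]$ for the $q^{-1}$-integer used in $\mathcal{H}(s,q^{-1})$. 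The prefactors match after the base change, $s^{r-1}/[r]'=(\alpha\delta)^{r-1}/[r]$, and the key algebraic input is the identity
\begin{align*}
1-\alpha\gamma[n]=q^{n}+\alpha\delta[n]=q^{n}\bigl(1+\alpha\delta\,q^{-n}[n]\bigr),
\end{align*}
which follows from $(1-q)[n]=1-q^{n}$ together with $\alpha\gamma+\alpha\delta=1-q$. With $n=c-1-p$ this makes each factor match term by term, $\frac{[c-p]'}{1+s[c-1-p]'}=\frac{[c-p]}{1-\alpha\gamma[c-1-p]}$, because the powers $q^{1-c+p}$ produced in the numerator and the denominator cancel; thus $A_{r}^{(c)}$ equals the rate in $\mathcal{H}(q^{-1}\alpha\delta,q^{-1})$ as well.

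It remains to match the coefficient of $f(x)$. Subtracting $k=\sum_{i}c_{i}$, the diagonal coefficient of $(H-k)$ on the $i$-th cluster is $-c_{i}-\alpha\gamma\sum_{d=1}^{c_{i}-1}[d]/(1+\beta\gamma[d])$, while that of $\mathcal{H}$ is $-\sum_{r=1}^{c_{i}}A_{r}^{(c_{i})}$. These agree precisely because the preceding Lemma gives $K_{c_{i}}=0$, i.e.\ $\sum_{r=1}^{c_{i}}A_{r}^{(c_{i})}=c_{i}+\alpha\gamma\sum_{d=1}^{c_{i}-1}[d]/(1+\beta\gamma[d])$; no separate computation is needed. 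Summing over clusters then yields $(H-k)f(x)=\sum_{i}\sum_{r=1}^{c_{i}}A_{r}^{(c_{i})}\bigl(f(x-\sum_{p=0}^{r-1}v_{c_{1}+\cdots+c_{i}-p})-f(x)\bigr)=(\mathcal{H}f)(x)$. The one genuinely delicate step is the power-of-$q$ bookkeeping in the $\alpha+\beta=0$ case, which the identity above resolves cleanly.
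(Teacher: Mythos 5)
Your proof is correct and takes essentially the same route as the paper: the paper's one-line proof invokes exactly the Gaussian-binomial identity $q^{-r(r-1)/2}e_{r}(1,q,\ldots,q^{m-1})=\prod_{p=0}^{r-1}[m-p]\,/\,[r]!$ that you use to collapse the coefficient of each shift to the product form $A_{r}^{(c)}$, with the two-case parameter analysis and the diagonal matching via $K_{m}=0$ left implicit. Your write-up simply supplies those implicit details---most usefully the $q\mapsto q^{-1}$ base-change bookkeeping $[n]'=q^{1-n}[n]$ in the $\alpha+\beta=0$ case---and all of it checks out.
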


\begin{proof}
Use the equality 
\begin{align*}
q^{-r(r-1)/2}e_{r}(1, q, \ldots , q^{m-1})=\frac{\prod_{p=0}^{r-1}[m-p]}{[r]!},  
\end{align*} 
and we obtain the desired formula. 
\end{proof}

Moreover, using Proposition \ref{prop:Bethe-function}, we obtain 
eigenfunctions of $\mathcal{H}(s, q)$:  

\begin{prop}
Let $z=(z_{1}, \ldots , z_{k})$ be a tuple of distinct complex parameters, and set 
\begin{align}\label{eq:def-nu}
\nu:=\frac{s}{1-q+s}.
\end{align}
Then the function $\Psi_{z}$ on $L_{+}$ defined by 
\begin{align*}
\Psi_{z}:=\sum_{\sigma \in \mathfrak{S}_{k}}\prod_{1\le i<j \le k}
\frac{qz_{\sigma(i)}-z_{\sigma(j)}}{z_{\sigma(i)}-z_{\sigma(j)}}
\prod_{i=1}^{k}\left(\frac{1-\nu z_{\sigma(i)}}{1-z_{\sigma(j)}}\right)^{\epsilon_{i}}  
\end{align*}
satisfies 
\begin{align*}
\mathcal{H}(s, q)\Psi_{z}=(\nu-1)\sum_{i=1}^{k}\frac{z_{i}}{1-\nu z_{i}}\Psi_{z}.  
\end{align*}
\end{prop}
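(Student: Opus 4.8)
The plan is to transfer the eigenfunction statement from the Hamiltonian $H$ (via Proposition~\ref{prop:Bethe-function}) across the specialization and rescaling that produces $\mathcal{H}(s,q)$. Concretely, I would start from the previous proposition, which identifies $(H-k)|_{F(L_{+})}$ with $\mathcal{H}(q^{-1}\alpha\delta,q^{-1})$ when $\alpha+\beta=0$, and with $\mathcal{H}(\beta\gamma,q)$ when $\gamma+\delta=0$. Because these two cases cover all admissible parameters with $(\alpha+\beta)(\gamma+\delta)=0$, it suffices to run the argument in one of them, say $\gamma+\delta=0$, where $s=\beta\gamma$, and then the other case follows by the obvious substitution $q\mapsto q^{-1}$.

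First I would fix the correspondence between the parameters of Proposition~\ref{prop:Bethe-function} and those appearing in $\Psi_{z}$. With $\gamma+\delta=0$ write $\delta=-\gamma$, so $s=\beta\gamma$ and $q=1+\beta\gamma-\alpha\delta=1+s+\alpha\gamma$; the definition $\nu=s/(1-q+s)$ in \eqref{eq:def-nu} then reads $\nu=\beta\gamma/(-\alpha\gamma)=-\beta/\alpha$ (up to checking signs against $1-q+s=-\alpha\gamma$). The key step is to rewrite the Bethe factor in \eqref{eq:Bethe-function}: I would set $p_{i}=(1-\nu z_{i})/(1-z_{i})$, or a Möbius transform of that shape, so that the two-body scattering coefficient $1+(\alpha+\beta p_{\sigma(j)})(\gamma+\delta p_{\sigma(i)})/(p_{\sigma(j)}-p_{\sigma(i)})$ becomes exactly $(qz_{\sigma(i)}-z_{\sigma(j)})/(z_{\sigma(i)}-z_{\sigma(j)})$, and the monomial $\prod_{i}p_{\sigma(i)}^{-\epsilon_{i}}$ becomes $\prod_{i}((1-\nu z_{\sigma(i)})/(1-z_{\sigma(j)}))^{\epsilon_{i}}$ appearing in $\Psi_{z}$. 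Under this identification $\Phi_{p}|_{L_{+}}$ coincides with $\Psi_{z}$, so Proposition~\ref{prop:Bethe-function} gives $H\Phi_{p}=(\sum_{i}p_{i})\Phi_{p}$, hence $\mathcal{H}(s,q)\Psi_{z}=(\sum_{i}p_{i}-k)\Psi_{z}$.

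It then remains to compute the eigenvalue: I would show $\sum_{i=1}^{k}p_{i}-k=(\nu-1)\sum_{i=1}^{k}z_{i}/(1-\nu z_{i})$. With $p_{i}=(1-\nu z_{i})/(1-z_{i})$ one has $p_{i}-1=(1-\nu)z_{i}/(1-z_{i})$, which is of the right shape but with denominator $1-z_{i}$ rather than $1-\nu z_{i}$; this discrepancy signals that the correct Möbius substitution is the inverse one, $p_{i}=(1-z_{i})/(1-\nu z_{i})$ (so that $\widehat{X}_{i}$ acting as a backward shift matches the eigenvalue $z_{i}/(1-\nu z_{i})$ governing leftward motion), and I would pin down the exact form by demanding simultaneously that the scattering factor and the eigenvalue come out right. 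This bookkeeping of which variable is $p_{i}$ and which direction the shift goes is the main obstacle: the scattering coefficient, the power-function weight, and the additive eigenvalue must all be made consistent under a single transformation, and the signs introduced by $\gamma+\delta=0$ (where $\delta=-\gamma$) are easy to mishandle.

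Once the dictionary is fixed, the proof is essentially a substitution followed by the elementary computation of $\sum_{i}(p_{i}-1)$, so I would present it as: declare the substitution, verify the two-body factor by direct algebra, verify the weight, invoke Proposition~\ref{prop:Bethe-function}, and finally evaluate $\sum_{i}p_{i}-k$. The analogous case $\alpha+\beta=0$ is handled identically with $q\to q^{-1}$, $s=q^{-1}\alpha\delta$, and I would note that the same computation applies verbatim.
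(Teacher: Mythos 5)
Your proposal is correct and takes essentially the same route as the paper's proof: specialize Proposition \ref{prop:Bethe-function} at $\delta=-\gamma$ (so that $s=\beta\gamma$, $q=1+(\alpha+\beta)\gamma$, and $\nu=-\beta/\alpha$), substitute $p_{i}=(1-z_{i})/(1-\nu z_{i})$, check that the two-body factor becomes $(qz_{i}-z_{j})/(z_{i}-z_{j})$, and evaluate $\sum_{i}p_{i}-k=(\nu-1)\sum_{i}z_{i}/(1-\nu z_{i})$. Your self-correction on the direction of the M\"obius substitution lands exactly on the paper's choice $p_{i}=(1-z_{i})/(1+\beta z_{i}/\alpha)$, and your worry about the sign bookkeeping is resolved precisely by that choice.
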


\begin{proof}
We use Proposition \ref{prop:Bethe-function} in the case where $\delta=-\gamma$. 
Note that $q=1+\beta\gamma-\alpha\delta=1+(\alpha+\beta)\gamma$. 
Setting $p_{i}=(1-z_{i})/(1+\beta z_{i}/\alpha)$, we have 
\begin{align*}
1+\frac{(\alpha+\beta p_{j})(\gamma+\delta p_{i})}{p_{j}-p_{i}}=\frac{qz_{i}-z_{j}}{z_{i}-z_{j}}.    
\end{align*}
Set $s=\beta\gamma$. Then $\beta/\alpha$ is equal to $-\nu$. 
Thus we see that $\Psi_{z}$ is an eigenfunction of 
$\mathcal{H}(s, q)=(H-k)|_{F(L_{+})}$ with eigenvalue
\begin{align*}
\sum_{i=1}^{k}p_{i}-k=\sum_{i=1}^{k}\frac{1-z_{i}}{1-\nu z_{i}}-k=
(\nu-1)\sum_{i=1}^{k}\frac{z_{i}}{1-\nu z_{i}}.  
\end{align*} 
This completes the proof. 
\end{proof}

It should be noted that the function $\Psi_{z}$ is equal to 
the eigenfunction for the $(q, \mu, \nu)$-Boson process 
constructed by means of the coordinate Bethe ansatz \cite{P}.


\appendix

\section{}\label{sec:app}

Here we prove Lemma \ref{lem:H-rewrite-1} and Lemma \ref{lem:H-rewrite-2}. 
For that purpose we show the following equality. 

\begin{lem}\label{lem:Ims}
Let $m$ be a positive integer and $x, y, z_{1}, \ldots , z_{m}$ commutative indeterminates. 
For $1\le s \le m$, set 
\begin{align}\label{eq:Ims}
I_{m, s}(x, y)&:=
\sum_{r=0}^{m-s}\frac{[r+s-1]!\,((q-1)x-y)^{r}}{\prod_{a=1}^{r+s}(x+[a-1]y)}\\ 
&\quad{}\times  
\sum_{1\le b_{1}<\cdots <b_{r+s}\le m}
q^{\sum_{a=1}^{r+s}(b_{a}-m)}
e_{s}(z_{b_{1}}, qz_{b_{2}}, \ldots , q^{r+s-1}z_{b_{r+s}}),  
\nonumber 
\end{align}
where $e_{s}$ is the elementary symmetric polynomial of degree $s$. 
Then it holds that 
\begin{align}\label{eq:Ims-to-show}
I_{m, s}(x, y)=\frac{[s-1]!\,q^{-s(s-1)/2}}{\prod_{a=1}^{s-1}(x+[m-1-a]y)}\,
e_{s}(z_{1}, qz_{2}, \ldots , q^{m-1}z_{m}). 
\end{align} 
\end{lem}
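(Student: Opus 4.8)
The plan is to exploit the fact that both sides of \eqref{eq:Ims-to-show} are multilinear of degree $s$ in $z_1,\ldots,z_m$. Each $e_s$ is squarefree of degree $s$, and the subset sum on the left preserves this, so every monomial that occurs has the form $z_{i_1}\cdots z_{i_s}$ with $i_1<\cdots<i_s$. It therefore suffices to match, for every such index set, the coefficient of $z_{i_1}\cdots z_{i_s}$ on the two sides. On the right the computation is immediate: in $e_s(z_1,qz_2,\ldots,q^{m-1}z_m)$ the monomial $z_{i_1}\cdots z_{i_s}$ has coefficient $q^{\sum_t i_t-s}$, so the right-hand side contributes $[s-1]!\,q^{-s(s-1)/2}\big(\prod_{a=1}^{s-1}(x+[m-1-a]y)\big)^{-1}q^{\sum_t i_t-s}$.

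For the left side I would fix $r$ and a subset $B=\{b_1<\cdots<b_{r+s}\}\supseteq\{i_1,\ldots,i_s\}$, writing its remaining $r$ elements as $c_1<\cdots<c_r$. In $e_s(z_{b_1},qz_{b_2},\ldots,q^{r+s-1}z_{b_{r+s}})$ the monomial $z_{i_1}\cdots z_{i_s}$ appears with coefficient $q^{\sum_t(\mathrm{pos}_B(i_t)-1)}$, where $\mathrm{pos}_B(i_t)=t+\#\{l:c_l<i_t\}$ is the rank of $i_t$ inside $B$. Combining this with the weight $q^{\sum_a(b_a-m)}$ and using $\sum_a b_a=\sum_t i_t+\sum_l c_l$, the contribution of $B$ factors as
\begin{align*}
q^{\sum_t i_t-(r+s)m+s(s-1)/2}\cdot q^{\sum_l w(c_l)},\qquad w(c):=c+\#\{t:i_t>c\}.
\end{align*}

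The key step is to evaluate the resulting sum over the extra indices. As $c$ ranges over $\{1,\ldots,m\}\setminus\{i_1,\ldots,i_s\}$, the map $c\mapsto w(c)$ is strictly increasing (for $c<c'$ one gets $w(c')-w(c)\ge 1$), takes minimum value $s+1$ and maximum value $m$, and its domain and range both have $m-s$ elements; hence it is a bijection onto $\{s+1,\ldots,m\}$. Consequently $\sum_{c_1<\cdots<c_r}q^{\sum_l w(c_l)}=e_r(q^{s+1},\ldots,q^m)$, and the entire dependence on the chosen monomial collapses to the factor $q^{\sum_t i_t}$, which cancels the identical factor on the right. The lemma is thereby reduced to the single scalar identity
\begin{align*}
\sum_{r=0}^{m-s}\frac{[r+s-1]!\,((q-1)x-y)^r}{\prod_{a=1}^{r+s}(x+[a-1]y)}\,q^{-(r+s)m+s(s-1)/2}\,e_r(q^{s+1},\ldots,q^m)=\frac{[s-1]!\,q^{-s(s-1)/2-s}}{\prod_{a=1}^{s-1}(x+[m-1-a]y)},
\end{align*}
in which $z_1,\ldots,z_m$ no longer appear.

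It then remains to prove this $z$-free identity, and here I would substitute $e_r(q^{s+1},\ldots,q^m)=q^{rs+\binom{r+1}{2}}\binom{m-s}{r}_q$, turning the left-hand side into a terminating basic hypergeometric sum whose term ratio $t_{r+1}/t_r$ involves only $[r+s]$, $[m-s-r]$, the factor $x+[r+s]y$ and powers of $q$; this is summable by the $q$-binomial theorem (equivalently, $q$-Chu--Vandermonde), or one can verify it directly by induction on $m$ with base case $m=s$ using $[a]=[a-1]+q^{a-1}$ to reconcile the shifted products $\prod(x+[a-1]y)$. I expect the main obstacle to lie precisely in this last $q$-series bookkeeping, namely managing the accumulated powers of $q$ and the two differently indexed products $\prod_{a=1}^{r+s}(x+[a-1]y)$ and $\prod_{a=1}^{s-1}(x+[m-1-a]y)$; the preceding reduction via multilinearity and the bijection $c\mapsto w(c)$ is the conceptually decisive part, after which only a routine (if delicate) hypergeometric summation is left.
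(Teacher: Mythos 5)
Your reduction is sound, and for the combinatorial core of the lemma it is genuinely different from the paper's argument. Extracting the coefficient of $z_{i_1}\cdots z_{i_s}$ and summing over the complementary indices via the bijection $c\mapsto w(c)=c+\#\{t\,|\,i_t>c\}$ onto $\{s+1,\ldots,m\}$ is precisely a bijective proof of the identity the paper isolates as \eqref{eq:Ims-1},
\begin{align*}
\sum_{1\le b_{1}<\cdots <b_{r+s}\le m}q^{\sum_{a}b_{a}}\,
e_{s}(z_{b_{1}},qz_{b_{2}},\ldots,q^{r+s-1}z_{b_{r+s}})
=q^{s(s-1)/2}e_{r}(q^{s+1},\ldots,q^{m})\,e_{s}(qz_{1},q^{2}z_{2},\ldots,q^{m}z_{m}),
\end{align*}
which the paper instead proves by induction on $m$, peeling off the largest index $b_{r+s}$ and using the two-term recurrences for $e_s$ and $e_r$. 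Your route is arguably cleaner for this step; after it, both proofs face the same $z$-free sum (the paper's $J_{m,s}(x,y)$).

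Two caveats. First, the part you defer as routine bookkeeping is exactly where the paper's proof does its remaining work: it evaluates $J_{m,s}$ through the recurrence $J_{m,s}(x,y)=q^{s}J_{m-1,s}(x+y,qy)$, which exploits that $(q-1)x-y$ is invariant under $(x,y)\mapsto(x+y,qy)$ while $x+[a-1]y\mapsto x+[a]y$; note this rests on $[a]=1+q[a-1]$, not on the identity $[a]=[a-1]+q^{a-1}$ you quote, and the induction runs on $m$ with \emph{shifted} arguments, not with $x,y$ held fixed. Your alternative closing route does work: with $A=(1-q)x+y$ and $w=y/A$ one has $x+[a-1]y=(A-q^{a-1}y)/(1-q)$ and $((q-1)x-y)^{r}=(-A)^{r}$, and the sum becomes an explicit multiple of the terminating ${}_{2}\phi_{1}(q^{-(m-s)},q^{s};wq^{s};q,q)$, which $q$-Chu--Vandermonde evaluates in closed form. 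Second, when you carry this out you will find the answer is \emph{not} the right-hand side you wrote, but differs from it by the factor $x+[m-1]y$: the statement \eqref{eq:Ims-to-show} contains a typo, and the product should be $\prod_{a=0}^{s-1}(x+[m-1-a]y)$, as in the paper's own display \eqref{eq:Ims-3} and as required by the application in Lemma \ref{lem:H-rewrite-1} (test $m=s=1$: the definition gives $I_{1,1}=z_{1}/x$, while the stated right-hand side is $z_{1}$). Your coefficient-extraction and bijection steps are unaffected; only the target of the final scalar identity needs this correction.
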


\begin{proof}
First we prove 
\begin{align}\label{eq:Ims-1}
& 
\sum_{1\le b_{1}<\cdots <b_{r+s}\le m}q^{\sum_{a=1}^{r+s}b_{a}}
e_{s}(z_{b_{1}}, qz_{b_{2}}, \ldots , q^{r+s-1}z_{b_{r+s}}) \\ 
&=q^{s(s-1)/2}e_{r}(q^{s+1}, q^{s+2}, \ldots , q^{m})
e_{s}(qz_{1}, q^{2}z_{2}, \ldots , q^{m}z_{m}) 
\nonumber 
\end{align} 
for $m\ge 1, r\ge 0$ and $s\ge 0$ satisfying $r+s\le m$ by induction on $m$. 
If $m=1$ it is trivial. 
Suppose that $m>1$. 
Since the equality holds trivially when $r=0$ or $s=0$, 
we assume that $r>0$ and $s>0$. 
Denote the left hand side by $K_{m, r, s}$. 
Using 
\begin{align*}
e_{s}(z_{b_{1}}, qz_{b_{2}}, \ldots , q^{r+s-1}z_{b_{r+s}})
&=e_{s}(z_{b_{1}}, qz_{b_{2}}, \ldots , q^{r+s-2}z_{b_{r+s-1}}) \\ 
&+q^{r+s-1}z_{b_{r+s}}
e_{s-1}(z_{b_{1}}, qz_{b_{2}}, \ldots , q^{r+s-2}z_{b_{r+s-1}}),  
\end{align*}
we see that 
\begin{align*}
K_{m, r, s}=\sum_{b=r+s}^{m}q^{b}\left(K_{b-1, r-1, s}+q^{r+s-1}z_{b}K_{b-1, r, s-1}\right).  
\end{align*}
{}From the hypothesis of the induction it is equal to 
\begin{align*}
q^{s(s-1)/2}\sum_{b=r+s}^{m}q^{b}\bigl\{
&e_{r-1}(q^{s+1}, \ldots , q^{b-1})e_{s}(qz_{1}, \ldots , q^{b-1}z_{b-1}) \\ 
&+z_{b}\,e_{r}(q^{s+1}, \ldots , q^{b})
e_{s-1}(qz_{1}, \ldots , q^{b-1}z_{b-1})\bigr\}. 
\end{align*}
Use 
\begin{align*}
q^{b}z_{b}\,e_{s-1}(qz_{1}, \ldots , q^{b-1}z_{b-1})=
e_{s}(qz_{1}, \ldots , q^{b}z_{b})-e_{s}(qz_{1}, \ldots , q^{b-1}z_{b-1})
\end{align*}
and 
\begin{align}\label{eq:Ims-2}
q^{b}e_{r-1}(q^{s+1}, \ldots , q^{b-1})-e_{r}(q^{s+1}, \ldots , q^{b})=e_{r}(q^{s+1}, \ldots , q^{b-1}) 
\end{align}
successively. 
Then we get the right hand side of \eqref{eq:Ims-1}. 

Now let us prove \eqref{eq:Ims-to-show}. 
Using \eqref{eq:Ims-1} we see that 
\begin{align*}
I_{m, s}(x, y)=q^{s(s+1)/2-ms}e_{s}(z_{1}, qz_{2}, \ldots , q^{m-1}z_{m})J_{m, s}(x, y),  
\end{align*}
where $J_{m, s}(x, y)$ is given by 
\begin{align*}
J_{m, s}(x, y):=\sum_{r=0}^{m-s}
\frac{[r+s-1]!\,((q-1)x+y)^{r}q^{-mr}}{\prod_{a=1}^{r+s}(x+[a-1]y)}e_{r}(q^{s+1}, q^{s+2}, \ldots , q^{m}).  
\end{align*}
It suffices to show that 
\begin{align}\label{eq:Ims-3}
J_{m, s}(x, y)=q^{-s^{2}+ms}\frac{[s-1]!}{\prod_{a=0}^{s-1}(x+[m-1-a]y)} 
\end{align}
for $1\le s\le m$. 
{}From the equality \eqref{eq:Ims-2} with $b$ replaced by $m$ and  
$x+[n]y=x+y+q[n-1]y$ for $n\ge 1$, we find 
$J_{m, s}(x, y)=q^{s}J_{m-1, s}(x+y, qy)$ for $m>s$.   
Now the desired equality \eqref{eq:Ims-3} can be proved by induction on $m$. 
\end{proof}

\begin{proof}[Proof of Lemma \ref{lem:H-rewrite-1}]
We rewrite the right hand side. 
Expand the product  
\begin{align*}
\prod_{a=1}^{r}(\alpha+\beta q^{a-1}z_{b_{a}})-\alpha^{r}=\sum_{s=1}^{r}
\alpha^{r-s}\beta^{s-1}e_{s}(z_{b_{1}}, qz_{b_{2}}, \ldots , q^{r-1}z_{b_{r}}).    
\end{align*}
and exchange the order of the summation with respect to $r$ and $s$. 
Using  
\begin{align*}
(-\delta)^{r+s-1}\alpha^{r}\beta^{s-1}=(-\beta\delta)^{s-1}(q-1-\beta\gamma)^{r},   
\end{align*}
we see that the right hand side is equal to 
$\sum_{s=1}^{m-1}(-\beta\delta)^{s-1}I_{m, s}(1, \beta\gamma)$, where 
$I_{m, s}(x, y)$ is defined by \eqref{eq:Ims}.  
It is equal to the left hand side because of Lemma \ref{lem:Ims}.  
\end{proof}

\begin{proof}[Proof of Lemma \ref{lem:H-rewrite-2}]
Set 
\begin{align*}
K_{m}(x, y):=\sum_{r=1}^{m}\frac{[r-1]!\,((q-1)x-y)^{r-1}q^{-mr}}{\prod_{a=1}^{r}(x+[a-1]y)}
e_{r}(q, q^{2}, \ldots , q^{m}).  
\end{align*} 
Then the left hand side is equal to $\alpha K_{m}(1, \beta\gamma)/\beta$. 
Hence it suffices to show that 
\begin{align*}
K_{m}(x, y)=\sum_{a=0}^{m-1}\frac{1}{x+[a]y}.  
\end{align*}
In the same way as the proof of \eqref{eq:Ims-3}, 
we find the recurrence relation 
$K_{m}(x, y)=1/x+K_{m-1}(x+y, qy)$ for $m>1$.  
Now the equality above can be proved by induction on $m$. 
\end{proof}

\section*{Acknowledgments}

The research of the author is supported by 
JSPS KAKENHI Grant Number 26400106. 
The author is grateful to I. Corwin, L. Petrov, A. Povolotsky and T. Sasamoto for valuable discussions.

\end{document}